\def\psfancypar#1#2{\begingroup\def\par{\endgraf\endgroup\lineskiplimit=0pt}
               \setbox2=\hbox{\large\sc #2}
               \newdimen\tmpht \tmpht \ht2 \advance\tmpht by \baselineskip
               \font\hhuge=Times-Bold at \tmpht
               \setbox1=\hbox{{\hhuge #1}}
               \count7=\tmpht \count8=\ht1
               \divide\count8 by 1000 \divide\count7 by \count8 
               \tmpht=.001\tmpht\multiply\tmpht by \count7 
               \font\hhuge=Times-Bold at \tmpht
               \setbox1=\hbox{{\hhuge #1}}
               \noindent
                \hangindent1.05\wd1
               \hangafter=-2 {\hskip-\hangindent
               \lower1\ht1\hbox{\raise1.0\ht2\copy1}%
                \kern-0\wd1}\copy2\lineskiplimit=-1000pt}
\newcommand{\E}{\mbox{{\rm E}}}
\newcommand{\abf}{\mbox{${\bf a}$}}
\def\boxit#1{\vbox{\hrule\hbox{\vrule\kern3pt
        \vbox{\kern3pt#1\kern3pt}\kern3pt\vrule}\hrule}}
\def\reals{ { {\rm  I \kern-0.15em R }  } }
\def\complex{ {\,{{\rm C} \kern-0.50em \raise0.20ex {  |}}\, }}
\def\mubf{\hbox{\boldmath$\mu$\unboldmath}}
\def\Sigmabf{\hbox{$\bf \Sigma$}}
\def\Pibf{{\bf \Pi}}
\def\abf{{\bf a}}
\def\bbf{{\bf b}}
\def\fbf{{\bf f}}
\def\gbf{{\bf g}}
\def\hbf{{\bf h}}
\def\nbf{{\bf n}}
\def\vbf{{\bf v}}
\def\wbf{{\bf w}}
\def\xbf{{\bf x}}
\def\ybf{{\bf y}}
\def\xbf{{\bf x}}
\def\ybf{{\bf y}}
\def\Abf{{\bf A}}
\def\Bbf{{\bf B}}
\def\Cbf{{\bf C}}
\def\Dbf{{\bf D}}
\def\Fbf{{\bf F}}
\def\Hbf{{\bf H}}
\def\Ibf{{\bf I}}
\def\Rbf{{\bf R}}
\def\Ubf{{\bf U}}
\def\Vbf{{\bf V}}
\def\Wbf{{\bf W}}
\def\Ac{{\cal A}}
\def\Bc{{\cal B}}
\def\Cc{{\cal C}}
\def\Dc{{\cal D}}
\def\Gc{{\cal G}}
\def\Hc{{\cal H}}
\def\Kc{{\cal K}}
\def\Nc{{\cal N}}
\def\Pc{{\cal P}}
\def\Rc{{\cal R}}
\def\Wc{{\cal W}}
\def\be{\vskip .3cm \begin{equation}}
\def\ee{\end{equation} \vskip .4cm \noindent}
\newcommand{\R}{\mbox{$\hat {\bf R}_{N}$}}
\def\Rxx{\Rbf_{\ssstyle X\kern-.1em X}}
\let\ssstyle=\scriptscriptstyle
\def\Kout{\setbox1=\hbox{\Huge\bf K}\hbox to
1.05\wd1{\hspace{.05\wd1}
\def\Sout{\setbox1=\hbox{\Huge\bf S}\hbox to 1.05\wd1{\hspace{.05\wd1}

  \ifx\LabelFigloaded\MYundefined\relax
  \else
    \message{ !!! labelfig.tex ALREADY loaded !!!}
   \fi

  \def\LabelFigloaded{\relax}


  \chardef\LabelFigCatAt\the\catcode`\@
  \catcode`\@=11

 \let\LabelFigwlog@ld\wlog
 \def\wlog#1{\relax}

 \ifx\\\MYundefined@
    \let\\\relax
 \fi


  \def\ms@g{\immediate\write16}

 \def\N@wif{\csname newif\endcsname }
 \def\Temp@ {\N@wif\ifIN@}
 \ifx\INN@\MYundefined@
    \else \let\Temp@\relax
 \fi
 \Temp@

  \def\IN@{\expandafter\INN@\expandafter}
  \long\def\INN@0#1@#2@{\long\def\NI@##1#1##2##3\ENDNI@
    {\ifx\m@rker##2\IN@false\else\IN@true\fi}%
     \expandafter\NI@#2@@#1\m@rker\ENDNI@}
  \def\m@rker{\m@@rker}
 
  \newtoks\Initialtoks@  \newtoks\Terminaltoks@
  \def\SPLIT@{\expandafter\SPLITT@\expandafter}
  \def\SPLITT@0#1@#2@{\def\TTILPS@##1#1##2@{%
     \Initialtoks@{##1}\Terminaltoks@{##2}}\expandafter\TTILPS@#2@}

 \def\Shifted@@#1#2#3{\setbox0=\hbox{#3}%
   \raise -\dp0\vbox {\kern-#2%
       \hbox {\kern#1\unhbox0\kern-#1}%
           \kern#2}}

 \newcount\gridcount
 \newbox\auxGridbox@ \newbox\hGridbox@ \newbox\vGridbox@
 \newbox\Labelbox@ \newbox\auxLabelbox@
 \newbox\Coordinatebox@
 \newtoks\Labeltoks@
 \newdimen\Wdd@ \newdimen\Htt@
 \newdimen\Wddd@ \newdimen\Httt@
 
 \def\Wr@{\immediate\write16}

 \newdimen\GL@wd
 \GL@wd=.02pt
 \def\GridLineWidth#1{\GL@wd=#1}

 \def\gobble#1{}
 \def\EdgeErr@{\Wr@{}%
      \Wr@{\string\Edges\space argument
      1, 10, 100 or 1000 please\string!}%
      }

 \newcount\Edgect@

 \def\Sweepup#1\endSweepup{}

 \def\SetEdges@{%
    \edef\Zr@@s{\expandafter\gobble\number\Edgect@\empty}%
        \count255=0\Zr@@s\relax
        \ifnum\count255=\z@\else\EdgeErr@\show\tailtest\fi
        \count255=1\Zr@@s\relax
        \ifnum\count255=\Edgect@\relax\else\EdgeErr@\show\leadtest\fi
    \EdgGl@b\edef\Zr@s{\expandafter\gobble\Zr@@s\empty}
    \ifnum\Edgect@>\@ne\relax\EdgGl@b\let\L@Dc\empty
        \else\EdgGl@b\edef\L@Dc{\string.}\fi
    \ifnum\Edgect@>\@ne\relax
        \EdgGl@b\edef\Edgescale@##1{\divide##1 by \Edgect@}%
        \else\EdgGl@b\edef\Edgescale@##1{}\fi
    }

 \def\Edges#1{\Edgect@=#1\relax
     \let\EdgGl@b\global \SetEdges@}

 \Edges{1}

 \def\hhrule{\hrule height \GL@wd\vskip-.\GL@wd}

 \def\hRule@{%
   \advance\gridcount -2%
   \vfil\hhrule\vfil
   \llap{\smash{\raise -2.5pt
     \hbox{\L@Dc\number\gridcount\Zr@s\kern2pt}}}%
   \hhrule
   }

\def\vvrule{\vrule width \GL@wd \kern-\GL@wd}

 \def\vRule@{\advance\gridcount 2%
   \hfil\vvrule\hfil
   \setbox\auxGridbox@=\vbox to 0pt
      {\vskip \Htt@\vskip 2pt
        \hbox to 0pt{\hss\L@Dc\number\gridcount\Zr@s\hss}\vss}%
      \wd\auxGridbox@=0pt \box\auxGridbox@
   \vvrule
   }

 \def\PlaceGrid@@{\gridcount=10 
  \setbox\hGridbox@=\hbox{%
        \hbox{%
             \hskip-.4pt\vrule
             \vbox to \Htt@{%
               \offinterlineskip\parindent=\z@\relax
               \hbox to \Wdd@{\hfil}
               \hRule@\hRule@\hRule@\hRule@
               \vfil\hhrule\vfil}%
             \vrule\hskip-.4pt}
    }%
  \gridcount=0%
  \setbox\vGridbox@=\hbox{%
      \vbox{\offinterlineskip\parindent=0pt\hsize=0pt
         \vskip-.4pt\hrule%
         \hbox to \Wdd@{%
                 \vtop to \Htt@{\vfil}%
                 \vRule@\vRule@\vRule@\vRule@
                 \hfil\vvrule\hfil}%
         \hrule\vskip-.4pt}}%
  \wd\hGridbox@=0pt\ht\hGridbox@=0pt
  \wd\vGridbox@=0pt\ht\vGridbox@=0pt
  \hbox{\box\hGridbox@\box\vGridbox@}%
  }

 \def\LabelsGlobal{\def\LabGl@b{\global}}
 \def\LabelsLocal{\def\LabGl@b{}}
 \LabelsGlobal 

 \def\SetLabels#1\endSetLabels{%
   \LabGl@b\Labeltoks@={#1()\\}%
   }

 \LabGl@b\Labeltoks@={()\\}

 \def\ShowGrid{\LabGl@b\let\PlaceGrid@\PlaceGrid@@}
 \def\HideGrid{\LabGl@b\let\PlaceGrid@\relax}
 \def\Grids{\ShowGrid\LabGl@b\let\GridSwitch@\ShowGrid}
 \def\noGrids{\HideGrid\LabGl@b\let\GridSwitch@\HideGrid}

 \noGrids

 \def\bAdjust@@{%
     \setbox\auxLabelbox@=\hbox{\raise \dp\auxLabelbox@
            \box\auxLabelbox@}}
 \def\bAdjust@{\let\vAdjust@\bAdjust@@}

 \def\eAdjust@@{\dimen0=-.5\ht\auxLabelbox@
     \advance\dimen0 by .5\dp\auxLabelbox@
     \setbox\auxLabelbox@=
            \hbox{\raise\dimen0\box\auxLabelbox@}}
 \def\eAdjust@{\let\vAdjust@\eAdjust@@}

 \def\tAdjust@@{%
     \setbox\auxLabelbox@=\hbox{\raise-\ht\auxLabelbox@
            \box\auxLabelbox@}}
 \def\tAdjust@{\let\vAdjust@\tAdjust@@}

 \let\vAdjust@\relax

 \def\lAdjust@{\let\hAdjust@\rlap}
 \def\rAdjust@{\let\hAdjust@\llap}

 \let\hAdjust@\relax\let\vAdjust@\relax

 \def\FetchLabel@#1(#2)#3\\{%
     \IN@0#2@@\ifIN@
        \setbox0=\hbox{\ignorespaces#1#3\unskip}%
        \ifdim\wd0>0pt
           \ms@g{}%
           \ms@g{ !!! Bad label(s)? !!!}%
           \message{ #1(#2)#3}%
        \fi
        \def\LabelMole@##1\endFetchLabel@{%
            \IN@0()\\@##1@%
            \ifIN@\def\Temp@{\FetchLabel@##1\endFetchLabel@}%
            \else\def\Temp@{}%
            \fi
            \Temp@
           }%
     \else
       \ignorespaces#1\unskip
       \setbox\auxLabelbox@=%
         \hbox to 0pt{\hss\ignorespaces\hAdjust@
          {\ignorespaces#3\unskip}\hss}%
       \vAdjust@
       \let\hAdjust@\relax\let\vAdjust@\relax
       \AugmentLabelBox@@{#2}%
       \ht\Labelbox@=0pt\dp\Labelbox@=0pt
       \let\LabelMole@\FetchLabel@%
     \fi\LabelMole@}

 \newtoks\XYSep@ 
 \def\SetXYSeparator#1{%
     \IN@0#1@@\ifIN@\XYSep@{*}%
     \else
     \XYSep@{#1}%
     \fi
     }

 \SetXYSeparator*

 \def\AugmentLabelBox@@#1{%
     \IN@0\the\XYSep@ @#1@\ifIN@
       \SPLIT@0\the\XYSep@ @#1@%
       \setbox\Labelbox@=\hbox to 0pt{%
         \unhbox\Labelbox@
         \Shifted@@{\the\Initialtoks@\Wddd@}%
         {\the\Terminaltoks@\Httt@}%
         {\box\auxLabelbox@}}%
     \else
         \ms@g{}%
         \ms@g{ !!! Bad insertion point. !!!}%
         \message{ (#1\ this point was rejected.)}%
     \fi
    }

 \def\FetchOption@#1[#2]#3\endFetchOption@{%
    \def\temp{#1}
    \ifx\temp\empty
       \Edgect@=#2\relax
       \let\EdgGl@b\relax
       \SetEdges@
       \Cleaner@#3%
    \fi}

 \def\Cleaner@#1[@]{\Labeltoks@{#1}}
     
 \def\PlaceLabels@@{\mathsurround=0pt
     \def\Cr@{\\}%
     \let\L\lAdjust@\let\R\rAdjust@
     \let\B\bAdjust@\let\E\eAdjust@\let\T\tAdjust@
     \expandafter\FetchOption@\the\Labeltoks@[@]\endFetchOption@
     \Wddd@=\Wdd@ \Edgescale@\Wddd@ 
     \Httt@=\Htt@ \Edgescale@\Httt@
     \expandafter\FetchLabel@\the\Labeltoks@\endFetchLabel@
     \box\Labelbox@
     }%

 \let \PlaceLabels@\PlaceLabels@@

 \def\AffixLabels#1{\setbox\Coordinatebox@=\hbox{#1}%
      \Wdd@=\wd\Coordinatebox@ \Htt@=\ht\Coordinatebox@
      \advance\Htt@ \dp\Coordinatebox@
      \hbox{\copy\Coordinatebox@\kern-\Wdd@ 
           \Shifted@@{0pt}{-\dp\Coordinatebox@}%
           {\PlaceLabels@\PlaceGrid@}%
           \kern\Wdd@}%
      \GridSwitch@ 
      \LabGl@b\Labeltoks@{()\\}%
      }
 
   \let\wlog\LabelFigwlog@ld   
   \catcode`\@=\LabelFigCatAt  


 
                                By

              Raymond S\'eroul <A18645@FRCCSC21.BITNET>
                                and 
              Laurent Siebenmann <lcs@topo.math.u-psud.fr>
    
              VERSIONS: July 1991, Oct 1991, Jan 1992, July 1992

INTRODUCTION

      This labelling package is intended for TeX users who
rely on non-TeX sources for for their graphics inserts.  It
provides means for adding TeX labels to such inserts with a
minimum of fuss. 

       For most labels, TeX users have in the past found it
reasonably convenient to rely on non-TeX sources. Typical
occasions when an inescapable need for TeX labels seemed to
arise are

 (a) when the graphics program lacks certain exotic or complex
mathematical symbols

 (b) when the very highest typographical quality is wanted for the
labels

 (c) when labels included with the graphics fail to print, 
 and you cannot figure out why (cf. boxedeps.doc).  The labels
 provided by labelfig.tex are 100

       Since this package first appeared, many users, who in the
past scarcely dreamed of using TeX labels, have come to use
nothing but.  So it is now appropriate to add

Intoxication Warning:  TeX labels may be addictive and expensive. 

     If you have a fast preview you may disagree, and even find
that this package provides an agreeable paste-up environment; see
extra applications at end.

     Note to publishers: It is possible and convenient to ultimately
export the TeX labels produced by labelfig.tex to become an integral
part of the EPS file. This is often desired by a publisher who typically
uses an "upmarket" graphics or page layout program, with which the
staff is skilled in perfecting figures.  See Appendix I for
a recipe.

     The authors are grateful to Patrick Ion of Math Reviews for
helpful comments and encouragement.

BASIC INSTRUCTIONS

    After reading in the macro file using

preview or proof your figure with a coordinate grid printed on
top, by typing the following:

    \ShowGrid  
    \AffixLabels{<the graphics insertion>}

Here <the graphics insertion> is what you would type to insert
the graphics object alone without the grid.  This must provide
for the space around it. For example <the graphics insertion>
might well be \BoxedEPSF{MyFigure scaled 700} using the
boxedeps.tex macro package (from same source); this provides a
TeX box containing the encapsulated PostScript insert specified by
the file MyFigure. \AffixLabels{...} provides the grid (supposing
\ShowGrid is present) and later, once you have specified labels
using the grid, it will "tack on" the labels.

     The grid is a sort of (usually elongated) checkerboard of
ten rows and ten columns and its (internal) partitions are by
default numbered  .1, ... ,.9  both horizontally (X-coordinate
running left to right) and vertically (Y-coordinate running bottom
to top).  Thus the points enclosed by the grid correspond to the
points of the unit square in the cartesian "X-Y" plane, the lower
left corner corresponding to the origin (0,0).  By extrapolation,
the full page corresponds to a larger rectangle in the plane.

     These coordinates serve to position labels as follows.
Before the \AffixLabels{...} command type label specifications:

  \SetLabels
   (<X-coordinate>*<Y-coordinate>) <first label> \\
   .
   .
   .
   (<X-coordinate>*<Y-coordinate>)  <last label> \\
  \endSetLabels

Each row specifies one label and is terminated by \\.  In each
row, the position indicator comes first; it is written as a
standard cartesian point except that the X- and Y- coordinates
are separated by * rather than a comma because TeX allows a
comma as decimal point. There are no dimension units to specify
as the unit is the grid itself.

     By default, this cartesian point specifies where the middle
of the baseline of the label will be located.  However if you precede
the point by \L [or \R] the left [or right] edge of the baseline will
be located there. Similarly you may also precede the point by \T, \E,
or \B to vertically align the top equator or bottom of the label box
at the specified point.  This gives nine standard positions of
the label with respect to the insertion point --- corresponding to
the eight principle points of the compas and the center

                     \L\T     \T      \R\T

                     \L\E     \E      \R\E

                     \L\B     \B      \R\B

But this neglects the default "baseline" level of TeX,
giving potentially three more positions

                     \L    <no tag>   \R

For text, the baseline level is often the preferred. Its relation to
the others is variable. It will often coincide with the bottom level,
as happens for "X".  But it is often distinct, as for "g", in which
case you have in all 12 distinct positions rather than 9.

     It is convenient to think of this specification of label
position as attaching the label by a thumb-tack to the coordinate
grid. There are up to twelve positions of the thumb-tack on the
label, while the position of the thumb-tack on the coordinate grid is
arbitrary.  Normally, one choses the position of the thumb-tack on
the label to be the one that is the closest to the item being
labeled.  There are good reasons for this "rule of thumb":

   (a)  It facilitates correct positioning at first try.

   (b)  If the scale of the figure must be altered after labels
have been affixed, the labels have a good chance of remaining well
positioned.

   (c)  The visible grid need not extend beyond the "bounding box"
for the figure, because the best preferred position is always
(at least almost) within the bounding box .

The second reason is particularly important. Indeed it often
happens that scale has to be altered after labelling begins, in
order to either provide space for the labels, or to adjust
proportions between the labels and the figure.  (The size of labels
is unaffected by scaling.)

     Here is an artificial but self-contained test which uses
TeX rules to make a graphics object.

TEST

    Do not skip this!



 \def\FrameIt#1{\hbox{\vrule$\vcenter {\hrule\kern3pt%
             \hbox {\kern3pt #1\kern3pt}%
               \kern3pt\hrule}$\relax\vrule}}

 \def\Caption#1#2{\FrameIt{%
       \vtop {\hsize=#1\relax \parindent=0pt
         \leftskip=0pt \rightskip=0pt plus15pt
         \parfillskip=0pt
         \lineskip=1pt\baselineskip=0pt
         #2}}}

 \def\FirstQuadrant{\hbox to 100pt{\vrule\vbox to 100pt{%
        \hbox to 100pt{\hfil}\vfil\hrule}\hss}}


  \SetLabels
    \R(.5*.2) $\zeta\,\cdot$\\
    (.9*-.10) $\xi$\\
    \R(-.03*.9) $\eta$\\
    \T(.5*.9) \Caption{70pt}{%
          \it The norm of
          $g(\xi+i\eta)$ is indicated on
          contours of this invisible surface.}\\
  \endSetLabels

  \AffixLabels{\FirstQuadrant}

  \end

  Note that the coordinates to use for labels are indicated on the
edges of the grid (when visible) corresponding to the conventional
x- and y- axes of the Cartesian plane. By default the grid is
1-by-1. However, by the command \Edges{100}, you can change this
to 100-by-100 and many users find this alternative most
convenient. Place the command \Edges{...} in your style file (or
header) since its effect is is global. Other possible edge values
are 10 and 1000.

  If you use the command \Edges{...} at all, do so with care.  For
if you accidentally delete an \Edges{...} command your labels will
abruptly be badly misplaced and may logically but mysteriously
generate "dimension too big" errors under TeX and "off page" errors
under your driver.  

  You can dictate the edgescale for an individual figure by giving
the scale in brackets immediately after \SetLabels.  Thus, to
import into an article using say \Edge{100} a figure labelled using
another edgescale, say the original 1-by-1 default, you can use
\SetLabels[1]...\endSetLabels.


GETTING IT DOWN PAT

     Complicated labeling deserves the same respect as
complicated mathematics.  Do not expect it to come out perfect the
first time!  What is needed in either case is a mechanism to
repeatedly typeset troublesome pieces.

     One mechanism is always available.  One does complicated
labelling in a separate "test" file involving just the figure being
labelled;  a texpert will know how to \dump TeX's current state as
a temporary format that restarts rapidly at each retry.  Usually,
one then pastes the completed labelled figure back into the main
TeX file, but, of course, one can also \input it as an auxiliary
file.

     If you do not have a TeXpert at handy, here is a first
approximation to an efficient setup. By deletions reduce a copy
of your article to just a few lines before and after the figure.
Now label the figure, and finally, copy and paste the labelled
figure to the original article. Then copy the next figure to label
into this testbed and repeat. The TeXpert can improve the  speed
at which TeX starts up, by compiling a format specifically for
your article; just one caution: best NOT include in the format
ephemeral details of setup like \Set<mydriver>ArtSpecials (from
boxedeps.tex because this reads  figure dimensions which you may
change during your work session.

     An improved mechanism to repeatedly typeset troublesome
pieces is now available on the Macintosh; it is called LinoTeX;
see the same ftp sources.  It could be set up on many types
of computer.

     Before using labelfig.tex to attach labels to a graphics
object inserted using boxedeps.tex or BoxedArt.tex, make it a
firm rule to carefully adjust the bounding box using the trimming
commands of these packages, and also at least tentatively scale
and position the object. Beware of changing the grid inadvertently
after the labels have been positioned.  For example, correcting
the bounding box of a PostScript graphics object can foul up the
labels by changing the coordinate grid to which the labels are
attached. This is particularly true for the trimming  commands of
boxedeps.tex and BoxedArt.tex. However, as noted already, change
of scale is much less disruptive, and modest adjustments should be
well tolerated.

     Sometimes the labels protrude so far from the bounding box
of a figure that the figure has to be repositioned.  Best do this
by ad hoc spacing, say using \hglue and \vglue; altering the
bounding box would create a vicious circle.

     Remember that you are responsible for preventing labels
from overlapping. You are responsible for all label typography
including size and style. A label is really just about anything
that can be put in a TeX box. Note that spaces at the beginning
and end of labels will normally be suppressed; if you really want
them you must protect them with TeX braces.

     This package temporarily sets the \mathsurround parameter
of TeX to zero  while the labels are being affixed. This is done
because nonzero \mathsurround space would influence the position
of left and right aligned labels; then, when a texpert or printer
modifies mathsurround, diagram labeling might be disastrously
altered. There is a small price to pay involving labels that are
formatted as caption boxes including mathematics: you  may want or
need to specify an explicit mathsurround space within the caption
box; it will not influence anything outside.

     Those hostile to the use of * as separator between
the X and Y coordinates of label insertion points, are free to
impose another using \SetXYSeparator{<the new separator>}.  
Americans may prefer "," to "*" since they never use a 
comma as a decimal point; on the other hand, * may be more visible.

APPENDIX (I)  MERGING labelfig.tex LABELS INTO AN EPSF GRAPHICS OBJECT.

     As promised in the introduction, here is a recipe useful for
publishers. It works at least on Macintosh and at least for vectorized
graphics and Adobe type1 fonts.  (There is surely a similar recipe for
PCs under MSWindows.)

 (a)  Use boxedeps.tex utility to integrate the figure given by the eps
file, "x.eps" say, with a visible frame around it.  See
\ShowDisplacementBoxes command in boxedeps.tex.  To get precise results
automatically it is important to use the \Trim... commands of
boxedeps.tex making the "DisplacementBox" neatly fit the figure.

 (b)  Use the TeX printer driver and LaserWriter (versions >= 8.1.1) to
export to an EPSF the DVI page containing the integrated, labelled
figure. You now have an EPS file  "xx.eps"  that contains too much, and at
the wrong scale, and at wrong position.

 (c)  Convert the EPSF to an Adode Illustrator format EPSF using
the shareware utility called epsConvert by Sam Weiss
1993-- (currently $25).

 (d)  In Illustrator (or a compatible program), group the labels and the
"DisplacementBox"; copy them to the clipboard and paste them into "x.ps".
This step requires that all the label fonts be "visible to the Macintosh.

 (e)  Translate and scale the pasted group consisting of the labels plus
the "DisplacementBox" so as to make the "DisplacementBox" the bounding
box of (labelless) figure represented by "x.eps".  At this point the
labels will be correctly placed on the figure "x.eps".

 (f)  Ungroup and delete the "DisplacementBox".  The result is the
desired single EPS file, "x+.eps" say, It contains the original figure
plus its labels.  

     Using grouping and ungrouping appropriately in "x+.eps", a
publisher's staff can very efficiently improve label positions etc.

APPENDIX II)  SOME EXOTIC APPLICATIONS

     The grid of labelfig.tex is analogous to a light-table in
classical page makeup with wax or latex glue.  In principle, you
can use it to compose any page from its indivisible parts.  This
even has some of the artisanal charm of classical paste-up
provided you have a fast screen preview to make the process
"interactive".

     In practice labelfig.tex is a tool for nonstandard jobs.
Here are a few going beyond the labelling already discussed.

(I)  GRAPHICS INTEGRATION.

     This is accomplished by treating the imported graphics
objects as labels.  The underlying graphics object is then
typically an empty  \vbox to <dimension>{\vfill} in a TeX
\midinsert...\endinsert construction.  A label line
might be of the form

   (.1*.1) \special{... MyFigure ...}\\

The exact form of the special command varies from driver to
driver.  However, in the case of encapsulated PostScript graphics
(EPSF norm), by relying on boxedeps.tex, one can have the
following standard syntax (independant of driver  (see
boxedeps.doc for details.
  
  (.1*.1) \BoxedEPSF{MyFigure scaled <scale in mils>}\\

This may be slow since it requires TeX to read the PostScript
file to read bounding box using many complex macros.  So you
may want to try

  (.1*.1) \EPSFSpecial{MyFigure}{<scale in mils>}\\

which is fast and driver independant, but it squashes the
bounding box, normally to its lower left corner.

     Similarly for graphics of the Macintosh PICT norm ---
using BoxedArt.tex (same sources) in place of boxedeps.tex.

     This approach to integration is to be recommended when
one is assembling a composite graphics object.

 (II)  COMMUTATIVE DIAGRAM ENHANCEMENT

     Commutative diagrams or arrays of mathematical objects
connected by arrows of various sorts are common in mathematics.
The mathematical objects require the use of TeX.  Recently TeX
acquired a good collection of arrows of all slopes --- that of
LamSTeX --- plus pwerful macros to build the diagrams.

     However, even the LamSTeX collection is often
inadequate; it lacks for example double shafted arrows, dotted
arrows and curved arrows. Fortunately it is possible to produce
such arrows on an individual basis using sophisticated graphics
programs such as Illustrator and AldusFreehand (both serving
the EPSF norm) or using Metafont (with its public domain norm).
Since the creation of each new arrow is a work of love, you
probably want to limit the number of arrows by using LamSTeX
for most arrows. The 40K commutative diagram module of LamSTeX
has been adapted to work with AmSTeX and a copy may be posted
with LabelFig and related files. Unfortunately no one has yet
offered a version that works with Plain TeX or LaTeX.

       Suffice it here to say that when the exotic arrow has
been somehow imported into TeX, labelfig.tex treats it as a
label that one affixes to the commutative diagram.  Two other
steps will be treated in separate notes, namely the matter of
extracting the dimension specifications for the arrow and the
construction of the arrow --- for these steps are far from
unique and often depend intimately on your computer environment. 
Notes for the Macintosh-Textures-Illustrator combination are
found in the file ExoticArrows.doc.

 (III) NESTING 

Ingenuity pays off in exploiting labelfig.tex. One can
mix graphics and typography quite freely.  labelfig.tex is good
for freeform or overlapping arrangements, while boxedeps.tex (or
BoxedArt.tex) is best for regimented non-overlapping
arrangements --- and the two can be combined.

     The default behavior of labelfig.tex is not ideal 
for nesting objects, because to prevent trouble for beginners
the register for labels is globally cleared when \AffixLabels
concludes.  But there are switches available

      \LabelsGlobal      \LabelsLocal

which change this.  To understand this, extend the above test 
by something like:


 \LabelsLocal

 \SetLabels
    (.5*.5) AAA\\
 \endSetLabels

 {
 \SetLabels
    (.5*.5) ZZZ\\
 \endSetLabels
   \AffixLabels{\FirstQuadrant}
 }

   \AffixLabels{\FirstQuadrant}


     There are however potential pitfalls.  Neither
labelfig.tex nor boxedeps.tex has been tested under extreme
conditions. Problems may occur if their procedures are
indiscriminately nested. For boxedeps.tex (not labelfig.tex)
there is a precise cause for worry, namely many of its
variables are "global", which means that TeX braces will not
provide the protection one might expect.

COMMAND SUMMARY FOR labelfig.tex

  Here [...] means optional (one or zero)
       [...]* means any number of such constructs

  \SetLabels
    [[<P>](<X><Sep><Y>) <label> \\]*
  \endSetLabels
  \ShowGrid  
  \AffixLabels{<the figure>}

   --- <P> is tack position, one of eleven or empty
              order irrelevant

                   \L\T      \T      \R\T

                   \L\E      \E      \R\E

                     \L               \R

                   \L\B      \B      \R\B

   --- (<X><Sep><Y>) insertion point;
  <Sep> is separator, = * by default;
  \SetXYSeparator{<Sep>} changes it.
   <X> and <Y> are real numbers

  --- <label> a label to attach 

  --- <the figure> the figure to label 

  \GlobalLabels (default)     
  \LocalLabels  setting for nested constructs.

 \Grids makes ALL grids appear; \HideGrid then makes just next disappear.
 \noGrids returns to default.  The commands are always global.

 \GridLineWidth{<dimension>} adjusts width of grid lines. Default is very
small, to give "hairline" effect. If your grid lines are missing try
setting \GridLineWidth{1pt}.

 \Edges#1 globally changes the edge size of all grids to the numerical 
value #1, which must be 1, 10, 100, or 1000.  The default is 1.

VERSION HISTORY.
 --- Jan 1993: \Edges#1 and [??] option after \SetLabels
 --- July 1992: \Grids, \noGrids, \HideGrid;
       Gridlines become hairlines; \GridLineWidth{<dimension>}.
 --- Oct 1991, Jan 1992: \SetXYSeparator{<Sep>},  \LabelsGlobal,
       \LabelsLocal.
 --- July 1991: first release

Address for bugs and other feedback:

        Raymond S\'eroul
        IREM and Lab. de Typographie Informatise
        Univ. Rene Descartes
        Strasbourg

    Tel 33-88-41-63-45
    Email:  A18645@FRCCSC21.BITNET

        Laurent Siebenmann
        Mathematique, Bat. 425,
        Univ de Paris-Sud,
        91405-Orsay,
        France

    Tel 33-1-6941-7949; 
    Email: lcs@topo.math.u-psud.fr

\def\scalefig#1{\epsfxsize #1\textwidth}

\newcommand {\Ebb}{{\mathbb{E}}}

\newtheorem{theorem}{Theorem}

\newtheorem{lemma}{Lemma}

\newtheorem{corollary}{Corollary}
\newtheorem{example}{Example}

\newtheorem{proposition}{Proposition}

\setcounter{footnote}{1}

\algnewcommand\algorithmicinput{\textbf{Step}}
\algnewcommand\Step{\item[\algorithmicinput]}


\title{\huge{A High-Diversity Transceiver Design for MISO Broadcast Channels}}

\author{
 Junyeong Seo, {\em Student~Member, IEEE}, Youngchul
Sung$^\dagger$\thanks{$^\dagger$Corresponding author}, {\em
Senior~Member, IEEE}, \\
\thanks{J. Seo and Y. Sung are with Dept. of Electrical Engineering,  KAIST, Daejeon 305-701, South
Korea, and H. Jafarkhani is with Center for Pervasive Communications \& Computing, UC Irvine, CA, USA.
E-mail: jyseo@kaist.ac.kr, ysung@ee.kaist.ac.kr, and hamidj@uci.edu.
This work was supported in part by Basic Science Research Program through the National Research Foundation of Korea (NRF) funded by the Ministry of Education (2013R1A1A2A10060852) and  supported in part by the NSF Award CCF-1526780. This work is from a part of \cite{Seo18Thesis}.}
 and Hamid Jafarkhani, {\em Fellow, IEEE}
}

\markboth{\protect\footnotesize arXiv preprint, version 3, \today}{Seo,  Sung, and Jafarkhani}

\begin{document}

\maketitle

\begin{abstract}
In this paper, the outage behavior and diversity order of the mixture transceiver architecture for multiple-input single-output broadcast channels are analyzed. The mixture scheme groups users with closely-aligned channels and applies superposition coding and successive interference cancellation decoding to each group composed of users with closely-aligned channels, while applying zero-forcing beamforming across semi-orthogonal user groups. In order to enable such analysis, closed-form lower bounds on the achievable rates of a general multiple-input single-output broadcast channel with superposition coding and successive interference cancellation are newly derived. By employing channel-adaptive user grouping and proper power allocation, which ensures that the channel subspaces of user groups have angle larger than a certain threshold, it is shown that the mixture transceiver architecture achieves full diversity order in multiple-input single-output broadcast channels and opportunistically increases the multiplexing gain while achieving full diversity order. Furthermore, the achieved full diversity order is the same as that of the single-user maximum ratio transmit beamforming. Hence, the mixture scheme  can provide reliable communication under channel fading for ultra-reliable low latency communication.  Numerical results validate our analysis and show the outage superiority of the mixture scheme over conventional transceiver designs for multiple-input single-output  broadcast channels.
\end{abstract}

\begin{keywords}
Multiple-input single-output broadcast channels, outage probability, diversity order,  successive interference cancellation, user grouping, mixture reception
\end{keywords}

\section{Introduction}

{\em The multiple-input single-output (MISO)  broadcast channel (BC) model} is an important channel model which captures modern cellular downlink communication in which a base station (BS) equipped  with multiple transmit antennas simultaneously serves multiple receivers each equipped with a single receive antenna  by using the spatial domain. Due to its importance it has been investigated extensively  for more than a decade and major current  wireless communication standards  support MISO BC downlink communication\cite{Weingartenetal06IT,Sharif&Hassibi:05IT,Yoo&Goldsmith,LTEMUMIMOphy}.
It is known that  the capacity region of a MISO BC can be achieved by dirty paper coding (DPC) \cite{Weingartenetal06IT}.  However, because of the unavailability of practical dirty paper codes,  simple linear downlink beamforming such as zero-forcing (ZF) beamforming is widely considered and used in practice\cite{Spencer&Swindlehurst&Haardt:04IT,Yoo&Goldsmith}. Although such simple linear beamforming is not a capacity-achieving scheme, it can yield good performance when it is combined with multi-user diversity and user scheduling\cite{Sharif&Hassibi:05IT,Yoo&Goldsmith,LeeSung18COM,LeeSungSeo,LeeSungKountouris}.
 That is, when the number of users in the cell is sufficiently large as compared to the number $N$ of transmit antennas, the BS
can select $N$ users with nearly orthogonal channel vectors so
that linear ZF downlink
beamforming is sufficient.
However,  such orthogonality-based user scheduling for linear downlink beamforming may not be appropriate in certain cases. One example is the case in which the number of transmit antennas is large under rich scattering environments since it is difficult to simultaneously select multiple users with roughly orthogonal channels  in this case\cite{Huh:12IT,LeeSungSeo,LeeSungKountouris}. Thus, for
a MISO BC with a large number of transmit antennas  it was proposed
that the BS selects the users for simultaneous service arbitrarily and applies linear
ZF beamforming\cite{Huh:12IT}.
 Another emerging important example is  {\em ultra-reliable low-latency  communication (URLLC)} for fast machine-type communication in 5G. 
  In the case of URLLC,
  such orthogonality-based user scheduling
induces extra delay in communication since the users requiring immediate data transmission may not have channel vectors nearly orthogonal to each other or to other on-going overlapping data users under spatial multiplexing.  Hence, it is  preferred that  the BS immediately schedules the users requiring low-latency  data transmission regardless of their channel vectors' mutual orthogonality. In both examples, the channel vectors of the scheduled users are not guaranteed to be nearly orthogonal and the performance of linear ZF beamforming can be severely degraded since
 the
channel vectors of some of the scheduled users can be closely
aligned and the channel alignment causes  poor conditioning of the channel matrix for ZF inversion.

Recently,  inspired by the  usefulness of superposition coding and successive interference cancellation (SIC) decoding in non-orthogonal multiple access  (NOMA) \cite{Saito:13VTC,mao2018rate}, a mixture (or hybrid) transceiver architecture was considered for MISO BCs  to
overcome  the drawback of the fully linear ZF downlink
beamforming based on {\em  user grouping} and {\em mixture of linear and non-linear reception}\cite{Seo&Sung:17SPAWC,Chen&Ding&Dai:16Access}.
The basic idea of the {}{mixture} transceiver architecture is as follows.
Under the assumption of independent and identically distributed (i.i.d.) realization of $K$ channel vectors
in  $K$-user MISO downlink,   if the channel
vectors of some users are closely aligned, the performance of ZF beamforming is severely degraded. However, if we group the closely-aligned users
and apply  superposition coding and non-linear SIC decoding
for each closely-aligned user group while applying ZF beamforming
across roughly-orthogonal user-groups, the performance degradation
by the full ZF beamforming can be alleviated. Preliminary study on such user grouping and  mixture transreception was performed on the two-user grouping case, where  intra-group rate analysis is rather simple\cite{Seo&Sung:17SPAWC,Chen&Ding&Dai:16Access}.
In \cite{Seo&Sung:17SPAWC}, Pareto-optimal beam design is considered for the two-user grouping case, the beam vectors and corresponding rates are numerically obtained, and  the performance of the mixture scheme is compared with the full ZF beamforming numerically.
In \cite{Chen&Ding&Dai:16Access}, under the assumption of two users in each group, closed-form beam vectors are obtained  to minimize the transmit power under a  signal-to-interference-plus-noise ratio (SINR) constraint for each user based on quasi-degradation, and it was shown that such a mixture architecture based on two-user grouping increases the diversity order by one as compared to the conventional ZF downlink beamforming. Although such two-user grouping for the mixture transceiver architecture is  tractable, it has limitation in diversity order improvement. (The related idea of hierarchical coding and user grouping  was  discussed  in the dual scenario of multiple access channel in \cite{Hamid08COM}, {}{and the idea of user grouping and inter-group zero forcing was also considered in \cite{Adhikary13IT} using the intra-group processing of a classical spatial multiplexing from a capacity perspective.})

In this paper, we fully generalize the mixture transceiver architecture for general MISO BCs. The contributions of the paper are summarized as follows:

\noindent $\bullet$ {In order to enable analysis of
 the outage probability and diversity order of the  mixture transceiver architecture,  we derive a new  lower bound on the achievable rate of each user in closed form in terms of each user's channel norm for a  MISO BC with superposition coding and  SIC decoding with an arbitrary number of users.}

\noindent $\bullet$ {We propose a channel-adaptive user grouping method which ensures a condition for the channel subspace angle property for the constructed user groups and a  power allocation method necessary for achievability of full diversity order.}

\noindent $\bullet$  {Combining the newly derived  achievable rate result and the property of the proposed adaptive  user grouping method, we derive the diversity order of the mixture transceiver architecture, and show that  {\em the mixture transceiver architecture achieves  full diversity order in MISO BCs, which is the same as that of the single-user maximal ratio transmit (MRT) beamforming}, and furthermore it  opportunistically increases multiplexing gain.}

\noindent $\bullet$ {We further investigate the related issues such as diversity-and-multiplexing trade-off associated with the mixture scheme, impact of imperfect channel state information (CSI), etc.}

\textit{Notations:}  Vectors and matrices are written in boldface
with matrices in capitals. All vectors are column vectors. For a
matrix $\Abf$, $\Abf^*$, $\Abf^H$, $\Abf^T$ and $\mbox{Tr}(\Abf)$ indicate the
complex conjugate, conjugate transpose,   transpose and trace of $\Abf$,
respectively, and  $\Cc(\Abf)$ and $\Cc^\perp(\Abf)$ denotes the
linear subspace spanned by the columns of $\Abf$ and its orthogonal
complement, respectively.
 $\Pibf_\Abf$ and
$\Pibf_\Abf^\bot$ are the projection matrices to $\Cc(\Abf)$ and
$\Cc^\perp(\Abf)$,
 respectively.
 $[\abf_1,\cdots,\abf_n]$ denotes the matrix composed of column vectors $\abf_1,\cdots,\abf_n$.
 $||\abf||$ represents the 2-norm of vector $\abf$. $\Ibf_n$ denotes the identity matrix of size $n$ (the subscript is omitted when unnecessary).
$\xbf\sim\Cc\Nc(\mubf,\Sigmabf)$ means that random vector $\xbf$ is
circularly-symmetric complex Gaussian distributed with mean vector
$\mubf$ and covariance matrix $\Sigmabf$.

\section{The Channel Model and Preliminaries}

\subsection{The Channel Model}  \label{subsec:ChannelModel}

In this paper, we consider a Gaussian MISO BC composed of a transmitter with $N$ transmit antennas and $K$ single-antenna users (i.e., receivers), where the number of users is less than or equal to the number of transmit antennas, i.e., $K \le N$. The received signal $y_k$ at the $k$-th user is given by
\begin{equation}  \label{eq:channelModel}
y_k = \hbf_k^H \xbf + n_k, ~~~k=1,2,\cdots, K,
\end{equation}
where $\xbf$ is the $N\times 1$ transmit signal vector at the transmitter with the total transmit power $P_t = {\mathbb{E}} \{ \xbf\xbf^H\}$, $n_k$ is the additive white Gaussian noise (AWGN) at the $k$-th user, i.e., $n_k \sim \Cc\Nc(0,\sigma^2)$ with $\sigma^2$ set to $1$ for simplicity, and $\hbf_k$ is the $N \times 1$ (conjugated) channel vector from the transmitter to the $k$-th user following independent Rayleigh fading, i.e.,
\begin{equation}  \label{eq:hbfkinmodel}
\hbf_k=[h_{k1},h_{k2},\cdots,h_{kN}]^T \stackrel{i.i.d.}{\sim} \Cc\Nc(\mathbf{0}, 2\Ibf).
\end{equation}
Here, we set $2\Ibf$ as the covariance matrix for convenience  so that both real and imaginary components of each element of $\hbf_k$ have variance one and thus $||\hbf_k||^2$ has the chi-square distribution of degrees of freedom $2 N$.
Different scaling can be absorbed into the transmit power.
Concatenating all the received signals $y_1,\cdots,y_K$, we can write the matrix model for the received signals as
\begin{equation}
\ybf = \Hbf^H \xbf + \nbf,
\end{equation}
where $\ybf=[y_1,y_2,\cdots,y_K]^T$, $\nbf=[n_1,n_2,\cdots,n_K]^T$, and
$\Hbf = [\hbf_1,\hbf_2,\cdots,\hbf_K]$.
{}{We assume that the channel state information (CSI) $\Hbf$ is available at the transmitter.}
Due to the assumption of $K \le N$, the $K\times N$ overall channel matrix $\Hbf^H$ is a fat  matrix and hence it is right-invertible so that conventional ZF transmit beamforming is feasible.
Design of the signal vector $\xbf$ and receiver processing based on $\{y_1,y_2,\cdots,y_K\}$ will be explained in the subsequent sections.

\subsection{Preliminaries: Reliability and Diversity Order}
\label{subsec:prelim}

Channel fading is inherent in wireless communication, and communication reliability under channel fading is  dependent on the diversity order of the communication channel.
Consider the well-known single-user
 MRT beamforming with multiple transmit antennas. The corresponding channel model is given by  the channel model \eqref{eq:channelModel} with only a single user, i.e., $K=1$. For MRT beamforming, we have $\xbf = \frac{\hbf_1}{||\hbf_1||}\sqrt{p_1}s_1$ with ${\mathbb{E}}\{|s_1|^2\}=1$. The resulting equivalent single-input single-output (SISO) channel and rate are respectively given by
\begin{equation}  \label{eq:MRTsisoEQ}
y_1 = ||\hbf_1|| \sqrt{p_1}s_1 + n_1 ~~~\mbox{and}~~~
R_1= \log ( 1 + ||\hbf_1||^2 \mathrm{SNR}), ~~~\mathrm{SNR}:= \frac{p_1}{\sigma^2},
\end{equation}
where the probability density function (pdf) of $||\hbf_1||^2 = |h_{11}|^2 + \cdots + |h_{1N}|^2$ is given by the chi-square distribution with degree of freedom $2N$ since it is the sum of the squares of $2N$ standard normal random variables:
\begin{align}  \label{eq:ChiSquare2N}
f_{||\hbf_1||^2}(x) &= \frac{1}{2^N(N-1)!}x^{N-1}e^{-x/2}=\frac{1}{2^N(N-1)!}x^{N-1} + o(x^{N-1}), ~\mbox{as}~x\rightarrow 0,
\end{align}
where $o(\cdot)$ is the small o notation.
Communication outage is defined as the event that the channel cannot support a given target rate $R^{th}$, and  the corresponding outage probability is given by
$P_{out} = \mathrm{Pr}\{ R_1 < R^{th}   \}$\cite{Tse:book}.
Then, the diversity of order of the channel is defined as  \cite{Tse:book}
\begin{equation}
D := - \lim_{\mathrm{SNR}\rightarrow \infty} \frac{\log P_{out}}{\log\mathrm{SNR}}.
\end{equation}
In the single-user MRT beamforming case, the outage probability  is given by
$P_{out} = \mathrm{Pr}\left\{ ||\hbf_1||^2  \le \frac{2^{R^{th}}-1}{\mathrm{SNR}} \right\}$ $\approx \frac{(2^{R^{th}}-1)^N}{2^N N! \mathrm{SNR}^N}$\cite{Tse:book},
 and hence the diversity order in this case is  $N$.  That is, the outage probability decays as $\mathrm{SNR}^{-N}$, as SNR increases. Note that in the case of a Rayleigh-fading SISO channel with a single transmit antenna $N=1$, the pdf \eqref{eq:ChiSquare2N} reduces to
$f_{|h_{11}|^2}(x)= \frac{1}{2}e^{-x/2}$,
and the diversity order reduces to one. Hence, MRT beamforming with $N$ transmit antennas  increases the diversity order by $N$ times as compared to the SISO case.

Now, consider the general  Gaussian MISO BC \eqref{eq:channelModel} with ZF downlink beamforming for $K=N$.  In the ZF beamforming case, the overall transmit signal $\xbf$ is given by
$\xbf = \wbf_1^{ZF} \sqrt{p_1}s_1 + \cdots + \wbf_K^{ZF} \sqrt{p_K}s_K$,
where $\wbf_k^{ZF}$ and $s_k$ are the ZF beam vector and data symbol for the $k$-th user with $||\wbf_k^{ZF}||^2=1$ and $\Ebb\{|s_k|^2\}=1$, respectively. Here, the ZF beam vector $\wbf_k^{ZF}$ lies in $\Cc^\perp([\hbf_1,\cdots,$ $\hbf_{k-1},\hbf_{k+1},\cdots,\hbf_K])$
so that $\hbf_i^H \wbf_k^{ZF} =0$ for all $i \ne k$. Then, the resulting  SISO channel for the $k$-th user is given by
\begin{equation}  \label{eq:ZFsisoEQ}
y_k = \hbf_k^H \wbf_k^{ZF} \sqrt{p_k} s_k + n_k.
\end{equation}
In the case of independent Rayleigh fading, the channel vector $\hbf_k$ and the remaining  $\{ \hbf_1,\cdots,\hbf_{k-1},\hbf_{k+1},$  $\cdots,\hbf_K\}$ are independent. Hence, the one-dimensional subspace $\Cc^\perp([\hbf_1,\cdots,\hbf_{k-1},\hbf_{k+1},\cdots,\hbf_K])$ is also   independent of $\hbf_k$, and hence $\hbf_k$ is circularly-symmetric Gaussian distributed over ${\mathbb{C}}^N$ with respect to a reference direction of  $\Cc^\perp([\hbf_1,\cdots,\hbf_{k-1},\hbf_{k+1},\cdots,\hbf_K])$.  Therefore,  taking the inner product between $\hbf_k$ and the unit-norm vector $\wbf_k^{ZF} \in \Cc^\perp([\hbf_1,\cdots,\hbf_{k-1},\hbf_{k+1},\cdots,\hbf_K])$ is equivalent to taking only one component out of $N$ complex Gaussian components, and thus  $|\hbf_k^H\wbf_k^{ZF}|^2$ has the same pdf as  $f_{|h_{11}|^2}(x)= \frac{1}{2}e^{-x/2}$. Hence,
the corresponding diversity order for  the $k$-th user is simply one for all $k$\cite{Chen&Ding&Dai:16Access} as in the SISO Rayleigh fading channel.  Thus, ZF downlink beamforming for MISO BCs loses the diversity gain possibly obtainable from multiple transmit antennas.

Note that if $\hbf_k$ is perfectly orthogonal to $\hbf_1,\cdots,\hbf_{k-1},\hbf_{k+1},\cdots,\hbf_K$, then  $\Cc^\perp([\hbf_1, \cdots,\hbf_{k-1},\hbf_{k+1},$ $\cdots,\hbf_K])$ is perfectly aligned with $\hbf_k$ and hence in this case we have
$\hbf_k^H\wbf_k^{ZF} = ||\hbf_k||$.
In this case, the resulting SISO channel for the $k$-th user is the same as that of the MRT beamforming single-user channel in \eqref{eq:MRTsisoEQ}.  Furthermore, suppose that the angle between $\hbf_k$ and one-dimensional subspace $\Cc^\perp([\hbf_1, \cdots,\hbf_{k-1},\hbf_{k+1},$ $\cdots,\hbf_K])$ is equal to or less than a certain fixed threshold $\alpha$. Then, we have
$|\hbf_k^H\wbf_k^{ZF}| \ge ||\hbf_k||\cos \alpha$.
Since $\cos \alpha$ is a constant, the pdf of $|\hbf_k^H\wbf_k^{ZF}|^2$ is a certain scaled version of that of $||\hbf_k||^2$ (the meaning of this statement will become clear in later sections), and the outage behavior for the $k$-th user in this case should be the same as that of the MRT single-user case as SNR increases without bound.
Reflecting this, one can recognize that the degradation of diversity order of ZF beamforming for a MISO BC with independent channel fading results from the uncontrolled and arbitrary angle between
$\hbf_k$ and  $\Cc^\perp([\hbf_1, \cdots,\hbf_{k-1},\hbf_{k+1},$ $\cdots,\hbf_K])$.

\section{The {}{Mixture} Transceiver Architecture }
\label{sec:systemmodel}

In this section, motivated by the discussion in the previous section, {}{we consider the mixture transceiver architecture for  Gaussian MISO BCs \cite{Seo&Sung:17SPAWC,Chen&Ding&Dai:16Access} in order to overcome the diversity drawback of ZF downlink beamforming. The mixture architecture is  based on user grouping and mixture of linear ZF and non-linear SIC reception. First, user grouping is performed to group users with closely-aligned channel vectors. Then,  superposition coding and SIC are applied to the users with closely-aligned channel vectors in each group, whereas ZF beamforming is applied across groups.
In order to  fully enhance the diversity order of the resulting individual user channel,
we generalize the mixture architecture by adopting adaptive user grouping, which yields channel-dependent groups
and
enforces the angle between the subspace of each group and the orthogonal complement of the union of all other groups' subspaces to be less than a certain threshold  so that inter-group ZF beamforming does not harm the overall diversity order.}

From here on, we explain the {}{mixture transceiver architecture with the proposed user grouping method} in detail. We consider the MISO BC explained in Section \ref{subsec:ChannelModel} as our channel model.   We assume the following for our transceiver architecture:

{\em A.1 (User Grouping):} First, we group the $K$ users into $N_g$ groups. The constructed groups are denoted by the sets $\Gc_1, \Gc_2,\cdots, \Gc_{N_g}$ such that $\Gc_i \cap \Gc_j = \emptyset$ for $i\ne j$ and $\bigcup_{j=1}^{N_g} \Gc_j = \{1,2,\cdots,K\}$.
User grouping is adaptive in the sense that the number of groups can vary and the number of  members in each group can vary from one to $K$, depending on the channels such that $\sum_{j=1}^{N_g} |\Gc_j|=K$.  The constructed groups satisfy a certain subspace angle property in order to apply inter-group ZF beamforming without degrading the diversity order.
The detailed method for user grouping will be presented in Section \ref{sec:User Grouping Algorithm}.

{\em A.2 (Inter-Group Beamforming):} With the constructed groups,  in order to control inter-group interference,
we apply ZF beamforming across the constructed groups. With this inter-group ZF beamforming,
the inter-group interference across the groups is zero.

{\em A.3 (Intra-Group Processing: Superposition Coding and SIC):}
With the constructed groups, for intra-group processing we apply superposition  coding and SIC decoding
to each and every group with more than one user.

Under the aforementioned transceiver architecture, the transmit signal $\xbf$ of the transmitter  can be expressed as
\begin{equation}
 \mathbf{x} = \sum_{j=1}^{N_g} \mathbf{\Pi}^{(j)}\sum_{i \in \mathcal{G}_j}\sqrt{p_i^{(j)}} \mathbf{w}_i^{(j)}
    s_i^{(j)}, \label{eq:new_arch}
\end{equation}
where $s_i^{(j)}$ is the transmit symbol from  $\mathcal{CN}(0,1)$
for User $i$ in group $\mathcal{G}_j$, $\wbf_i^{(j)}$ is the $N \times 1$ intra-group beamforming vector for User $i$ in group
$\mathcal{G}_j$ out of the feasible set $\widetilde{\mathcal{W}} := \{ \wbf ~|~ \| \mathbf{\Pi}^{(j)} \wbf\|^2 \leq 1\}$, $p_i^{(j)}$ is the power
assigned to User $i$ in group $\mathcal{G}_j$, and $\Pibf^{(j)}$ is the inter-group ZF projection matrix for
group $\mathcal{G}_j$.  We assume that the total transmit power $P_t$ is divided such that $|\mathcal{G}_j| \times P_t/K$ is allocated to group $\mathcal{G}_j$. Then, from
 \eqref{eq:channelModel}
 the received signal at User $i$ in group $\mathcal{G}_j$ can be written as
\begin{align}
    y_i^{(j)} &= \hbf_i^{(j)H} \left(\Pibf^{(j)} \sum_{i \in \mathcal{G}_j}\sqrt{p_i^{(j)}} \mathbf{w}_i^{(j)}
    s_i^{(j)}\right) + n_i^{(j)} \stackrel{(a)}{=}  \left(\Pibf^{(j)}\hbf_i^{(j)}\right)^H \left(\sum_{i \in \mathcal{G}_j}\sqrt{p_i^{(j)}} \mathbf{w}_i^{(j)}
    s_i^{(j)}\right) + n_i^{(j)}, \label{eq:rxModel2}
\end{align}
where $\hbf_i^{(j)}$ is  the $N \times 1$ channel vector between the transmitter and User $i$ in group $\mathcal{G}_j$, and
$n_i^{(j)}\sim \mathcal{CN}(0,1)$ is the AWGN at User $i$ in group $\mathcal{G}_j$ (here, the single user index $k$ in  \eqref{eq:channelModel} is properly mapped to the two indices: intra-group user index $i$ and group index $j$).  The inter-group ZF projection matrix $\Pibf^{(j)}$ is given by
$\Pibf^{(j)} = \Pibf_{\tilde{\Hbf}_j}^\bot$,
where  $\tilde{\Hbf}_j$ is the matrix composed of all channel vectors except the channel vectors of the users in group $\Gc_j$, i.e.,
\begin{equation}  \label{eq:tildeHbfj}
\tilde{\Hbf}_j:= [\hbf_1^{(1)} \cdots \hbf_{|\mathcal{G}_1|}^{(1)} , \cdots, \hbf_1^{(j-1)} \cdots \hbf_{|\mathcal{G}_{j-1}|}^{(j-1) }, \hbf_1^{(j+1)} \cdots \hbf_{|\mathcal{G}_{j+1}|}^{(j+1) } , \cdots, \hbf_1^{(N_g)} \cdots \hbf_{|\mathcal{G}_{N_g}|}^{(N_g) }  ].
\end{equation}
Due to the inter-group ZF beamforming, there is no inter-group interference in \eqref{eq:rxModel2}, and the property of an orthogonal projection matrix, $(\Pibf_{\tilde{\Hbf}_j}^\bot)^H=\Pibf_{\tilde{\Hbf}_j}^\bot$, is used in Step (a) in \eqref{eq:rxModel2}.

\subsection{Intra-Group Beam Design and the Corresponding Rates}
\label{Beam Design for the Proposed Structure}

In this subsection, we consider intra-group beam vector design for the {}{mixture}
transceiver architecture and analyze the achievable rates of the intra-group processing. First, consider each group $\Gc_j$ with one user. In this case, the received signal \eqref{eq:rxModel2}
reduces to
\begin{equation}  \label{eq:received_1user}
    y_1^{(j)}     =  \left(\Pibf_{\tilde{\Hbf}_j}^\bot \hbf_1^{(j)}\right)^H  \sqrt{p_1^{(j)}} \mathbf{w}_1^{(j)}
    s_1^{(j)} + n_1^{(j)}, \quad |\mathcal{G}_j|  = 1,
\end{equation}
and the design of the  intra-group beam vector  $\wbf_1^{(j)}$ is simple. The optimal intra-group beam vector
$\mathbf{w}_1^{(j)*}$ is the MRT beam matched to the projected effective channel vector $\Pibf_{\tilde{\Hbf}_j}^\bot \hbf_1^{(j)}$, i.e.,
   $\sqrt{p_1^{(j)}} \mathbf{w}_1^{(j)*} = \sqrt{P_t/K} {\Pibf_{\tilde{\Hbf}_j}^\bot \hbf_1^{(j)}}/{\|\Pibf_{\tilde{\Hbf}_j}^\bot \hbf_1^{(j)}\|}$.
In this case, the optimal beam vector is equivalent to the ZF-beamforming vector with power $P_t/K$.

Next, consider the intra-group beam design for each group with more than one user. As aforementioned, we apply
 superposition
coding and SIC in this case. Suppose that group $\mathcal{G}_j$ consists of $L$ users ($L > 1$). Then, with the group  index $(j)$ omitted for convenience, the received signal for User $i$, $i=1,\cdots,L$,  in group $\mathcal{G}_j$ is given by
\begin{equation} \label{eq:BC_l}
    y_i = \gbf_i^H \left(\sum_{i =1}^L\sqrt{p_i} \mathbf{w}_i
    s_i\right) + n_i, \quad i = 1 \cdots, L ,
\end{equation}
where $\sum_{i=1}^L p_i \le  P$ with $P$ being the total group power allocated to  group $\Gc_j$ (i.e., $P = L \times P_t/K$), and
$\gbf_i$ is the projected effective channel of User $i$ given by
\begin{equation}  \label{eq:IntraEffectCh}
    \gbf_i = \Pibf_{\tilde{\Hbf}_j}^\bot \hbf_i^{(j)}, \quad i = 1, \cdots, L.
\end{equation}
We assume that the intra-group beam vector $\wbf_i$ is designed based on the projected effective channels $\gbf_1,\cdots,\gbf_L$.
Then, the feasible set for intra-group beam vector $\wbf_i$ is given by $\mathcal{W} := \{ \wbf ~|~ \| \wbf\|^2 \leq 1\}$ from the
fact that the beam design space for $\wbf_i$ is the linear subspace spanned by $\{\gbf_1, \cdots, \gbf_L\}$. (The beam component not in the subspace spanned by   $\{\gbf_1, \cdots, \gbf_L\}$ does not affect the signal or the interference. Hence, it just wastes power.) Since $\wbf_i \in \Cc([\gbf_1,\cdots,\gbf_L])$, we have $\wbf_i \in \Cc^\perp(\tilde{\Hbf}_j)$ by \eqref{eq:IntraEffectCh} and hence for the actual beam power constraint $\|\Pibf^{(j)} \wbf_i \|^2 \le 1$, we have
$\|\Pibf^{(j)} \wbf_i \|^2=\|  \Pibf_{\tilde{\Hbf}_j}^\bot \wbf_i\|^2 =\|\wbf_i\|^2 \le 1$ in this case. So, we have the feasible set $\Wc$ for $\wbf_i$.

Note that with inter-group ZF beamforming, the intra-group signal model is separated from group to group based on the projected effective channels, and the system model  \eqref{eq:BC_l} is a conventional  MISO BC with $L$-user superposition coding beamforming.
For superposition coding and SIC,
we assume that the in-group users are ordered according to their channel norms as
$\|\gbf_1\|^2 \geq \|\gbf_2\|^2 \geq \cdots  \geq \|\gbf_L\|^2$.
With this assumption, SIC at the in-group receivers is applied such that
User $i$ decodes and cancels the interference from Users $L, L-1, \cdots, i+1$ sequentially.\footnote{{}{The considered decoding order may not be optimal if we consider the  design of $\{\wbf_i\}$ but is sufficient for our purpose of analytic derivation of the diversity order of the mixture transceiver architecture.}} (Note that since User $i$ has a better channel than Users $L,L-1,\cdots,i+1$, User $i$ can decode the messages intended for Users
$L,L-1,\cdots,i+1$.)
 Then, the rates of the in-group users can be expressed  as
\begin{equation} \label{eq:rateLMISOBCSIC1}  
    R_1 =  \log_2 \left( 1 + p_1|\gbf_1^H \wbf_1|^2\right), ~~~
    R_i =  \log_2 \left( 1 +  \min \left\{ \mathrm{SINR}_1^i, \cdots, \mathrm{SINR}_i^i \right\}\right), ~ i = 2, \cdots, L,
\end{equation}
where $\mathrm{SINR}_j^i$ is the SINR  when User $j$ decodes the message intended for User $i$, given by
\begin{equation}
    \mathrm{SINR}_j^i = \frac{p_i|\gbf_j^H \wbf_i|^2}{\sum_{m=1}^{i-1} p_{m}|\gbf_j^H \wbf_{m}|^2 + 1}
\end{equation}
The achievable rate region $\mathcal{R}$ of the
MISO BC with superposition coding and SIC decoding is defined as the union of achievable rate-tuples:
\begin{equation}  \label{eq:AchieRegionMISOBCSIC}
    \mathcal{R} := \bigcup\limits_{ \substack{(\wbf_1, \cdots, \wbf_L) \in \mathcal{W}^L \\ (p_1, \cdots, p_L) | p_i >0, \forall i,~\sum_{i=1}^L p_i = P  } } ( R_1, R_2 \cdots, R_L ),
\end{equation}
where $(R_1,\cdots,R_L)$ is  from \eqref{eq:rateLMISOBCSIC1}. 
The Pareto boundary of the region $\mathcal{R}$ is the outer boundary of $\mathcal{R}$ and  can be obtained
by maximizing $R_L$ for each feasible target rate-tuple $(R_1^* , \cdots, R_{L-1}^*)$.
The maximization problem  for given $(R_1^* , \cdots, R_{L-1}^*)$  can be solved  by a convex programming approach based on reformulation \cite{Hanif16:TSP}
 and the convex concave procedure (CCP)\cite{yuille03:Concave}.
However, difficulty lies in knowing  the feasible target rate-tuple set for the MISO BC with superposition coding and SIC
since the rates depend on the beam vectors and channel vectors of all in-group users, although some induction approach for this  was proposed in \cite{Seo&Sung:18SP}.
{}{The difficulty to find the feasible rate tuple for Users $1,\cdots,L-1$ can be circumvented by formulating the problem as weighted sum rate maximization  based on the rate-profile approach \cite{Zhang&Cui:10SP}. However,} the existing algorithms for the Pareto-optimal design problem  numerically provide rates  based on numerically obtained beam vectors. Hence, these existing design algorithms do not provide  closed-form rate expressions for general MISO BCs with superposition coding and SIC decoding which is necessary for our analytical derivation of the diversity order.
In order {}{to obtain desired closed-form expressions} for  the achievable rates of the  MISO BC with superposition coding and SIC decoding, we consider beam design under the following constraint:
\begin{eqnarray}
\wbf_1&=&\wbf_2=\cdots=\wbf_L= \wbf,~~~||\wbf||^2 \le 1, ~\mbox{i.e.}~\wbf \in \Wc \nonumber\\
p_i &=& \delta_i P, ~~~i=1,2,\cdots,L, \label{eq:optimization_alter_cond}
\end{eqnarray}
where $(\delta_1, \cdots, \delta_L)$ is a power ratio-tuple out of
the feasible power ratio-tuple set
$\Dc:=\{ (\delta_1, \cdots, \delta_L) ~ | ~ \delta_i \ge 0~ \forall i, ~\sum_{i=1}^L \delta_i = 1\}$.
Here, $\delta_i$ is the ratio of the total group power $P$ to the power allocated to
User $i$, i.e., $p_i=\delta_iP$ is assigned to User $i$.  Note that the   constraint \eqref{eq:optimization_alter_cond}
satisfies the original beam design constraint in  \eqref{eq:AchieRegionMISOBCSIC}.
 Based on the restricted  constraint \eqref{eq:optimization_alter_cond},
 the following proposition provides simple closed-form lower bounds on the achievable rates  for
the MISO BC with superposition coding and SIC:

\begin{proposition} \label{pro:proposition2}
    In the MISO BC \eqref{eq:BC_l} with $L$ users adopting superposition coding and SIC decoding with  channel vectors $\gbf_1, \cdots, \gbf_L$ with ordering
$\|\gbf_1\|^2 \geq \|\gbf_2\|^2 \geq \cdots  \geq \|\gbf_L\|^2$ and total group power $P$,  for an arbitrary given power ratio-tuple  $(\delta_1, \cdots, \delta_L)$ out of the feasible power ratio-tuple set $\Dc$, 
the achievable rates  $(R_1, R_2, \cdots, R_L)$   are lower bounded as
\begin{align}
    R_1 &\geq \log_2 \left( 1 + \frac{1}{c} \delta_1 \|\gbf_1\|^2 P \right)  \label{eq:propRate1}\\
    R_i &\geq \log_2 \left( 1 + \frac{\delta_i}{\sum_{m=1}^{i-1} \delta_m} \frac{1}{1 + \left(\frac{1}{c} \|\gbf_i\|^2\sum_{m=1}^{i-1} \delta_m P \right)^{-1}}\right),
    \quad i = 2, \cdots, L, \label{eq:propRatei}
\end{align}
 where the constant $c$ is given by
\begin{equation}  \label{eq:Prop1constanC}
   c = \left\{ \begin{array}{ll}
                  L & \mbox{if}, \quad L \leq 3, \\
                  8L^2 & \mbox{if}, \quad L > 3.
                \end{array}
    \right.
\end{equation}
\end{proposition}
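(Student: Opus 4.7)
My plan is to invoke the restricted design \eqref{eq:optimization_alter_cond} to reduce the beam problem to a single common beam $\wbf$, use the resulting monotonicity of SINR in $|\gbf_j^H\wbf|^2$, and then prove a geometric lemma that exhibits one $\wbf$ whose cross-gain $|\gbf_j^H\wbf|^2$ is within a factor $c$ of the MRT gain $\|\gbf_j\|^2$ for every in-group user simultaneously. Substituting $\wbf_i=\wbf$ and $p_i=\delta_iP$ into the SINR expression collapses the $|\gbf_j^H\wbf_i|^2$ factors into a common scalar and
\begin{equation*}
\mathrm{SINR}_j^i \;=\; \frac{\delta_i P\,|\gbf_j^H\wbf|^2}{\bigl(\sum_{m=1}^{i-1}\delta_m\bigr)P\,|\gbf_j^H\wbf|^2+1},
\end{equation*}
which is monotonically increasing in $|\gbf_j^H\wbf|^2$. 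Once this monotonicity is in place, the whole proposition reduces to the geometric question of finding one unit-norm vector that is almost matched to each $\gbf_j$ simultaneously.

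The main technical obstacle is this geometric lemma: there exists $\wbf\in\mathcal{W}$ with $|\gbf_j^H\wbf|^2 \geq \|\gbf_j\|^2/c$ for every $j=1,\ldots,L$, with $c$ as in \eqref{eq:Prop1constanC}. For $L\leq 3$ I would take a phase-aligned normalized sum $\wbf \propto \sum_{j=1}^L e^{i\phi_j}\gbf_j/\|\gbf_j\|$, choosing the $\phi_j$ so that every pairwise inner product becomes nonnegative real; in the $L=2$ case this already produces $|\gbf_i^H\wbf|^2=\|\gbf_i\|^2(1+\rho)/2\geq\|\gbf_i\|^2/2$ with $\rho\geq 0$, and extending the same trigonometric bookkeeping to $L=3$ yields $c=L$. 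For $L>3$, phase alignment alone cannot simultaneously control all triple overlaps, so a cruder averaging construction is needed, and the worst-case slack in the triangle-inequality estimate of $\|\wbf\|$ versus $|\gbf_j^H\wbf|$ accounts for the extra $O(L^2)$ factor in $c=8L^2$. This is where all the technical work sits, because staying inside $\|\wbf\|\leq 1$ and being close to every $\gbf_j/\|\gbf_j\|$ pull against each other, and the case split in \eqref{eq:Prop1constanC} marks exactly where the tight small-$L$ construction breaks down.

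With the geometric lemma granted, the rest is bookkeeping. Because the users are ordered so that $\|\gbf_j\|^2\geq\|\gbf_i\|^2$ for every $j\leq i$, the lemma gives $|\gbf_j^H\wbf|^2\geq\|\gbf_i\|^2/c$ for all $j\in\{1,\ldots,i\}$; combined with the monotonicity observed in the first step, $\min_{j\leq i}\mathrm{SINR}_j^i$ is lower bounded by the SINR formula evaluated at $|\gbf_j^H\wbf|^2=\|\gbf_i\|^2/c$, and a direct algebraic rearrangement of that expression produces exactly \eqref{eq:propRatei}. For user~$1$, SIC removes every higher-indexed signal so no interference remains and $R_1=\log_2(1+\delta_1P|\gbf_1^H\wbf|^2)\geq\log_2(1+\delta_1P\|\gbf_1\|^2/c)$, which is \eqref{eq:propRate1}. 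Both inequalities are genuine achievable lower bounds on the Pareto-optimal rates because the common-beam restriction \eqref{eq:optimization_alter_cond} is a feasible specialization of the design space in \eqref{eq:AchieRegionMISOBCSIC}.
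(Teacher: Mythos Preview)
Your overall framework---restrict to a common beam $\wbf_1=\cdots=\wbf_L=\wbf$, observe that each $\mathrm{SINR}_j^i$ becomes monotone in $|\gbf_j^H\wbf|^2$, reduce to a geometric lemma asserting the existence of a single $\wbf\in\mathcal{W}$ with $|\gbf_j^H\wbf|^2\ge\|\gbf_j\|^2/c$ for all $j$, and then finish by the channel ordering---is exactly the paper's route. The bookkeeping in your last paragraph is correct.

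The gap is in your proof of the geometric lemma. For $L=3$ you claim one can choose phases so that \emph{all} pairwise inner products $e^{i(\phi_k-\phi_j)}v_j^Hv_k$ (with $v_j=\gbf_j/\|\gbf_j\|$) become nonnegative real. This is three constraints on two free phase differences; writing $\alpha_{jk}=\arg(v_j^Hv_k)$, consistency forces $\alpha_{12}+\alpha_{23}=\alpha_{13}\pmod{2\pi}$, which generic channel vectors do not satisfy. Even when phase alignment \emph{is} possible, the normalized sum $\wbf\propto\sum_j v_j$ does not deliver $c=L$: take $v_1\perp v_2=v_3$, so $\wbf=(v_1+2v_2)/\sqrt{5}$ and $|v_1^H\wbf|^2=1/5<1/3$. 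Thus your $L\le3$ construction fails on both counts, and for $L>3$ the ``cruder averaging construction'' is not actually specified; recovering the exact constant $8L^2$ by an elementary triangle-inequality argument is not straightforward.

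The paper proves the lemma by an entirely different, non-constructive mechanism. It recasts $\max_{\|\wbf\|\le1}\min_j|v_j^H\wbf|^2$ as the inverse of a nonconvex QCQP $\min\|\wbf\|^2$ subject to $|v_j^H\wbf|^2\ge1$, applies the semidefinite relaxation, and then invokes two known facts: (i) SDR is tight when there are at most three quadratic constraints, so $v_{qp}^*=v_{sdp}^*$ for $L\le3$; (ii) for general $L$ the SDR approximation ratio is $v_{qp}^*\le 8L\,v_{sdp}^*$ \cite{Luo:07JO}. Feasibility of $\Wbf'=\sum_j v_jv_j^H$ for the SDP gives $v_{sdp}^*\le L$, and combining yields $\min_j|v_j^H\wbf^*|^2=1/v_{qp}^*\ge 1/c$ with exactly the constants in \eqref{eq:Prop1constanC}. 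If you want to keep an explicit-construction flavor, you would need a genuinely different device for $L=3$ (and a concrete scheme for $L>3$); otherwise, replacing your lemma sketch with the SDR argument closes the gap.
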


{\em Proof:} See Appendix A.

Note that the power ratio-tuple set $\Dc$ does not
depend on the beam vectors and the channel vectors, and it is just a simplex. Thus, the rate lower bounds \eqref{eq:propRate1}
    and \eqref{eq:propRatei}  with sweeping $(\delta_1,\cdots,\delta_L)$ within $\Dc$
yield an inner region of  the achievable rate region  $\Rc$ defined in \eqref{eq:AchieRegionMISOBCSIC}.

\begin{figure}[ht]
\begin{psfrags}
    \centerline{ \scalefig{0.55} \epsfbox{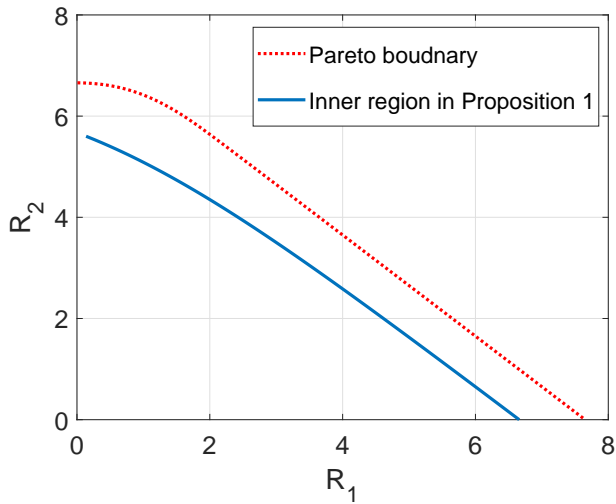} }
    \caption{{Rate region: Pareto-boundary versus Propostion 1 ($K=2$ with $4\times 1$ MISO)}}
    \label{fig:rate_region}
\end{psfrags}
\end{figure}

{The rate lower  bound in Proposition \ref{pro:proposition2} was evaluated and compared with the Pareto-boundary obtained from \eqref{eq:AchieRegionMISOBCSIC} for an example case of a  MISO BC of two users with SIC. The system setup is as follows: It was a MISO BC with four transmit antennas and one receive antenna, $10 \log_{10} P/1 = 10$, $||\gbf_1||^2 = 20$, $||\gbf_2||^2 = 10$, and $|\gbf_1^H \gbf_2|^2/(||\gbf_1||^2 ||\gbf_2||^2) = 0.5$, where $\gbf_1$ is the $4 \times 1$ channel vector of the strong user and $\gbf_2$ is the $4\times 1$ channel vector of the weak user.
The rate region is shown in Fig. \ref{fig:rate_region}. It is seen that the inner rate region by  Proposition \ref{pro:proposition2} is not very close to the Pareto-boundary, but it still achieves quite a good portion of the Pareto-region.}
The key point in the derived lower bounds \eqref{eq:propRate1} and
 \eqref{eq:propRatei}  on the achievable rates $(R_1,\cdots,R_L)$  is that {\em the lower bound on the rate $R_i$ of User $i$ in the superposition-and-SIC group is expressed only in terms of User $i$'s channel norm square $||\gbf_i||^2$ and the power distribution factors $(\delta_1,\cdots,\delta_L)$.} This enables us to analyze the distribution of $R_i$ via the distribution of $||\gbf_i||^2$ and to derive the diversity order of the mixture scheme in Section  \ref{sec:outageAnalysis}.

\subsection{Adaptive User Grouping}
\label{sec:User Grouping Algorithm}

Now, we  consider  user grouping, which should be  done properly for good diversity performance of the mixture transceiver architecture.
Since we apply inter-group ZF beamforming, a level of orthogonality across the constructed groups is required to guarantee high reliability, as discussed in Section \ref{subsec:prelim}.  Note that the  channel orthogonality among the users within a group is not required since  superposition coding and SIC are applied to the users in each group.
There can exist many user grouping methods that guarantee certain orthogonality among the constructed groups. In this section, we provide one example for such user grouping.  {}{The main difference between our user grouping method and several previous user grouping methods proposed for NOMA\cite{Chen&Ding&Dai:16Access,Seo&Sung:18SP,AliHossainKim17Access} is that the number of groups and the number of members in each group are not predetermined and the  angle between the channel subspaces of any two user groups is not less than a certain threshold in our user grouping method, whereas the number of groups and the number of members in each group are predetermined and fixed for the previous methods\cite{Chen&Ding&Dai:16Access,Seo&Sung:18SP,AliHossainKim17Access}. This angle property is necessary for derivation of the diversity order of the mixture architecture.
}

To measure the orthogonality across groups, we define a new subspace angle metric
$\theta(\cdot, \cdot)$, which captures the angle between the subspaces  $\Cc(\Abf)$ and $\Cc(\Bbf)$ spanned
by the columns of matrices $\Abf$ and $\Bbf$ as
\begin{equation}  \label{eq:thetaDef}
    \theta(\Abf,\Bbf) := \left\{\begin{array}{ll}
    \max (\{\phi(\Abf,\bbf_i), \forall~ i\} \cup \{\phi(\Bbf,\abf_j), \forall~ j\} ), & \mbox{if $\Abf$ and $\Bbf$ are non-empty matrices}\\
    0, & \mbox{if}~ \Abf ~\mbox{or}~\Bbf~\mbox{is an empty matrix},
    \end{array}
    \right.
\end{equation}
where $\abf_i$ is the $i$-th column of $\Abf$,  $\bbf_j$ is the $j$-th column of $\Bbf$,  and $\phi(\cdot,\cdot)$ is another newly-defined
 angle metric which captures the angle between the vector $\bbf$ and the subspace $\Cc(\Abf)$, defined as
\begin{equation}  \label{eq:phiAbfbbf}
    \phi(\Abf, \bbf) := \frac{\|\Abf(\Abf^H\Abf)^{-1}\Abf^H \bbf\|^2}{\|\bbf\|^2}.
\end{equation}
In case of $\Abf = [\abf]$ is a vector, $\phi$ reduces to the square of the angle cosine  of two vectors $\abf$ and $\bbf$:
\begin{equation}
    \phi(\abf, \bbf) = \frac{|\abf^H \bbf|^2}{\|\abf\|^2\|\bbf\|^2} = \cos^2 \angle (\abf,\bbf) ~\in [0,1].
\end{equation}
When $\Bbf=[\bbf]$ in   \eqref{eq:thetaDef}  is a vector,  $\theta(\Abf,\bbf)$ simply reduces to $\phi(\Abf, \bbf)$ because $\phi(\Abf,\bbf) \ge \phi(\bbf,\abf_j)$ for all $j$, i.e., the angle between $\bbf$ and $\Cc(\Abf)$ is smaller than or equal to the angle between $\bbf$ and individual column $\abf_j$ of $\Abf$, as illustrated in Fig. \ref{fig:phiDef}(a). When $\theta=0$,  two subspaces $\Cc(\Abf)$ and $\Cc(\Bbf)$ are mutually orthogonal. When $\theta=1$, on the other hand,  there exists either at least a column of $\Abf$ contained in $\Cc(\Bbf)$ or at least a column of $\Bbf$ contained in $\Cc(\Abf)$, and the two subspaces $\Cc(\Abf)$ and $\Cc(\Bbf)$ are not separated.

\begin{figure*}[t!]
\centerline{
 \begin{psfrags}
  \psfrag{a1}[l]{\small $\abf_{1}$} %
  \psfrag{a2}[l]{\small $\abf_{2}$} %
  \psfrag{b}[l]{\small $\bbf$} %
  \psfrag{pb}[l]{\footnotesize $\Pibf_\Abf \bbf$} %
  \psfrag{ca}[l]{\small $\Cc(\Abf=[\abf_1,\abf_2])$} %
  \psfrag{l1}[l]{\small $l_{o}$} %
  \psfrag{l2}[l]{\small $l_{p}$} %
  \psfrag{phi}[l]{\small $\phi(\Abf,\bbf)={l_p^2}/{l_o^2}$} %
  \psfrag{caa}[c]{\footnotesize $\Cc(\Abf)$} %
  \psfrag{cape}[l]{\footnotesize $\Cc^\perp(\Abf)$} %
  \psfrag{cpab}[r]{\footnotesize $\Cc^\perp([\Abf,\Bbf])$} %
  \psfrag{cpabx}[r]{\scriptsize $\Pibf_{[\Abf,\Bbf]}^\perp \xbf$} %
  \psfrag{cb}[l]{\footnotesize $\Cc(\Bbf)$} %
  \psfrag{x}[l]{\small $\xbf$} %
  \psfrag{cab}[l]{\footnotesize $\Cc(\Pibf_\Abf^\perp \Bbf)$} %
  \psfrag{cax}[l]{\scriptsize $\Pibf_\Abf^\perp \xbf$} %
  \psfrag{pipiab}[l]{\scriptsize $\Pibf_{\Pibf_{\Abf}^\perp \Bbf}\Pibf_\Abf^\perp \xbf$} %
  \psfrag{aaa}[c]{\small (a)} %
  \psfrag{bbb}[c]{\small (b)} %
    \scalefig{1}\epsfbox{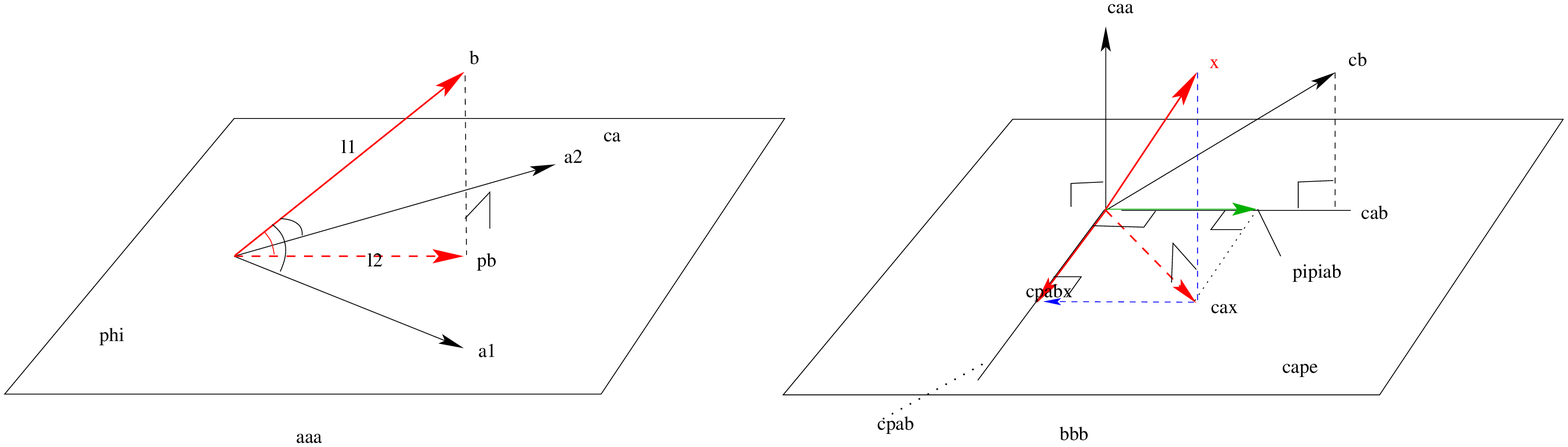}
\end{psfrags}
} \caption{(a) an illustration of $\phi(\Abf,\bbf)$ and (b) an illustration of sequential orthogonal projection}
\label{fig:phiDef}
\end{figure*}

\begin{algorithm}[!t]
        \setstretch{1.2}
   \caption{\textbf{: The Proposed User Grouping Algorithm}}\label{al:algoirhtm1}
    \begin{algorithmic}[1]
       \State \textbf{Initialization}:
       \State A threshold value $\theta^{th} \in (0,1)$ is given.
       \State Initially set $\fbf_1 \cdots, \fbf_K$ as the actual channel vectors $\hbf_1,\cdots,\hbf_K$ of the $K$ users.
       \State Set $\mathcal{K} \leftarrow \{1,\cdots,K\}$ (initial candidate set)
       \State Set $i_g \leftarrow 0$ (group index)
       \State Set $n_g \leftarrow 1$ (number of users in group)
       \State Set $\Fbf_\Kc \leftarrow [\fbf_1, \cdots, \fbf_K]$.
             \vspace{1em}
       \State \textbf{Execution}:

       \State \textbf{While } $n_g \le K$
       \State \quad Find a group of users $\{u_1^*, \cdots, u_{n_g}^*\}$ with cardinality $n_g$ such that $\Cc(\Fbf_{\{u_1^*, \cdots, u_{n_g}^*\}})$ and $\Cc(\Fbf_{ \mathcal{K} \setminus \{u_1^*, \cdots, u_{n_g}^*\}})$ satisfy
                     \begin{equation}  \label{eq:Algor1theta}
                        \theta(\Fbf_{ \mathcal{K} \setminus \{u_1^*, \cdots, u_{n_g}^*\}},\Fbf_{\{u_1^*, \cdots, u_{n_g}^*\}}) \leq \theta^{th}.
                    \end{equation}
       \State \quad \quad\textbf{If} we find such a group of users $\{u_1^*, \cdots, u_{n_g}^*\}$,
       \State \quad \quad\quad\quad $i_g\leftarrow i_g+1$ (increase the group index by one).
       \State \quad \quad\quad\quad $\mathcal{G}_{i_g} \leftarrow \{u_1^*, \cdots, u_{n_g}^*\}$ (construct one group).
         \State \quad \quad\quad\quad $\mathcal{K} \leftarrow \mathcal{K}\setminus \{u_1^*, \cdots, u_{n_g}^*\}$ (update $\Kc$ by removing the selected users from the candidate set).
       \State \quad  \quad\quad\quad Update the vector $\fbf_u$ as $\fbf_u \leftarrow \left(\Ibf - \Fbf_{\mathcal{G}_{i_g}}(\Fbf_{\mathcal{G}_{i_g}}^H\Fbf_{\mathcal{G}_{i_g}})^{-1}\Fbf_{\mathcal{G}_{i_g}}^H\right)\fbf_u, ~ \forall u \in \mbox{updated}~\mathcal{K}$
       \State \quad \quad\quad\quad Construct new $\Fbf_{\mathcal{K}}$ with the updated $\fbf_u, \forall u \in \mbox{updated}~\mathcal{K}$.
       \State \quad \quad\textbf{Else}
       \State \quad  \quad\quad\quad $n_g \leftarrow n_g + 1$
       \State \quad \quad \textbf{Endif}
       \State \textbf{Endwhile}
       \State $N_g \leftarrow i_g$.
       \State (Throughout the algorithm, $\Fbf_{\mathcal{S}}$ means the submatrix of current $\Fbf_\Kc$ composed of $\{\mbox{current}~\fbf_u, \forall u \in \mathcal{S}\subset \Kc\}$.)
\end{algorithmic}
\end{algorithm}

The proposed user grouping algorithm based on $\theta(\cdot,\cdot)$  is presented in Algorithm \ref{al:algoirhtm1}. Before explaining the algorithm, we introduce a useful lemma regarding sequential orthogonal projection necessary to explain the algorithm.

\begin{lemma} \label{lemma:sop}
        For a vector $\xbf$ and matrices $\Abf$ and $\Bbf$ such that $[\Abf,\Bbf]$ is a tall matrix, the following equality holds:
            $\Pibf^\bot_{[\Abf,\Bbf]} \xbf =  (\Ibf-\Pibf_{\Pibf_\Abf^\perp \Bbf})\Pibf_\Abf^\perp \xbf = \Pibf^\bot_\Abf \xbf
            - \Pibf^\bot_{\Abf} \Bbf [(\Pibf^\bot_{\Abf} \Bbf)^H  \Pibf^\bot_{\Abf} \Bbf ]^{-1}(\Pibf^\bot_{\Abf} \Bbf)^H  \Pibf^\bot_\Abf \xbf$.
\end{lemma}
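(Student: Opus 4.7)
The plan is to exploit the orthogonal decomposition $\Cc([\Abf,\Bbf]) = \Cc(\Abf) \oplus \Cc(\Pibf_\Abf^\perp \Bbf)$, which lets me rewrite $\Pibf_{[\Abf,\Bbf]}$ as a sum of two orthogonal projectors and then simplify. First I would verify the decomposition: every column $\bbf_j$ of $\Bbf$ splits as $\bbf_j = \Pibf_\Abf \bbf_j + \Pibf_\Abf^\perp \bbf_j$, so $\Cc([\Abf,\Bbf]) \subseteq \Cc(\Abf) + \Cc(\Pibf_\Abf^\perp \Bbf)$; the reverse inclusion is immediate since $\Pibf_\Abf^\perp \Bbf = \Bbf - \Abf(\Abf^H\Abf)^{-1}\Abf^H \Bbf$ lies in $\Cc([\Abf,\Bbf])$; and $\Cc(\Pibf_\Abf^\perp \Bbf) \subseteq \Cc^\perp(\Abf)$ ensures orthogonality of the two summands.

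Next, since the summands are orthogonal, the projector onto their direct sum is the sum of the individual projectors, giving $\Pibf_{[\Abf,\Bbf]} = \Pibf_\Abf + \Pibf_{\Pibf_\Abf^\perp \Bbf}$. Taking complements yields $\Pibf_{[\Abf,\Bbf]}^\perp = \Pibf_\Abf^\perp - \Pibf_{\Pibf_\Abf^\perp \Bbf}$. Applying this to $\xbf$ and using $\Pibf_{\Pibf_\Abf^\perp \Bbf}\Pibf_\Abf = \bzero$ (because $\Cc(\Pibf_\Abf^\perp \Bbf) \subseteq \Cc^\perp(\Abf)$), I get $\Pibf_{\Pibf_\Abf^\perp \Bbf}\xbf = \Pibf_{\Pibf_\Abf^\perp \Bbf}\Pibf_\Abf^\perp \xbf$, so $\Pibf_{[\Abf,\Bbf]}^\perp \xbf = (\Ibf - \Pibf_{\Pibf_\Abf^\perp \Bbf})\Pibf_\Abf^\perp \xbf$, which is the first equality. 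For the second equality, I would invoke the standard closed-form projector formula applied to the matrix $\Pibf_\Abf^\perp \Bbf$, namely $\Pibf_{\Pibf_\Abf^\perp \Bbf} = (\Pibf_\Abf^\perp \Bbf)\bigl[(\Pibf_\Abf^\perp \Bbf)^H \Pibf_\Abf^\perp \Bbf\bigr]^{-1}(\Pibf_\Abf^\perp \Bbf)^H$, and substitute.

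The only nontrivial technical point is justifying invertibility of $(\Pibf_\Abf^\perp \Bbf)^H \Pibf_\Abf^\perp \Bbf$, which requires $\Pibf_\Abf^\perp \Bbf$ to have full column rank. The tall-matrix hypothesis on $[\Abf,\Bbf]$ supplies exactly this: if some nontrivial combination of columns of $\Bbf$ were annihilated by $\Pibf_\Abf^\perp$, that combination would lie entirely in $\Cc(\Abf)$, contradicting linear independence of the columns of $[\Abf,\Bbf]$. Beyond this routine bookkeeping, the proof is essentially the standard two-step orthogonal decomposition identity and I do not expect any deeper obstacle.
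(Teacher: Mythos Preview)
Your argument is correct, but it differs from the paper's proof. The paper proceeds by brute-force expansion: it writes $\Pibf_{[\Abf,\Bbf]}^\perp = \Ibf - [\Abf~\Bbf]\,\bigl([\Abf~\Bbf]^H[\Abf~\Bbf]\bigr)^{-1}[\Abf~\Bbf]^H$, applies the block matrix inversion formula to the $2\times 2$ block Gram matrix, and then algebraically collapses the resulting four-term expansion into $\Pibf_\Abf^\perp - \Pibf_\Abf^\perp\Bbf\bigl[(\Pibf_\Abf^\perp\Bbf)^H\Pibf_\Abf^\perp\Bbf\bigr]^{-1}(\Pibf_\Abf^\perp\Bbf)^H$; only at the very end does it insert $\xbf = \Pibf_\Abf\xbf + \Pibf_\Abf^\perp\xbf$ to recover the factored form $(\Ibf - \Pibf_{\Pibf_\Abf^\perp\Bbf})\Pibf_\Abf^\perp\xbf$. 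Your route via the orthogonal decomposition $\Cc([\Abf,\Bbf]) = \Cc(\Abf)\oplus\Cc(\Pibf_\Abf^\perp\Bbf)$ and the identity $\Pibf_{V_1\oplus V_2} = \Pibf_{V_1}+\Pibf_{V_2}$ for orthogonal summands is more conceptual and bypasses the matrix algebra entirely; it also makes the geometric content of the lemma (sequential projection) immediately visible, whereas the paper's computation obscures it until the last line. The paper's approach, on the other hand, is self-contained at the matrix level and does not appeal to any subspace facts beyond the definition of a projector. Your remark on invertibility of $(\Pibf_\Abf^\perp\Bbf)^H\Pibf_\Abf^\perp\Bbf$ is a point the paper leaves implicit (it simply writes the inverse), so your treatment is in that respect slightly more careful.
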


{\em Proof}: See Appendix B.

Lemma  \ref{lemma:sop} states that the projection of $\xbf$ onto the orthogonal space of $\Cc([\Abf,\Bbf])$ can be accomplished in two steps first by projecting $\xbf$ onto the orthogonal space of $\Cc(\Abf)$  and then by projecting this projected vector onto the orthogonal space of $\Cc(\Pibf_\Abf^\perp \Bbf)$ (not $\Cc(\Bbf)$), as illustrated in Fig. \ref{fig:phiDef}(b).   By successively applying Lemma  \ref{lemma:sop}, we can obtain $\Pibf^\bot_{[\Abf_1,\Abf_2,\cdots,\Abf_n]} \xbf$ in a successive manner, where $[\Abf_1,\cdots,\Abf_n]$ is a tall matrix. That is, we first project $\xbf$ onto $\Cc^\perp(\Abf_1)$ to obtain $\Pibf_{\Abf_1}^\perp\xbf$, and project the subspace matrices $\Abf_2,\Abf_3,\cdots,\Abf_n$ onto $\Cc^\perp(\Abf_1)$ to obtain $\Pibf_{\Abf_1}^\perp \Abf_2, \Pibf_{\Abf_1}^\perp \Abf_3, \cdots, \Pibf_{\Abf_1}^\perp \Abf_n$. Then, we project  $\Pibf_{\Abf_1}^\perp\xbf$ onto $\Cc^\perp(\Pibf_{\Abf_1}^\perp \Abf_2)$ to obtain $(\Ibf - \Pibf_{\Pibf_{\Abf_1}^\perp \Abf_2})\Pibf_{\Abf_1}^\perp\xbf $, and also project $\Pibf_{\Abf_1}^\perp \Abf_3, \cdots, \Pibf_{\Abf_1}^\perp \Abf_n$ onto $\Cc^\perp(\Pibf_{\Abf_1}^\perp \Abf_2)$ to obtain $(\Ibf - \Pibf_{\Pibf_{\Abf_1}^\perp \Abf_2})\Pibf_{\Abf_1}^\perp \Abf_3, \cdots, (\Ibf - \Pibf_{\Pibf_{\Abf_1}^\perp \Abf_2})\Pibf_{\Abf_1}^\perp \Abf_n$. Then, we project $(\Ibf - \Pibf_{\Pibf_{\Abf_1}^\perp \Abf_2})\Pibf_{\Abf_1}^\perp\xbf $ onto $\Cc^\perp((\Ibf - \Pibf_{\Pibf_{\Abf_1}^\perp \Abf_2})\Pibf_{\Abf_1}^\perp \Abf_3)$, and project the remaining  subspace matrices  $(\Ibf - \Pibf_{\Pibf_{\Abf_1}^\perp \Abf_2})\Pibf_{\Abf_1}^\perp \Abf_4, \cdots, (\Ibf - \Pibf_{\Pibf_{\Abf_1}^\perp \Abf_2})\Pibf_{\Abf_1}^\perp \Abf_n$ correspondingly. We continue this process until step $n$ is reached. Then, this gives us $\Pibf^\bot_{[\Abf_1,\Abf_2,\cdots,\Abf_n]} \xbf$.

Algorithm \ref{al:algoirhtm1} tries to find single-user groups first (line 6). If the algorithm cannot find any single-user group further, it increases the number of users in group to two (lines 17 and 18), and tries to find two-user groups. It continues this process until $n_g$ becomes $K$ (line 9). Suppose that no group is found up to $n_g = K-1$. Then, at $n_g=K$, one argument in $\theta(\cdot,\cdot)$ in   \eqref{eq:Algor1theta} becomes an empty matrix, $\theta$ becomes zero by the definition \eqref{eq:thetaDef}, and hence the condition \eqref{eq:Algor1theta} is satisfied. Thus, in this case the whole set $\{1,2,\cdots,K\}$ becomes a single group. Let us explain Algorithm \ref{al:algoirhtm1} by using a specific example  below:

\begin{example}  \label{example:SOP}
Suppose that initial $\Kc=\{1,2,3,4,5,6,7\}$ and  suppose that initially user $1$ satisfies $\theta([\hbf_2,\cdots,\hbf_7],\hbf_1) \leq \theta^{th}$. Then, we update $\Gc_1 = \{1\}$ (line 13) and   $\Kc=\{2,3,4,5,6,7\}$ (line 14), and project the channel vectors $\hbf_2,\cdots,\hbf_7$ onto $\Cc^\perp([\hbf_1])$ to obtain the projected channel vectors $\Pibf_{\hbf_1}^\perp \hbf_2, \cdots,$ $\Pibf_{\hbf_1}^\perp \hbf_7$ (line 15).  Next, suppose that $\theta([\Pibf_{\hbf_1}^\perp \hbf_3,\cdots, \Pibf_{\hbf_1}^\perp \hbf_7], \Pibf_{\hbf_1}^\perp \hbf_2) \le \theta^{th}$. (Note that at this point we compute $\theta(\cdot,\cdot)$ using the {\em projected} channels (lines 15 and 16).)  Then, we update $\Gc_2 =\{2\}$ and $\Kc=\{3,4,5,6,7\}$ and project $\Pibf_{\hbf_1}^\perp \hbf_3, \cdots, \Pibf_{\hbf_1}^\perp \hbf_7$ onto $\Cc^\perp(\Pibf_{\hbf_1}^\perp\hbf_2)$ to obtain the further projected channels $(\Ibf - \Pibf_{\Pibf_{\hbf_1}^\perp \hbf_2})\Pibf_{\hbf_1}^\perp \hbf_3, \cdots, (\Ibf - \Pibf_{\Pibf_{\hbf_1}^\perp \hbf_2})\Pibf_{\hbf_1}^\perp \hbf_7$.  Now, suppose that we cannot find a single-user group further and that at $n_g=2$ only one pair of users $\{3,4\}$ satisfies $\theta ( [(\Ibf - \Pibf_{\Pibf_{\hbf_1}^\perp \hbf_2})\Pibf_{\hbf_1}^\perp \hbf_5, \cdots, (\Ibf - \Pibf_{\Pibf_{\hbf_1}^\perp \hbf_2})\Pibf_{\hbf_1}^\perp \hbf_7] ,[(\Ibf - \Pibf_{\Pibf_{\hbf_1}^\perp \hbf_2})\Pibf_{\hbf_1}^\perp \hbf_3,  (\Ibf - \Pibf_{\Pibf_{\hbf_1}^\perp \hbf_2})\Pibf_{\hbf_1}^\perp \hbf_4]  ) \le \theta^{th}$. Then, we update $\Gc_3=\{3,4\}$ and  $\Kc=\{5,6,7\}$, and  the further projected channels for users $\{5,6,7\}$ are obtained by projecting $(\Ibf - \Pibf_{\Pibf_{\hbf_1}^\perp \hbf_2})\Pibf_{\hbf_1}^\perp \hbf_5, \cdots, (\Ibf - \Pibf_{\Pibf_{\hbf_1}^\perp \hbf_2})\Pibf_{\hbf_1}^\perp \hbf_7$ onto $\Cc^\perp([(\Ibf - \Pibf_{\Pibf_{\hbf_1}^\perp \hbf_2})\Pibf_{\hbf_1}^\perp \hbf_3,  (\Ibf - \Pibf_{\Pibf_{\hbf_1}^\perp \hbf_2})\Pibf_{\hbf_1}^\perp \hbf_4])$. These final projected channels for users $\{5,6,7\}$ are the same as the ZF projected channels $\Pibf_{[\hbf_1,\cdots,\hbf_4]}^\perp \hbf_5, \cdots,$ $\Pibf_{[\hbf_1,\cdots,\hbf_4]}^\perp \hbf_7$ by Lemma  \ref{lemma:sop}. At the next iteration, $n_g$ becomes 3 since we assumed that there is no further two-user group;  one argument of $\theta(\cdot,\cdot)$ becomes an empty matrix since $\Kc=\{5,6,7\}$ and $n_g=3$; hence $\Gc_4=\{5,6,7\}$;  no user is left in the candidate set $\Kc$ after update (line 14); no further channel projection in line 15 occurs since updated $\Kc=\emptyset$; and the algorithm stops.

Now, let us consider the norm  property of the projected ZF channels associated with the constructed groups in the example, which is the key aspect of the proposed user grouping algorithm. Consider user 1 in  firstly-constructed $\Gc_1$.  Since $\theta([\hbf_2,\cdots,\hbf_7],\hbf_1) \leq \theta^{th}$, by the definition of $\theta(\cdot,\cdot)$ in \eqref{eq:thetaDef}, we have
\begin{equation} \label{eq:GroupingExplain1}
 \phi(\tilde{\Hbf}_1=[\hbf_2,\cdots,\hbf_7],\hbf_1) = \frac{\|\tilde{\Hbf}_1(\tilde{\Hbf}_1^H\tilde{\Hbf}_1)^{-1}\tilde{\Hbf}^H_1 \hbf_1\|^2}{\|\hbf_1\|^2} \le \theta^{th}
\end{equation}
Hence, we have
\begin{align*}
||\Pibf_{\tilde{\Hbf}_1}^\perp \hbf_1||^2 &= ||(\Ibf - \tilde{\Hbf}_1(\tilde{\Hbf}_1^H\tilde{\Hbf}_1)^{-1}\tilde{\Hbf}^H_1 )\hbf_1||^2 \\
&=(1- \phi(\tilde{\Hbf}_1,\hbf_1))||\hbf_1||^2~\mbox{by the Pythagorean theorem}\\
&\ge (1-\theta^{th})||\hbf_1||^2.
\end{align*}
Next, consider the norm of the ZF effective channel for User 2 in $\Gc_2$.  Due to the construction of $\Gc_1$ based on \eqref{eq:GroupingExplain1}, $\hbf_1$ and $\hbf_2$ satisfy the following:
\begin{align}
||\Pibf_{\hbf_1}^\perp \hbf_2||^2 &= (1-\phi(\hbf_1,\hbf_2))||\hbf_2||^2 \nonumber \\
&\ge(1- \phi(\tilde{\Hbf}_1,\hbf_1))||\hbf_2||^2, ~~\mbox{since $\tilde{\Hbf}_1$ includes $\hbf_2$} \nonumber\\
&\ge (1-\theta^{th})||\hbf_2||^2. \label{eq:GroupingExaplain2}
\end{align}
By Lemma \ref{lemma:sop}, $\Pibf_{[\hbf_1,\hbf_3,\cdots,\hbf_7]}^\perp\hbf_2$ can be obtained by sequential orthogonal projection as
\[
\Pibf_{[\hbf_1,\hbf_3,\cdots,\hbf_7]}^\perp\hbf_2=(\Ibf - \Pibf_{[\Pibf_{\hbf_1}^\perp\hbf_3,\cdots,\Pibf_{\hbf_1}^\perp\hbf_7]})   \Pibf_{\hbf_1}^\perp\hbf_2,
\]but $\Gc_2$ was constructed such that $\Pibf_{\hbf_1}^\perp\hbf_2$ and $[\Pibf_{\hbf_1}^\perp\hbf_3,\cdots,\Pibf_{\hbf_1}^\perp\hbf_7]$ satisfy the threshold $\theta^{th}$ requirement.  Combining this fact and \eqref{eq:GroupingExaplain2}, we have
\begin{align*}
||\Pibf_{[\hbf_1,\hbf_3,\cdots,\hbf_7]}^\perp\hbf_2||^2 &\ge(1- \theta^{th})^2||\hbf_2||^2.
\end{align*}
Then, consider  User 3 in $\Gc_3=\{3,4\}$. (The same applies to User 4 in $\Gc_3$.) By Lemma \ref{lemma:sop}, we have
\begin{align}
\Pibf_{[\hbf_1,\hbf_2]}^\perp\hbf_3&=  (\Ibf - \Pibf_{\Pibf^\perp_{\hbf_1}\hbf_2}) \Pibf_{\hbf_1}^\perp \hbf_3  \label{eq:exampleu31}\\
\Pibf_{[\hbf_1,\hbf_2,\hbf_5,\hbf_6,\hbf_7]}^\perp\hbf_3 &= (\Ibf - \Pibf_{[\Pibf_{[\hbf_1,\hbf_2]}^\perp\hbf_5,\Pibf_{[\hbf_1,\hbf_2]}^\perp\hbf_6,\Pibf_{[\hbf_1,\hbf_2]}^\perp\hbf_7]}) \Pibf_{[\hbf_1,\hbf_2]}^\perp\hbf_3. \label{eq:exampleu32}
\end{align}
In \eqref{eq:exampleu31}, $\Gc_1=\{1\}$ was constructed such that $\hbf_1$ and $\hbf_3$ satisfy the angle constraint, and $\Gc_2=\{2\}$ was constructed such that  $\Pibf^\perp_{\hbf_1}\hbf_2$ and  $\Pibf_{\hbf_1}^\perp \hbf_3$ satisfy the angle constraint. Hence,   we have $||\Pibf_{\hbf_1}^\perp \hbf_3||^2 \ge (1-\theta^{th})^2 ||\hbf_3||^2$. Furthermore, in \eqref{eq:exampleu32}, $\Gc_3=\{3,4\}$ was constructed such that
$[\Pibf_{[\hbf_1,\hbf_2]}^\perp\hbf_3,\Pibf_{[\hbf_1,\hbf_2]}^\perp\hbf_4]$ and the remaining $[\Pibf_{[\hbf_1,\hbf_2]}^\perp\hbf_5,\Pibf_{[\hbf_1,\hbf_2]}^\perp\hbf_6,\Pibf_{[\hbf_1,\hbf_2]}^\perp\hbf_7]$ satisfy the angle constraint. Combining these facts, we have
\begin{equation}
||\Pibf_{[\hbf_1,\hbf_2,\hbf_5,\hbf_6,\hbf_7]}^\perp\hbf_k||^2 \ge (1-\theta^{th})^3||\hbf_k||^2, ~~k=3,4.
\end{equation}
Finally, consider the norm of the ZF effective channels $\Pibf_{[\hbf_1,\cdots,\hbf_4]}^\perp \hbf_5$, $\cdots$, $\Pibf_{[\hbf_1,\cdots,\hbf_4]}^\perp \hbf_7$ of the last group $\Gc_4=\{5,6,7\}$. These vectors are obtained by three sequential orthogonal projections based on Lemma  \ref{lemma:sop}, and at each projection stage the threshold $\theta^{th}$ was kept for group splitting. Hence, we have
\begin{align*}
||\Pibf_{[\hbf_1,\cdots,\hbf_4]}^\perp \hbf_k||^2 &\ge(1- \theta^{th})^3||\hbf_k||^2, ~k=5,6,7.
\end{align*}
\end{example}
Note that in general the proposed user grouping algorithm satisfies the following norm reduction property for the ZF effective channels:
\begin{equation}
||\gbf_i^{(j)}||^2=||\Pibf_{\tilde{\Hbf}_j}^\perp \hbf_i^{(j)}||^2 \ge (1-\theta^{th})^{N_g-1} ||\hbf_i^{(j)}||^2,
\end{equation}
where $\Pibf_{\tilde{\Hbf}_j}^\perp$ is the ZF projection matrix for group $\Gc_j$, $\hbf_i^{(j)}$ is the channel vector of User $i$ in group $\Gc_j$, and $N_g$ is the number of constructed groups, which is bounded by $K$.  Since the number  of antennas $N$ and the number of users  $K ~(\le N)$  are fixed in  our MISO BC model with superposition coding and SIC,  the lower bound $(1-\theta^{th})^{K-1} \in (0,1)$ of $(1-\theta^{th})^{N_g-1}$ is a constant.

Now, let us define a useful quantity for further exposition:  We define the degrees of freedom of a fading channel $\hbf$  as
\begin{equation}
d:= \lim_{x \rightarrow 0} \frac{  \log \mathrm{Pr} ( \|\hbf\|^2  \leq x )}{ \log x}.
\end{equation}
This quantity captures the behavior of the tail probability of the random variable $||\hbf||^2$ in its lower tail, and the degrees of freedom $d$ for $\hbf$ means that $\mathrm{Pr} ( \|\hbf\|^2  \leq x )$ behaves as $x^{d}+o(x^{d})$, as $x\rightarrow 0$. This quantity is directly related to the diversity order of the SISO communication  channel with the channel gain $||\hbf||$. For example, a Rayleigh fading channel $\hbf \sim \Cc({\mathbf{0}},2\Ibf_N)$ has the degrees of freedom $N$ since
\begin{equation}  \label{eq:PropDistibutionPr}
\mathrm{Pr} ( ||\hbf||^2 \le x) = \int_0^x  f_{\|\hbf\|^2}(z) dz = \frac{1}{2^N N!}x^N + o(x^N), ~\mbox{as}~x\rightarrow 0
\end{equation}
and
$\lim_{x \rightarrow 0} \frac{  \log \mathrm{Pr} ( \|\hbf\|^2  \leq x )}{ \log x} = N$,
where $f_{\|\hbf\|^2}(z) dz$ is given in \eqref{eq:ChiSquare2N}.
Finally, we provide the main statement of this subsection regarding the degrees of freedom of the ZF effective channels associated with the proposed grouping method in the following proposition:

\begin{proposition} \label{pro:distribution}
With the {}{mixture} transceiver architecture and the user grouping method in Algorithm \ref{al:algoirhtm1}, the  projected effective channel $\gbf_j^{(i)} = \Pibf_{\tilde{\Hbf}_j}^\bot \hbf_i^{(j)}$ in \eqref{eq:rxModel2} resulting from inter-group ZF beamforming has the same degrees of freedom as the original channel $\hbf_i^{(j)}$, i.e.,
\begin{equation}  \label{eq:PropDistMain}
    \lim_{x \rightarrow 0} \frac{  \log \mathrm{Pr} ( \|\gbf_i^{(j)}\|^2  \leq x )}{ \log x} = \lim_{x \rightarrow 0} \frac{  \log \mathrm{Pr} ( \|\hbf_i^{(j)}\|^2  \leq x )}{ \log x}, ~~\forall~i,j.
\end{equation}
\end{proposition}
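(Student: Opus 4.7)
The plan is to prove Proposition \ref{pro:distribution} via a deterministic sandwich on the effective channel norm combined with the explicit small-$x$ expansion of the chi-square tail from \eqref{eq:PropDistibutionPr}. Concretely, I will start from the norm reduction property that the user-grouping algorithm was engineered to satisfy, namely
\[
(1-\theta^{th})^{N_g-1}\|\hbf_i^{(j)}\|^2 \;\le\; \|\gbf_i^{(j)}\|^2 \;=\; \|\Pibf_{\tilde{\Hbf}_j}^\perp \hbf_i^{(j)}\|^2 \;\le\; \|\hbf_i^{(j)}\|^2,
\]
where the upper bound is the standard contraction property of an orthogonal projection, and the lower bound was derived in the discussion following Example \ref{example:SOP}. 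Since $N_g \le K$ deterministically (each group contains at least one of the $K$ users), I can absorb everything into a single constant $c := (1-\theta^{th})^{K-1} \in (0,1)$ that does not depend on the channel realization, obtaining the pathwise inequality $c\|\hbf_i^{(j)}\|^2 \le \|\gbf_i^{(j)}\|^2 \le \|\hbf_i^{(j)}\|^2$.

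Next I will translate this pathwise sandwich into a sandwich on distribution functions. The two inclusions
\[
\{\|\hbf_i^{(j)}\|^2 \le x\} \;\subseteq\; \{\|\gbf_i^{(j)}\|^2 \le x\} \;\subseteq\; \{\|\hbf_i^{(j)}\|^2 \le x/c\}
\]
follow immediately and give
\[
\mathrm{Pr}(\|\hbf_i^{(j)}\|^2 \le x) \;\le\; \mathrm{Pr}(\|\gbf_i^{(j)}\|^2 \le x) \;\le\; \mathrm{Pr}(\|\hbf_i^{(j)}\|^2 \le x/c).
\]
Plugging in the Rayleigh tail expansion \eqref{eq:PropDistibutionPr}, both outer bounds are of the form $\alpha x^N + o(x^N)$ as $x\to 0^+$, differing only by the constant factor $c^{-N}$. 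Taking logarithms, each outer bound equals $N\log x + O(1)$, so dividing by $\log x$ (which tends to $-\infty$) and letting $x\to 0^+$, the $O(1)$ term is annihilated and both outer limits equal $N$. The squeeze theorem then yields $\lim_{x\to 0}\log\mathrm{Pr}(\|\gbf_i^{(j)}\|^2\le x)/\log x = N$, which is exactly the right-hand side of \eqref{eq:PropDistMain}.

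The only step that requires real care is confirming that the pathwise norm reduction $\|\gbf_i^{(j)}\|^2 \ge (1-\theta^{th})^{N_g-1}\|\hbf_i^{(j)}\|^2$ holds for every realization of Algorithm \ref{al:algoirhtm1}, not just the sample execution traced in Example \ref{example:SOP}. I would formalize this by induction on the group index $j$: at stage $j$, Lemma \ref{lemma:sop} lets me write the full ZF projection $\Pibf_{\tilde{\Hbf}_j}^\perp\hbf_i^{(j)}$ as the composition of the $N_g-1$ sequential orthogonal projections performed by the algorithm on the previously formed groups, and at each such projection step the subspace angle criterion \eqref{eq:Algor1theta} combined with the Pythagorean identity guarantees a multiplicative norm loss of at most $(1-\theta^{th})$. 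Chaining these $N_g-1$ factors and invoking $N_g\le K$ yields the uniform constant $c$, and the rest of the argument is the elementary squeeze above. This structural induction is the only nontrivial piece; once it is in hand the degrees-of-freedom equality is immediate.
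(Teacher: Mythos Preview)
Your proposal is correct and follows essentially the same approach as the paper: the paper also invokes the norm reduction property $\|\gbf_i^{(j)}\|^2 = Y\|\hbf_i^{(j)}\|^2$ with $Y\in[(1-\theta^{th})^{K-1},1]$, and then applies the identical sandwich/squeeze argument (packaged there as a separate Lemma~\ref{eq:lemmaDistribution}) to transfer the degrees of freedom from $\|\hbf_i^{(j)}\|^2$ to $\|\gbf_i^{(j)}\|^2$. Your explicit mention of the inductive formalization of the norm reduction via Lemma~\ref{lemma:sop} is a welcome sharpening, since the paper asserts the general bound after Example~\ref{example:SOP} but does not spell out the induction.
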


{\em Proof:} See Appendix C.

{}{{\em Complexity of Algorithm \ref{al:algoirhtm1}}:  Note that in the worst case the number of group searches is given by $K+\left(\begin{array}{c}K\\2\end{array} \right)+ \left(\begin{array}{c}K\\3\end{array} \right) + \cdots + \left(\begin{array}{c}K\\K\end{array} \right)$, which scales as $K^{K/2}$. For each group search, we need to compute $\theta(\cdot,\cdot)$ in  \eqref{eq:Algor1theta}, which requires inversion of $K\times K$ matrices in the worst case (see  \eqref{eq:Algor1theta} and the term $(\Abf^H\Abf)^{-1}$ in \eqref{eq:phiAbfbbf}).  Thus, Algorithm \ref{al:algoirhtm1} is not scalable for large $K$. Nevertheless, the algorithm is devised to prove the diversity-order optimality of the mixture architecture in this paper. Invention of more efficient user grouping algorithms for the mixture architecture for MISO BCs is a future work.
For one possible idea {for polynomial complexity}, please see Appendix E.
}

{{\em SIC Complexity}: Since in the proposed adaptive user grouping, each group can have one to $K$ members, it is required that each receiver be able to handle  SIC of $K-1$ users in the worst case. SIC for a general number of users has been investigated extensively for code-division multiple access systems \cite{JileiHou}.}

\section{Outage Analysis and Diversity Order of The Mixture Scheme}
\label{sec:outageAnalysis}

In this section, we present our main result  regarding the diversity order of the mixture transceiver architecture for MISO BCs.

\begin{theorem}  \label{theo:MainResult}
For the Gaussian MISO BC with $N$ transmit antennas and $K$ single-antenna users with independent Rayleigh fading described in Section \ref{subsec:ChannelModel},  let the channels be ordered as $\|\mathbf{h}_1\|^2 \geq \|\mathbf{h}_2\|^2 \cdots\geq  \| \mathbf{h}_K\|^2$ and let the $k$-th user be the user with the $k$-th largest channel norm.
Then, the diversity order for the  $k$-th user achievable by  {}{the mixture transceiver architecture with proper user grouping} is given by
\begin{equation}
   D_k = N \times (K - k + 1).
\end{equation}
Here, the  diversity order is defined as
$D_k := \lim_{P_t \rightarrow \infty} - \frac{\log Pr\{ R_k < R^{th} \}}{\log P_t }$,
where $R_k$ is the rate of the $k$-th user and $R^{th}$ is a  rate threshold.
Note that $P_t \rightarrow \infty$ is equivalent to $\mathrm{SNR}=P_t/\sigma^2\rightarrow \infty$ since we set the noise variance  $\sigma^2=1$ for simplicity.
\end{theorem}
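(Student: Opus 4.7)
The plan is to sandwich $\Pr\{R_k < R^{th}\}$ between two quantities of order $P_t^{-N(K-k+1)}$. The achievability direction combines Proposition \ref{pro:proposition2} with the norm-reduction guarantee produced by Algorithm \ref{al:algoirhtm1} (formalized in Proposition \ref{pro:distribution}) and then invokes the lower-tail asymptotics of the $k$-th largest order statistic of i.i.d.\ chi-square variables; the converse follows from a single-user MRT bound.

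First, I would order the users so that $\|\hbf_1\|^2 \geq \cdots \geq \|\hbf_K\|^2$, run Algorithm \ref{al:algoirhtm1} to obtain the groups $\mathcal{G}_1,\ldots,\mathcal{G}_{N_g}$, and, inside every group with $L = |\mathcal{G}_j| > 1$, fix a feasible power-ratio tuple $(\delta_1,\ldots,\delta_L)\in\mathcal{D}$ satisfying
\[
\frac{\delta_i}{\sum_{m=1}^{i-1}\delta_m} > 2^{R^{th}} - 1, \qquad i = 2,\ldots,L,
\]
which is always possible via a geometric allocation that assigns enough weight to the weaker in-group users. This condition makes the asymptotic SINR ceiling in Proposition \ref{pro:proposition2} exceed $2^{R^{th}} - 1$, so that for user $k$ at any intra-group rank $i$, the inequality $R_k < R^{th}$ applied to \eqref{eq:propRate1}--\eqref{eq:propRatei} forces the SINR argument to stay bounded, which after elementary manipulation yields $\|\gbf_k\|^2 \leq C_1/P_t$ for a constant $C_1 > 0$ independent of $P_t$ (but dependent on $R^{th}$, $K$, $c$, and the chosen $\delta$-tuple).

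Next, I would transfer this conclusion to the physical channel. The sequential-projection bookkeeping following Example \ref{example:SOP} shows that Algorithm \ref{al:algoirhtm1} enforces $\|\gbf_k\|^2 \geq (1-\theta^{th})^{N_g-1}\|\hbf_k\|^2 \geq (1-\theta^{th})^{K-1}\|\hbf_k\|^2$, so an outage implies $\|\hbf_k\|^2 \leq C_2/P_t$ for some $C_2 > 0$. Since $\|\hbf_k\|^2$ is the $k$-th largest of $K$ i.i.d.\ chi-square variables with $2N$ degrees of freedom and each obeys $\Pr\{\|\hbf\|^2 \leq x\} = \Theta(x^N)$ by \eqref{eq:PropDistibutionPr}, the event $\{\|\hbf_k\|^2 \leq x\}$ is equivalent to the event that at least $K - k + 1$ of the $K$ norms fall below $x$; a binomial enumeration then yields
\[
\Pr\{\|\hbf_k\|^2 \leq x\} = \Theta\!\left(x^{N(K-k+1)}\right), \qquad x \to 0,
\]
so substituting $x = C_2/P_t$ gives $\Pr\{R_k < R^{th}\} = O(P_t^{-N(K-k+1)})$, hence $D_k \geq N(K-k+1)$.

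For the matching converse, any transmission strategy with total power $P_t$ gives user $k$ a rate at most the single-user MRT rate $\log_2(1+P_t\|\hbf_k\|^2)$ for its own channel, so
\[
\Pr\{R_k < R^{th}\} \geq \Pr\!\left\{\|\hbf_k\|^2 \leq \frac{2^{R^{th}}-1}{P_t}\right\} = \Theta\!\left(P_t^{-N(K-k+1)}\right),
\]
giving $D_k \leq N(K-k+1)$ by the same order-statistic calculation. The main obstacle lies in the first step: one must verify that the chosen $\delta$-tuple and the constant $c$ of Proposition \ref{pro:proposition2} stay bounded uniformly across every partition the algorithm may return, and that the intra-group ordering by $\|\gbf_i\|^2$ (which need not coincide with the global ordering by $\|\hbf_i\|^2$) does not invalidate the norm-reduction inequality for user $k$. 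The angle-threshold construction of Algorithm \ref{al:algoirhtm1} is exactly what controls these issues, but tracing the multiplicative constants through the worst-case sequential projection requires some care.
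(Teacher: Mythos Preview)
Your achievability direction is the paper's argument in condensed form: both invoke Proposition~\ref{pro:proposition2}, the norm-reduction guarantee of Algorithm~\ref{al:algoirhtm1} (Proposition~\ref{pro:distribution}), and the lower-tail asymptotics of the $k$-th order statistic. The paper decomposes $\Pr\{R_k<R^{th}\}$ explicitly over group index $j$, group cardinality $\ell$, and intra-group rank $i$, bounding each conditional outage term separately via Lemma~\ref{lem:ExistDeltasForTheo1}; you instead collapse all cases into the single event inclusion $\{R_k<R^{th}\}\subseteq\{\|\hbf_k\|^2\le C_2/P_t\}$ by taking the worst constant over the finitely many $(L,i)$ pairs. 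The obstacle you flag about intra-group versus global ordering is not a real issue, since the bound for rank $i$ in Proposition~\ref{pro:proposition2} depends only on that user's own $\|\gbf_i\|^2$, and the inequality $\|\gbf_k\|^2\ge(1-\theta^{th})^{K-1}\|\hbf_k\|^2$ holds user by user regardless of how the in-group ranks line up with the global ranks.

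The genuine difference is the converse. The paper lower-bounds $\Pr\{R_k<R^{th}\}$ by restricting to the event that user $k$ lands in a singleton group under Algorithm~\ref{al:algoirhtm1}, computing the singleton-group rate exactly, and invoking Proposition~\ref{pro:distribution} once more; this argument is internal to the mixture scheme and implicitly needs the singleton-group event to carry positive, $P_t$-independent probability. Your converse instead uses the information-theoretic ceiling $R_k\le\log_2(1+P_t\|\hbf_k\|^2)$, valid for \emph{any} transmit strategy with total power $P_t$, so the order-statistic tail of $\|\hbf_k\|^2$ yields the matching outage lower bound directly. This is shorter and a bit stronger: it shows that $N(K-k+1)$ is not merely the diversity order achieved by the mixture scheme but an upper bound on the diversity order for the $k$-th-ordered user under any scheme, which the paper's internal argument does not establish.
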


\begin{proof} {}{For user grouping of the mixture architecture we adopt Algorithm \ref{al:algoirhtm1}.} The diversity provided by such a grouping will provide an achievable bound for the diversity as claimed in the theorem.  Proof is based on Propositions \ref{pro:proposition2} and   \ref{pro:distribution}.
In proof, we consider not only the distribution of the channel norm itself but also the order statistics resulting from the channel norm ordering.
    With the descending channel ordering $\|\mathbf{h}_1\|^2 \geq \|\mathbf{h}_2\|^2 \cdots\geq  \| \mathbf{h}_K\|^2 $,  the pdf of the $k$-th channel norm square  is given by order statistics as
\begin{equation}
        f_{\|\hbf_k\|^2}(x) =  \frac{K!}{(k-1)! (K-k)!} [F_{\|\hbf\|^2}(x)]^{K-k}[1- F_{\|\hbf\|^2}(x)]^{k-1} f_{\|\hbf\|^2}(x) \label{eq:pro2}
\end{equation}
where $f_{\|\hbf\|^2}(\cdot)$ and $F_{\|\hbf\|^2}$ are the pdf and cumulative distribution function (cdf) of chi-square distribution with degree of freedom $2 N$:
\begin{align}
            f_{\|\hbf\|^2}(x) &= \frac{1}{2^{N} (N - 1)!} x^{N-1} e^{-x/2}  = \frac{1}{2^{N} N!} x^{N - 1} + o( x^{N - 1}), \quad \mbox{as}~ x \rightarrow 0 \label{eq:pro1}\\
            F_{\|\hbf\|^2}(x) &= \frac{1}{2^{N} N!}   x^{N} + o( x^{N}), \quad \mbox{as}~ x \rightarrow 0.
\end{align}
Hence, we have for the $k$-th largest channel norm square $||\hbf_k||^2$
\begin{equation} \label{eq:degree_orderHbfk1}
f_{||\hbf_k||^2}(x) = c_k x^{N(K-k+1)-1} + o(x^{N(K-k+1)-1}), \quad \mbox{as}~ x \rightarrow 0,
\end{equation}
and thus
\begin{equation}\label{eq:degree_orderHbfk2}
        \lim_{x \rightarrow 0} \frac{ \log \mathrm{Pr} (\|\hbf_k\|^2 \leq x ) }{\log x} = N(K -k + 1).
\end{equation}
 The outage probability of the $k$-th user is expressed as
 \begin{align}
            \mathrm{Pr}(R_k < R^{th}) &= \sum_{j=1}^{N_g} \left.\left[\mathrm{Pr}( k \in \Gc_j )  \cdot  \mathrm{Pr} \left( R_k < R^{th} \right|  k \in \Gc_j \right)\right]  \\
                             &= \sum_{j=1}^{N_g} \left[ \mathrm{Pr}( k \in \Gc_j ) \cdot  \left\{ \mathrm{Pr}\left(|\Gc_j|  = 1 ~|~k \in \Gc_j\right) \cdot   \mathrm{Pr}\left(R_k < R^{th} ~|~ |\Gc_j|=1, k \in \Gc_j \right) \right.  \right.  \nonumber\\
            & \quad \quad \quad   \left. \left. +  \mathrm{Pr}( |\Gc_j| \neq 1 ~|~ k \in \Gc_j ) \cdot \mathrm{Pr}\left(R_k < R^{th} ~ | ~  |\Gc_j| \neq 1 ,k \in \Gc_j \right)  \right\} \right]. \label{eq:ProofTheo1_1}
 \end{align}

{\em i) Lower bound on the outage probability:}  We obtain a lower bound on the outage probability by considering only the event that the $k$-th user belongs to a group with cardinality one, i.e., the first term in the RHS of \eqref{eq:ProofTheo1_1}.
\begin{align}
            \mathrm{Pr}(R_k < R^{th}) &\geq \sum_{j=1}^{N_g} \left. \mathrm{Pr}( k \in \Gc_j )  \cdot    \mathrm{Pr}\left(|\Gc_j \right.|  = 1 ~|~k \in \Gc_j\right) \cdot   \left.\mathrm{Pr}\left(R_k < R^{th} ~ \right| ~ |\Gc_j|=1, k \in \Gc_j \right)   \nonumber \\
                              &= \sum_{j=1}^{N_g} \mathrm{Pr}\left( | \Gc_j | = 1 , k \in \Gc_j\right) \cdot   \left.\mathrm{Pr}\left(R_k < R^{th} ~ \right| ~ |\Gc_j|=1, k \in \Gc_j \right)   \nonumber  \\
                              &=  \sum_{j=1}^{N_g}  \mathrm{Pr}\left( | \Gc_j | = 1 , k \in \Gc_j\right) \cdot \mathrm{Pr}\left(\|\Pibf^{(j)} \hbf_k\|^2 < K \cdot (2^{R^{th}} -1) \cdot P_t^{-1} \right), \label{eq:Theo1Proof78}
\end{align}
where  \eqref{eq:Theo1Proof78} holds due to  the rate $R_k = \log( 1 + P_t ||\Pibf^{(j)} \hbf_k||^2/K)$ for a single-user group based on \eqref{eq:received_1user}  and the corresponding optimal beam. 
Then, we have
\begin{align}
         -D_k  &=\lim_{P_t \rightarrow \infty} \frac{ \log \mathrm{Pr}(R_k < R^{th})}{\log P_t} \label{eq:lower_start} \\
           &  \geq
           \lim_{P_t \rightarrow \infty} \frac{\log \left(\sum_{j=1}^{N_g} \left[ \mathrm{Pr}\left( | \Gc_j | = 1 , k \in \Gc_j\right) \cdot \mathrm{Pr}\left(\|\Pibf^{(j)} \hbf_k\|^2 < K \cdot (2^{R^{th}} -1) \cdot P_t^{-1} \right) \right]\right) }{ \log P_t} \\
             &  =
           \lim_{P_t^{-1} \rightarrow 0} \frac{\log \left(\sum_{j=1}^{N_g} \left[ \mathrm{Pr}\left( | \Gc_j | = 1 , k \in \Gc_j\right) \cdot \mathrm{Pr}\left(\|\Pibf^{(j)} \hbf_k\|^2 < K \cdot (2^{R^{th}} -1) \cdot P_t^{-1} \right) \right]\right) }{ -\log P_t^{-1}} \\
           &  = - N (K -k + 1). \label{eq:lower_end}
\end{align}
Here, \eqref{eq:lower_end} is valid because $||\hbf_k||^2$ has the channel order $N(K-k+1)$ by \eqref{eq:degree_orderHbfk1}
and \eqref{eq:degree_orderHbfk2};  the projected effective channel $\|\Pibf^{(j)} \hbf_k\|^2$ has the same channel order as $||\hbf_k||^2$ by Proposition \ref{pro:distribution}; and the linear combination of terms with the same order has the same order as each term. Note that $\mathrm{Pr}\left( | \Gc_j | = 1 , k \in \Gc_j\right)$ depends only on the joint distribution of $(\hbf_1,\cdots,\hbf_k)$ for the given user grouping algorithm not on the power $P_t$.

{\em ii) Upper bound on the outage probability:}

For the upper bound, we need to include the second term in the RHS of \eqref{eq:ProofTheo1_1} in addition to the first term in the RHS of \eqref{eq:ProofTheo1_1} considered in the lower bound. The second term in the RHS of \eqref{eq:ProofTheo1_1} is given by
\begin{align}
           & \sum_{j=1}^{N_g}  \mathrm{Pr} ( k \in \Gc_j )  \cdot    \mathrm{Pr} \left(|\Gc_j |  \neq 1 ~|~k \in \Gc_j\right) \cdot   \mathrm{Pr}\left(R_k < R^{th} ~ | ~ |\Gc_j| \neq 1, k \in \Gc_j \right)    \\
                              &\quad = \sum_{j=1}^{N_g}  \mathrm{Pr} \left( | \Gc_j | \neq 1 , k \in \Gc_j\right) \cdot   \mathrm{Pr} \left(R_k < R^{th} ~ | ~ |\Gc_j| \neq 1, k \in \Gc_j \right)
\end{align}
\begin{align}
                              &\quad = \sum_{j=1}^{N_g}\sum_{\ell =2}^{K} \mathrm{Pr} \left( | \Gc_j | = \ell , k \in \Gc_j\right) \cdot    \underbrace{\mathrm{Pr} \left(R_k < R^{th} ~ \left. \right| ~ |\Gc_j| = \ell, k \in \Gc_j \right)}_{(a)} .  \label{eq:Theo1Upper1}
\end{align}
Define the following notations:
\begin{align}
            E_{k,j,i} &:= \mbox{Event that the }  k\mbox{-th user is the $i$-th largest channel norm user in } \Gc_{j} \\
            P_{k,j,i} &:= \mathrm{Pr} (E_{k,j,i} ).
\end{align}
With these notations, the term (a) in \eqref{eq:Theo1Upper1} can be rewritten as
\begin{align}
\left.\mathrm{Pr}\left(R_{k} < R^{th} ~ \right| ~ |\Gc_j| = \ell , k \in \Gc_j \right) &= \sum_{i=1}^\ell P_{k,j,i} \cdot \left.\mathrm{Pr}\left(R_k < R^{th} ~ \right| ~ |\Gc_j| = \ell , k \in \Gc_j ,E_{k,j,i}  \right), \label{equaiton_conti}
\end{align}
where $R_{k}$ conditioned on the joint event $(|\Gc_j| = \ell , k \in \Gc_j ,E_{k,j,i})$ is lower bounded by Proposition \ref{pro:proposition2} as
\begin{align}
            R_{k}  &\geq \left\{   \begin{array}{ll}
            \log_2 \left( 1 + \frac{1}{c} \delta_1^{(j)}\|\Pibf^{(j)} \hbf_k\|^2  \frac{\ell P_t}{K}  \right) & \quad \mbox{if} ~ i = 1 \\
            \log_2 \left( 1 + \frac{\delta_i^{(j)}}{\sum_{m=1}^{i-1} \delta_{m}^{(j)}} \frac{1}{1 + \left(\frac{1}{c} \|\Pibf^{(j)} \hbf_k\|^2\sum_{m=1}^{i-1} \delta_{m}^{(j)} \frac{\ell P_t}{K} \right)^{-1}}\right) & \quad  \mbox{if} ~ i = 2 \cdots \ell.
            \end{array}
            \right.
\end{align}
where $c$ is given in \eqref{eq:Prop1constanC}, and $(\delta_1^{(j)}, \delta_2^{(j)}, \cdots \delta_\ell^{(j)})$ is the power ratio-tuple in group $\Gc_j$, i.e., power $\delta_i^{(j)}\ell P_t /K$ is assigned to User $i$ in group $\Gc_j$. ($\ell P_t/K$ is the total group power for group $\Gc_j$ with $|\Gc_j|=\ell$.) Therefore, the probability \eqref{equaiton_conti} is upper bounded as
\begin{align}
& \sum_{i=1}^\ell \left[P_{k,j,i} \cdot \left.\mathrm{Pr}\left(R_{k} < R^{th} ~ \right| ~ |\Gc_j| = \ell , k \in \Gc_j, E_{k,j,i}  \right) \right] \label{eq:TheoProof98} \\
&\quad \leq P_{k,j,1} \cdot \mathrm{Pr}\left(\log_2 \left( 1 + \frac{1}{c} \delta_1^{(j)}\|\Pibf^{(j)} \hbf_k\|^2  \frac{ \ell P_t}{K}  \right) < R^{th}\right) \nonumber \\
& \quad \quad + \sum_{i=2}^\ell \left[P_{k,j,i} \cdot \Pr \left( \log_2 \left( 1 + \frac{\delta_{i}^{(j)}}{\sum_{m=1}^{i-1} \delta_{m}^{(j)}} \frac{1}{1 + \left(\frac{1}{c} \|\Pibf^{(j)} \hbf_k\|^2\sum_{m=1}^{i-1} \delta_{m}^{(j)} \ell P_t/K \right)^{-1}}\right)< R^{th}\right) \right] \label{eq:Theo1proof99}\\
&\quad = P_{k,j,1} \cdot \Pr\left(\|\Pibf^{(j)} \hbf_k\|^2  < (2^{R^{th}} - 1) \cdot \frac{c}{\delta_1^{(j)}} \cdot  \frac{K}{\ell} P_t^{-1}\right) \nonumber \\
& \quad \quad + \sum_{i=2}^\ell \left[P_{k,j,i} \cdot \Pr \left( \|\Pibf^{(j)} \hbf_k\|^2 <  c \left( \frac{\delta_{i}^{(j)}}{2^{R^{th}}-1} - \sum_{m=1}^{i-1} \delta_{m}^{(j)} \right)^{-1}\cdot  \frac{K}{l} \cdot P_t^{-1} \right) \right], \label{outage_morethan2}
\end{align}
where the threshold for $\|\Pibf^{(j)} \hbf_k\|^2$ in the second term in \eqref{outage_morethan2} is obtained by  manipulation of the second term in \eqref{eq:Theo1proof99}.
By Lemma \ref{lem:ExistDeltasForTheo1} in Appendix D, there always exists a collection of in-group power distribution factors $(\delta_1^{(j)},\cdots,\delta_\ell^{(j)})$ such that $( \frac{\delta_{i}^{(j)}}{2^{R^{th}}-1} -$ $ \sum_{m=1}^{i-1} \delta_{m}^{(j)} )$ in  \eqref{outage_morethan2} is strictly positive for all $i=2,\cdots, \ell$. Set $(\delta_1^{(j)},\cdots,\delta_\ell^{(j)})$ as one of such collections.
Then, each probability term in  \eqref{outage_morethan2} behaves as $P_t^{-N(K-k+1)}$ as $P_t\rightarrow \infty$, since  $\|\Pibf^{(j)} \hbf_k\|^2$ has the same degrees of freedom of $N(K-k+1)$ in \eqref{eq:degree_orderHbfk1}
and \eqref{eq:degree_orderHbfk2}  as  $||\hbf_k||^2$ by Proposition \ref{pro:distribution}.  Hence, their linear combination \eqref{eq:TheoProof98} behaves as $P_t^{-N(K-k+1)}$ as $P_t\rightarrow \infty$, and furthermore the term \eqref{eq:Theo1Upper1} as a linear combination of terms  \eqref{eq:TheoProof98}   behaves as $P_t^{-N(K-k+1)}$ as $P_t\rightarrow \infty$.  Now, by adding \eqref{eq:Theo1Upper1} and the term in  \eqref{eq:Theo1Proof78}, we have the exact outage probability.
 We already showed that the term in  \eqref{eq:Theo1Proof78} behaves as $P_t^{-N(K-k+1)}$ as $P_t\rightarrow \infty$. Furthermore, the upper bound of \eqref{eq:Theo1Upper1} behaves as $P_t^{-N(K-k+1)}$ as $P_t\rightarrow \infty$. Hence, we have
 $-D_k  =\lim_{P_t \rightarrow \infty} \frac{ \log \mathrm{Pr}(R_k < R^{th})}{\log P_t} \le -N(K-k+1)$.
  Combining this upper bound result with the lower bound result, we have
\begin{equation}
             N (K - k + 1)    \leq   D_k= - \lim_{P_t \rightarrow \infty} \frac{\log \Pr(R_k < R^{th})}{\log P_t} \leq N (K - k + 1).
\end{equation}
\end{proof}

\begin{corollary}  \label{cor:systemDorder}
For the Gaussian MISO BC with $N$ transmit antennas and $K$ single-antenna users with independent Rayleigh fading described in Section \ref{subsec:ChannelModel},  the diversity order of the overall system  achievable by  the {}{mixture transceiver architecture with the proposed user grouping method} is given by
\begin{equation}
   D = N
\end{equation}
\end{corollary}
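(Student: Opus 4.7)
The plan is to derive the corollary directly from Theorem \ref{theo:MainResult} by identifying the system outage event with the union of the individual user outage events and then applying the standard argument that a finite sum/union of probabilities inherits the diversity order of its worst-decaying term.

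First, I would fix the system diversity order via
\begin{equation}
D := -\lim_{P_t \to \infty} \frac{\log \mathrm{Pr}(\bigcup_{k=1}^{K} \{R_k < R^{th}\})}{\log P_t},
\end{equation}
i.e., the whole system is declared in outage whenever at least one scheduled user fails to meet the target rate $R^{th}$. Under the channel ordering $\|\mathbf{h}_1\|^2 \geq \cdots \geq \|\mathbf{h}_K\|^2$, Theorem \ref{theo:MainResult} gives $D_k = N(K-k+1)$, so the $k$-th user's individual outage probability decays like $P_t^{-N(K-k+1)}$. The slowest decay corresponds to $k = K$ (the weakest user), giving exponent $N$.

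Next I would sandwich the system outage probability. For the upper bound I use the union bound
\begin{equation}
\mathrm{Pr}\Bigl(\bigcup_{k=1}^{K}\{R_k < R^{th}\}\Bigr) \leq \sum_{k=1}^{K} \mathrm{Pr}(R_k < R^{th}),
\end{equation}
and note that a finite sum of positive quantities that decay as $P_t^{-N(K-k+1)}$ is dominated by the slowest-decaying term, namely the $k = K$ term, so the sum behaves as $P_t^{-N}$. For the lower bound I use the trivial inclusion
\begin{equation}
\mathrm{Pr}\Bigl(\bigcup_{k=1}^{K}\{R_k < R^{th}\}\Bigr) \geq \mathrm{Pr}(R_K < R^{th}),
\end{equation}
which by Theorem \ref{theo:MainResult} also behaves as $P_t^{-N}$. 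Taking $-\log(\cdot)/\log P_t$ and letting $P_t \to \infty$ squeezes $D$ to exactly $N$.

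I do not anticipate any technical obstacle here: all the heavy lifting was already done in Theorem \ref{theo:MainResult} (which in turn leans on Propositions \ref{pro:proposition2} and \ref{pro:distribution}). The only subtlety worth flagging is the chosen convention for the system diversity order; if instead one defined $D$ via the average per-user outage or via the best user, the answer would change. Under the natural worst-user convention, which matches how reliability is assessed in URLLC contexts and is consistent with the rest of the paper, the corollary follows immediately, and the resulting $D = N$ matches the single-user MRT diversity order, confirming the \emph{full diversity} claim made in the abstract and introduction.
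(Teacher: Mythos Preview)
Your proposal is correct and is essentially the same argument the paper gives, just spelled out in more detail: the paper's two-sentence proof simply states that the overall outage probability is dominated by the worst-decaying term, which by Theorem~\ref{theo:MainResult} is the $k=K$ user with diversity order $N$. Your union-bound/lower-bound sandwich is the standard way to make that one-line claim precise, so there is no substantive difference in approach.
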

\begin{proof}
The decay rate of the overall outage probability is dominated by the worst decay rate.   The worst diversity order in Theorem  \ref{theo:MainResult} occurs  when $k=K$, and is given by $N$.
\end{proof}
Note that
the diversity order of the full ZF downlink beamforming is given by \cite{Chen&Ding&Dai:16Access}
\begin{equation}  \label{eq:diversityOrderZF}
D=N-K+1.
\end{equation}
Hence, a significant improvement in the diversity order is attained by the mixture scheme.
{}{Note that the possible maximum diversity order for user $k$ with channel $\hbf_k \sim \Cc\Nc(0,2\Ibf)$ is simply $N$.} Hence,
the mixture transceiver architecture  achieves the full diversity order $N$ in MISO BCs.

\subsection{Diversity and Multiplexing Trade-off}

With the cluster power factors $\{\delta_i > 0,i=1,\cdots,L\}$ fixed, as the total cluster power $P$ increases according to \eqref{eq:optimization_alter_cond} without bound,
 in each group only the rate of the first user scales as $\log$ SNR but the rates of all other users saturate to constants:
$\bar{R}_i^{(j)} = \log_2 \left( 1+ \frac{\delta_i^{(j)}}{\sum_{m=1}^{i-1}\delta_m^{(j)}} \right)$, $i=2,\cdots,L$, as seen in  \eqref{eq:propRatei}.\footnote{The target rates of User $2,\cdots, L$ for group $\Gc_j$ should be less than $\bar{R}_i^{(j)}$,  but $\delta_i$'s can be designed for a common target rate $R_{th}$ based on  \eqref{eq:deltaSolution}. Please see Section \ref{subsec:num_MG}.}
Hence, the multiplexing gain for one user group with superposition and SIC is one regardless of the number of users in the group. (A similar observation of multiplexing gain one per superposition-and-SIC user group was made in \cite{DingAdachiPoor16WC2}.)
Thus, the overall multiplexing gain of the mixture scheme with the adaptive user grouping is the same as the number of user groups $N_g$ which is less than or equal to $K(\le N)$. Note that in the case of $K=N$, the multiplexing gain of the ZF beamforming is $N$, whereas its diversity order is one.  Thus, diversity-order and multiplexing-gain trade-off known in single-user MIMO \cite{ZhengTse:03IT} occurs even in MISO BCs\cite{Mroueh08ISIT}.
In fact, it can be shown by replacing $L$ with $K$ in Proposition 1  and going through the proof of Theorem 1  with $N_g=1$ that the full diversity order  $N$ can be achieved by a single superposition-and-SIC group  containing all $K$ users without considering channel alignment and orthogonality at all. However, this single-group full superposition-and-SIC approach is not good since it  yields multiplexing gain one regardless of channel realization. This scheme can be considered as an antipodal scheme of the ZF beamforming in terms of diversity and multiplexing trade-off: The diversity order and multiplexing gain of the full superposition-and-SIC approach versus full ZF beamforming  are $(N,1)$ versus $(1,N)$ for MISO BCs with $K=N$.
On the other hand, the proposed user grouping method is adaptive and depends on the channel realization. The number of groups is not predetermined in the proposed user grouping method. The number of user groups can  be $K$ if all user channels are semi-orthogonal. The number of user groups can be one if all user channels are aligned.  Hence, the number $N_g$ of constructed user groups, i.e., the multiplexing gain of the mixture scheme with the adaptive user grouping method,  is adaptive to channel realization, while the full diversity order $N$ is always achieved.  So, we can view that the mixture scheme with such an adaptive user grouping method tries to opportunistically  increase the multiplexing gain while achieving the full diversity order.  Note that $N_g$ is a random variable under the assumption that $\hbf_k,k=1,\cdots,K$ are random, and it depends on the angle threshold between the user groups used in the adaptive user grouping algorithm. We were not able to compute an analytic form for the expectation of $N_g$ to evaluate the multiplexing gain loss as compared to the ZF beamforming, but a numerical assessment of the multiplexing gain loss as compared to the ZF beamforming is provided in Section V.

\section{Numerical Results}
\label{sec:NumericalResult}

In this section, we provide some numerical results to validate our theoretical analysis in the previous sections.  We considered the MISO BC described in Section \ref{subsec:ChannelModel}.
In each simulation scenario, we generated the $K$ channel vectors $\hbf_1, \cdots, \hbf_K$ of the system independently from the zero-mean complex Gaussian distribution $\mathcal{CN}(\mathbf{0}, 2 \Ibf)$
sufficiently many times to numerically compute outage probability.
For each channel realization, we ran the user grouping algorithm (Algorithm \ref{al:algoirhtm1}) with $\theta^{th} = 0.9$. With the constructed groups, we applied inter-group ZF beamforming and designed the intra-group beam vectors  according to the  constraint \eqref{eq:optimization_alter_cond}, i.e., $\wbf_1=\cdots=\wbf_L=\wbf^*$ with the solution  $\wbf^*$ to the max-min problem \eqref{eq:problem_SNR} used in the proof of Proposition \ref{pro:proposition2}.
The rate $R_k$ of the $k$-th user is obtained based on the designed beam vectors in this way. (Note that the beam vectors $\wbf_1=\cdots=\wbf_L=\wbf^*$ designed in this way yield rates larger than or equal to the lower bounds in
 \eqref{eq:propRate1} and \eqref{eq:propRatei}.)
For the intra-group beam design, the power distribution factors are chosen to satisfy the condition in Lemma \ref{lem:ExistDeltasForTheo1} in Appendix D.
The used values for  power distribution factors are  $(0.2,0.8)$ for every two-user group, $(0.05, 0.2, 0.75)$ for every three-user group in Figs. 2(a), 2(b) and 3(a), and are the solution of \eqref{eq:deltaSolution} with $C = 2$  in Fig. 3(b).  For computation of the outage probability  $\mathrm{Pr}( R_k \leq R_{th})$, we set the target rate threshold as $R_{th} = 1.5$ [bits/channel use] in all simulations.

\subsection{Diversity Order Considering Order Statistic}

\begin{figure}[t]
\centerline{ \SetLabels
\L(0.25*-0.1) (a) \\
\L(0.75*-0.1) (b) \\
\endSetLabels
\leavevmode
\strut\AffixLabels{
\scalefig{0.45}\epsfbox{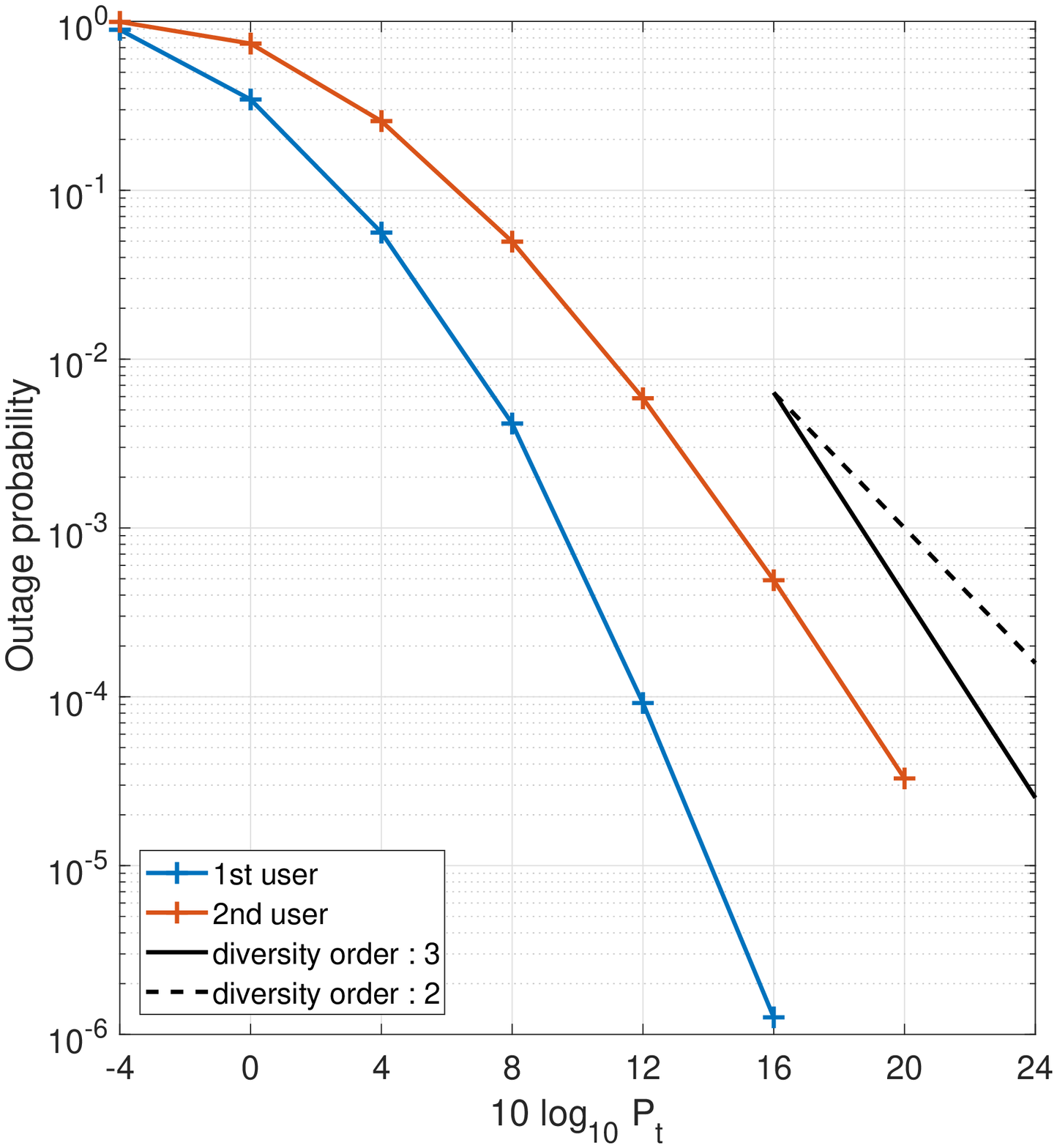}
\scalefig{0.45}\epsfbox{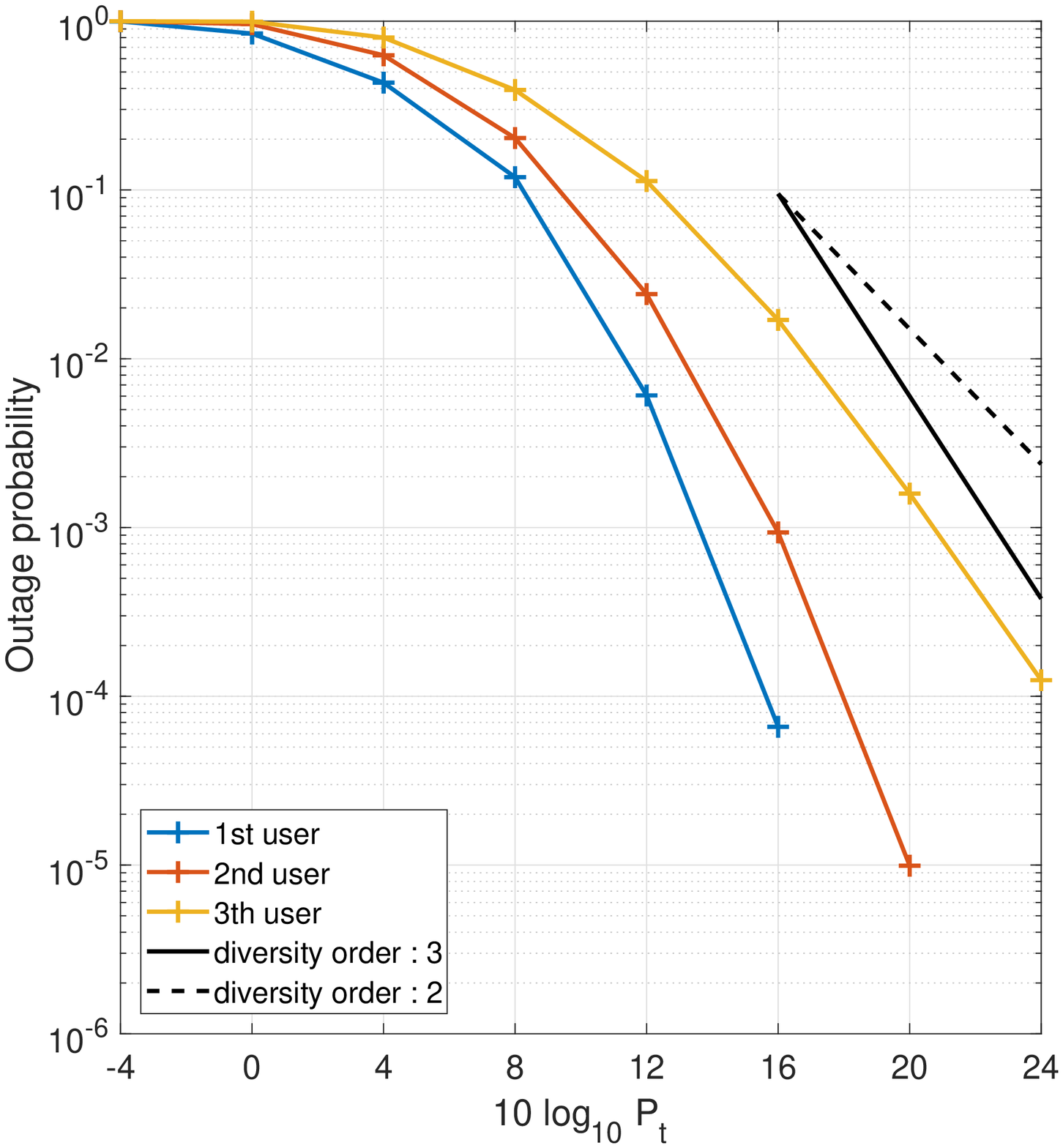}
} } \vspace{0.5cm} \caption{Outage probability  of the mixture transceiver architecture: (a) $N=3$, $K=2$
and (b) $N = 3$, $K = 3$} \label{fig:outage_probability}
\end{figure}

First, we numerically evaluated
the outage probability and diversity order of each user of the mixture transceiver architecture with considering channel norm ordering. Fig. \ref{fig:outage_probability} shows the outage probability of the mixture transceiver architecture in two cases: (a) $N=3$, $K=2$ and (b) $N=3$ and $K=3$, where
User $k$ is defined as the user with the $k$-th largest channel norm (i.e. $\|\hbf_1 \|^2 \geq \|\hbf_2 \|^2 \geq
\cdots \geq \|\hbf_K \|^2$).   In the case (a) of $N=3,K=2$, Theorem \ref{theo:MainResult} states that the diversity orders of Users 1 and 2 are 6 and 3, respectively.
It is seen in Fig.  \ref{fig:outage_probability}(a)  that the outage probability of User 2 has the slope corresponding to diversity order of 3, as SNR increases. It is also seen that the decay rate of User 1 is almost twice that of User 2. (In $\log_{10}$ y-scale, roughly User 1 has -4 and -5.9 and User 2 has -2.2 and -3.3 at $10\log P_t=$ 12 and 16, respectively.)
In the case (b) of $N=3,~K=3$, Theorem \ref{theo:MainResult} states that the diversity orders of Users 1, 2 and 3 are 9, 6, and 3, respectively. It is observed in Fig.  \ref{fig:outage_probability}(b)  that the outage probability of User 3 has the slope corresponding to diversity order of 3, as SNR increases.

\subsection{Overall Diversity Order}

\begin{figure}[t]
\centerline{ \SetLabels
\L(0.25*-0.1) (a) \\
\L(0.75*-0.1) (b) \\
\endSetLabels
\leavevmode
\strut\AffixLabels{
\scalefig{0.45}\epsfbox{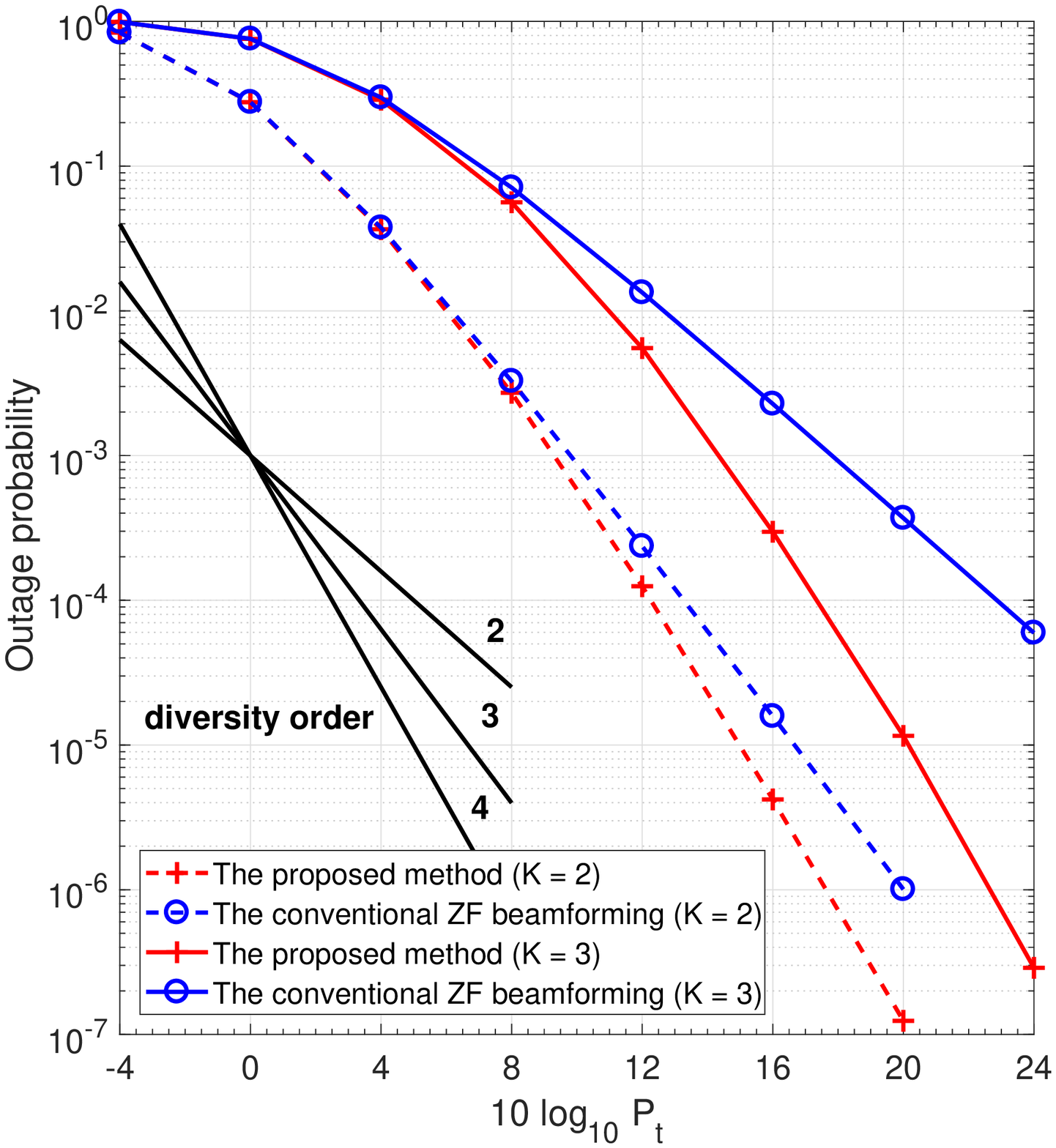}
\scalefig{0.45}\epsfbox{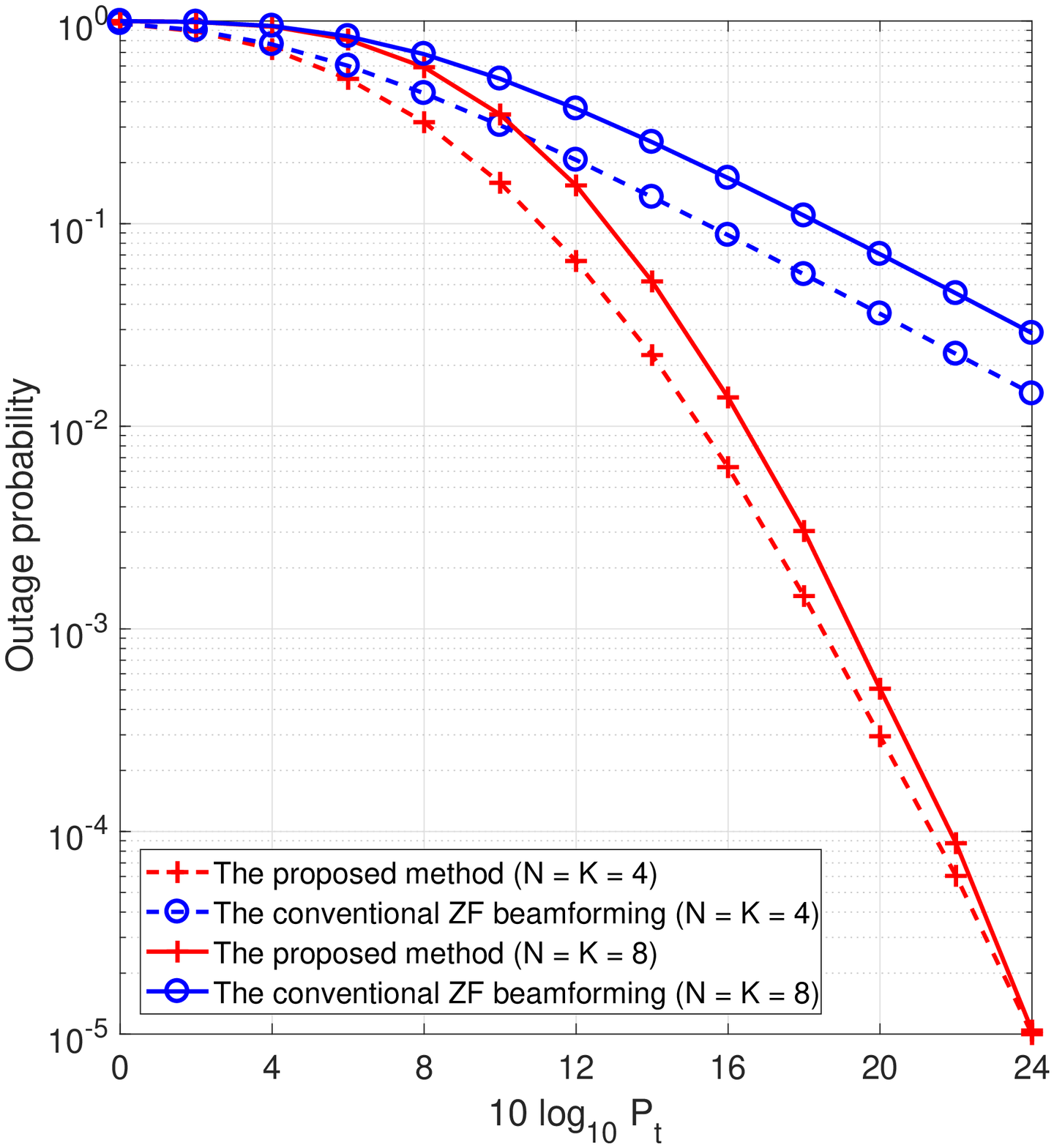}
} } \vspace{0.5cm} \caption{Overall outage probability : (a) $N = 4, ~K =2 ~\mbox{or}~ 3$ and (b) $N = K = 4$ and $N = K = 8$}    \label{fig:outage_probability_comparison}
\end{figure}

Then, we compared the mixture transceiver architecture with the full ZF downlink beamforming, based on the overall system diversity order. In order to see the overall diversity order, we computed overall outage probability. For this, we neglected channel norm ordering and computed the total number of outages occurred at all $K$ users  over all Monte Carlo runs.
 Fig. \ref{fig:outage_probability_comparison} shows the overall outage probability for the same channel statistics and the same rate threshold for the mixture scheme and the ZF downlink beamforming.
We considered four cases: {\em i)}  $N=4,K=2$ and {\em ii)} $N=4,K=3$ shown in  Fig. \ref{fig:outage_probability_comparison}(a) and  {\em iii)} $N=K=4$ and {\em iv)} $N=K=8$ shown in  Fig. \ref{fig:outage_probability_comparison}(b).
For the considered cases {\em i), ii), iii),} and {\em iv)}, the corresponding system diversity orders of the mixture scheme are 4, 4, 4 and 8  by Corollary  \ref{cor:systemDorder}, whereas the corresponding diversity orders of the ZF downlink beamforming are 3, 2, 1, and 1 by \eqref{eq:diversityOrderZF}.  It is seen in Fig. \ref{fig:outage_probability_comparison}(a)  that indeed the diversity orders of cases {\em i)} and {\em ii)} for the mixture scheme are the same as four. (The two red curves in Fig. \ref{fig:outage_probability_comparison}(a) seem to have the same slope with some offset, as SNR increases.)  On the other hand, it is seen that the diversity orders of the ZF downlink beamforming depends on $K$ for the same $N$, as expected.   The outage performance result for  the cases with more transmit antennas $N=K=4$ and $N=K=8$ is shown in Fig. \ref{fig:outage_probability_comparison}(b).  It is seen that the full ZF beamforming yields the same slope for the two cases $N=K=4$ and $N=K=8$, as expected, since it yields the diversity order of one in both cases by \eqref{eq:diversityOrderZF}.
On the other hand, it is seen that the diversity orders in the two cases $N=K=4$ and $N=K=8$ are different for the mixture scheme, as predicted by Corollary  \ref{cor:systemDorder}. Indeed, it is seen that the decay rate of the outage probability in the case of $N=K=8$ is larger than that of the case of $N=K=4$, although the outage probability of the case $N=K=8$ is higher than that of the case $N=K=4$ at low SNR. Note that the outage performance gain by the mixture scheme over the ZF beamforming is drastic in the case of  $N=K=4$ and $N=N=8$ for the meaningful range where the outage probability is below $10^{-2}$.

\subsection{Rate Distribution and Multiplexing Gain Loss}
\label{subsec:num_MG}

Next, we investigated the actual rate distribution and the multiplexing gain loss of the mixture scheme as compared to the ZF beamforming. For a numerical study, we again considered the case of $N=K=4$ considered in Fig. \ref{fig:outage_probability_comparison}(b).  For the power distribution factors $\delta_1,\cdots,\delta_4$, we used \eqref{eq:deltaSolution} with $R_{th}=1.5$ and $C=2$. (Other simulation setting is the same as that for Fig. \ref{fig:outage_probability_comparison}(b).)  We know that the common target rate should be smaller than
$\bar{R}_i^{(j)} = \log_2 \left( 1+ \frac{\delta_i^{(j)}}{\sum_{m=1}^{i-1}\delta_m^{(j)}} \right)$, $i=2,\cdots,|\Gc_j|$ since the rates of the users except the first user in group $\Gc_j$ saturate to $\bar{R}_i^{(j)}$, $i=2,\cdots,|\Gc_j|$. However, the power distribution factors $\delta_1^{(j)},\cdots,\delta_{|\Gc_j|}^{(j)}$ of group $\Gc_j$ can be designed for the target rate $R_{th}$  by using \eqref{eq:deltaSolution}. Note that channel realization does not satisfy the target rate $R_{th}$  with 100 percents and it is just a target rate. Hence, outage can still occur for the designed target rate with small probability.
\eqref{eq:deltaSolution} with $R_{th}=1.5$ and $C=2$ yields the following power distribution factor values:

\begin{itemize}
\item $\delta_1^{(j)}=1$ for the first user in any group with cardinality one.

\item  $\delta_1^{(j)}= 0.2071$, $\delta_2^{(j)}=0.7929$  for the first and second users in any group with cardinality two.

\item  $\delta_1^{(j)}=0.0429$, $\delta_2^{(j)}=0.1642$, $\delta_3^{(j)}=0.7929$  for the first, second and third users in any group with cardinality three.

\item  $\delta_1^{(j)}=0.0089$, $\delta_2^{(j)}= 0.0340$, $\delta_3^{(j)}= 0.1642$, $\delta_4^{(j)}=0.7929$ for the first,  second, third and fourth users in any group with cardinality four.

\end{itemize}

\noindent The corresponding $\bar{R}_i^{(j)}$ is given as follows:

\begin{itemize}

\item  $\bar{R}_2^{(j)}=2.2716$  for the second user in any group with cardinality two.

\item  $\bar{R}_2^{(j)}=2.2713$, $\bar{R}_3^{(j)}=2.2716$  for the second and third users in any group with cardinality three.

\item  $\bar{R}_2^{(j)}=2.2691$, $\bar{R}_3^{(j)}=2.2713$, $\bar{R}_3^{(j)}=2.2716$ for the second, third and fourth users in any group with cardinality four.

\end{itemize}
Note that \eqref{eq:deltaSolution} with $R_{th}=1.5$ and $C=2$ yields the power distribution factor values so that the rate upper bound $\bar{R}_i^{(j)}$ for non-first users is set just above the target rate $R_{th}$. The margin is controlled by the constant $C$. Hence, when a common target rate is given, we can design the power distribution factors $\delta_i^{(j)}$ such that the  rate upper bound $\bar{R}_i^{(j)}$ for non-first users is set just above the target rate $R_{th}$ by using \eqref{eq:deltaSolution}.

\begin{figure}[htbp]
\centerline{ \SetLabels
\L(0.5*-0.075) (a) \\
\endSetLabels
\leavevmode
\strut\AffixLabels{
\scalefig{0.55}\epsfbox{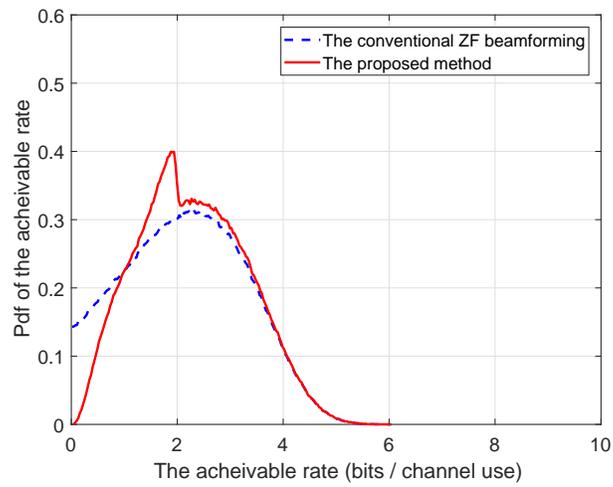}
} } \vspace{0.7cm}
 \centerline{ \SetLabels
\L(0.5*-0.075) (b) \\
\endSetLabels
\leavevmode
\strut\AffixLabels{
\scalefig{0.55}\epsfbox{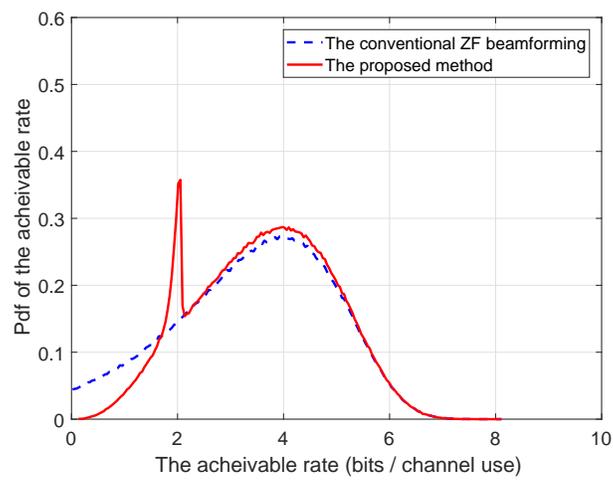}
} } \vspace{0.7cm}\centerline{ \SetLabels
\L(0.5*-0.075) (c) \\
\endSetLabels
\leavevmode
\strut\AffixLabels{
\scalefig{0.55}\epsfbox{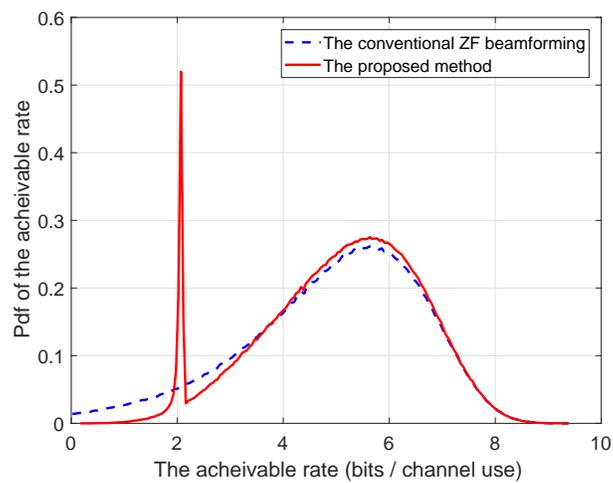}
} }
 \vspace{0.4cm} \caption{Rate distribution ($N=K=4$) (a) $10\log\frac{P_t}{1}=10$dB, (b) $10\log\frac{P_t}{1}=15$dB, (c) $10\log\frac{P_t}{1}=20$dB}
\label{fig:rateDistributionA}
\end{figure}

\begin{figure}[h]
\centerline{ \SetLabels
\L(0.25*-0.1) (a) \\
\L(0.75*-0.1) (b) \\
\endSetLabels
\leavevmode
\strut\AffixLabels{
\scalefig{0.45}\epsfbox{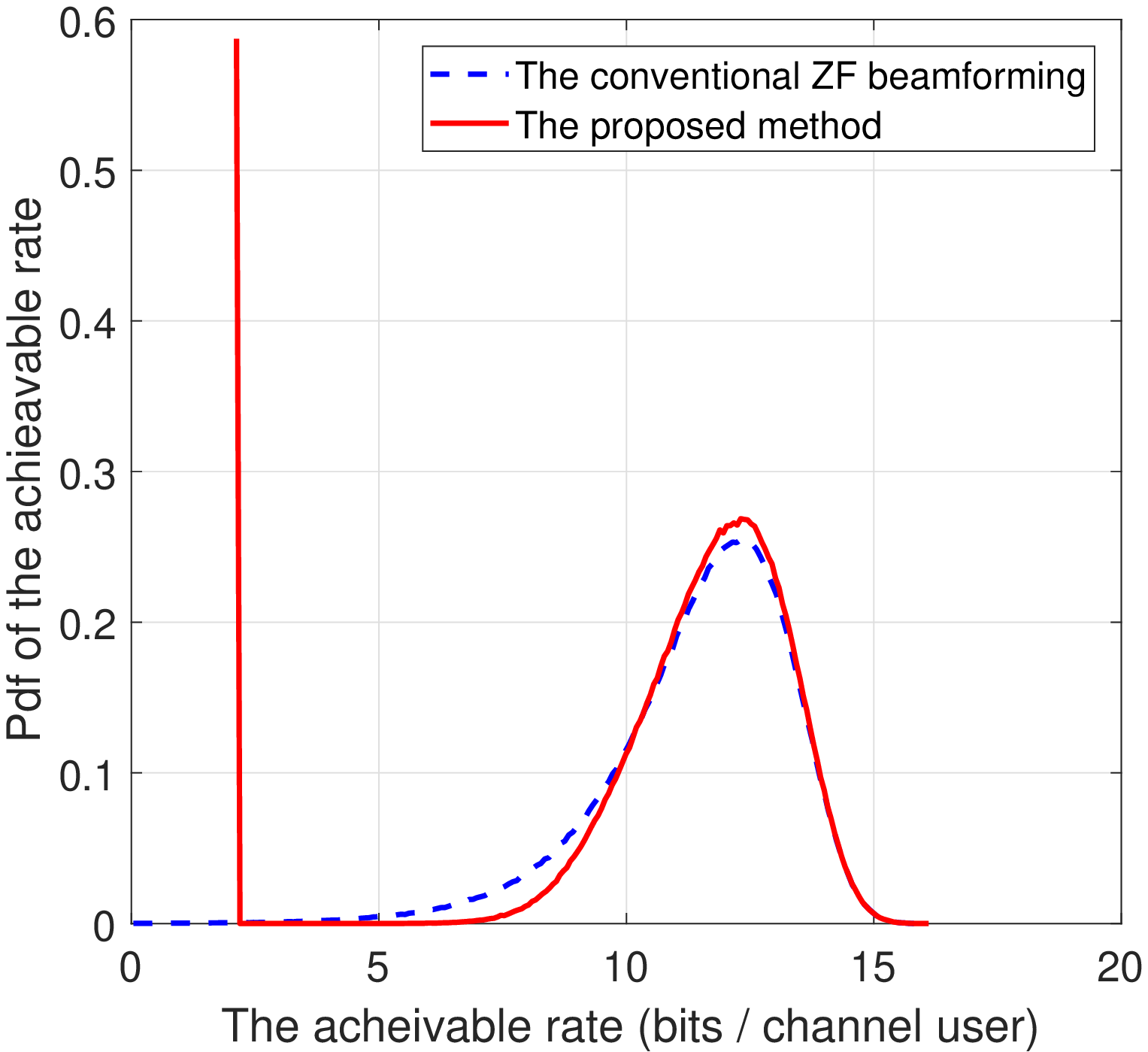 }
\scalefig{0.45}\epsfbox{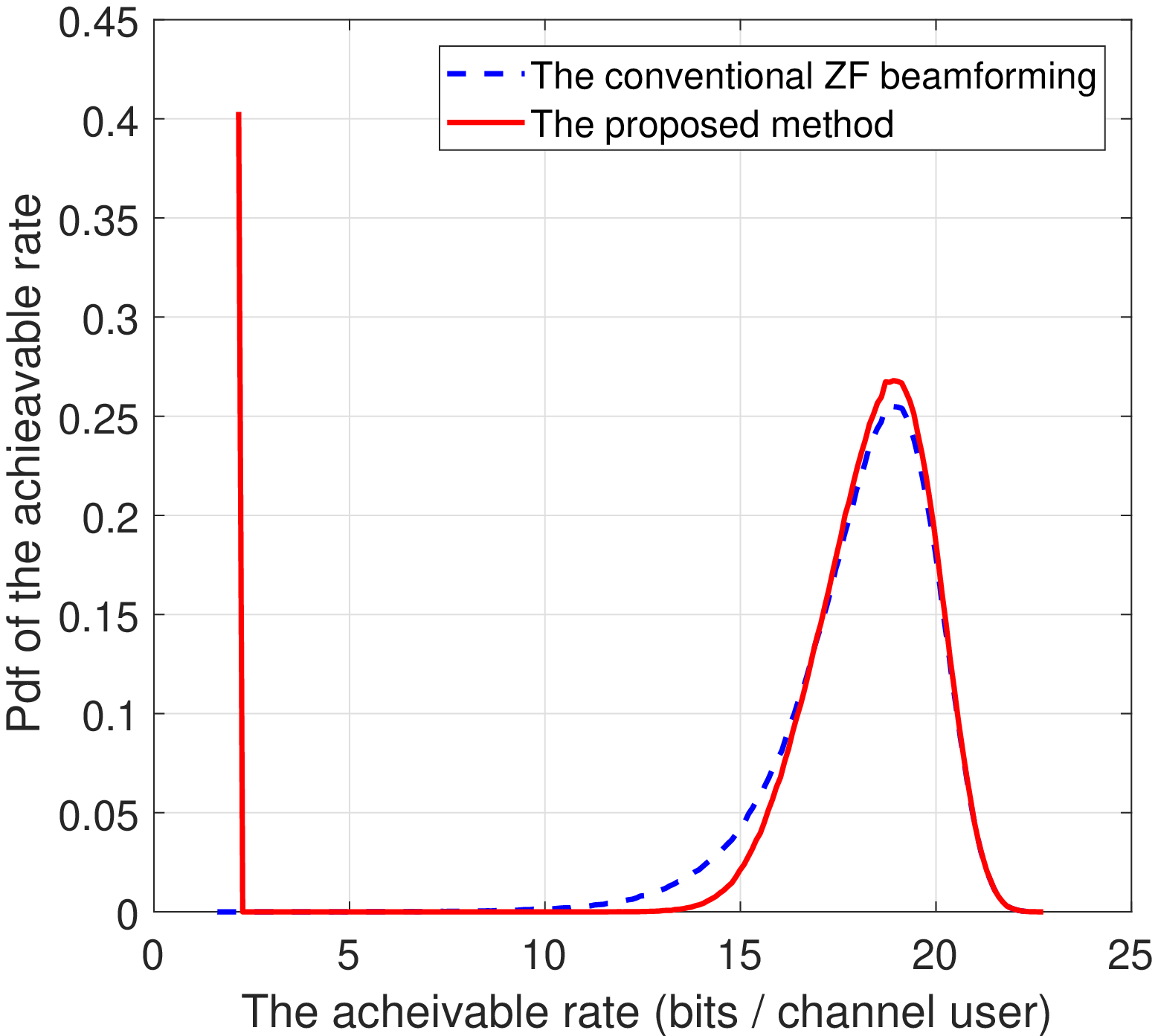 }
} } \vspace{0.5cm} \caption{Rate distribution ($N=K=4$) (a) $10\log\frac{P_t}{1}=40$dB and (b) $10\log\frac{P_t}{1}=60$dB}    \label{fig:rateDistributionB}
\end{figure}

For the $N=K=4$ system, we considered $10\log_{10}\frac{P_t}{1}=[10,15,20,40,60]$ dB, where one in the denominator is the noise variance. For each SNR point, we generated 500,000 channel realizations. For each channel realization, we applied the ZF beamforming and the mixture scheme and obtained the rates of the four users in the system. With the overall 4$\times$500,000 rate values, we obtained the rate distribution with the histogram method. The rate distribution results are shown in Figs. \ref{fig:rateDistributionA} and \ref{fig:rateDistributionB}.  Note that the mixture scheme with adaptive user grouping is opportunistic in  multiplexing gain and at least multiplexing gain of one is guaranteed since the number of groups is equal to or larger than one. It is observed that the rate distribution of the mixture scheme is a mixture of the first users' rate distribution and the non-first users' rate distribution. The distribution component of the first users' rates shows a similar distribution to that of the ZF scheme. That is, as SNR increases, the first users' rate distribution shifts to the right in the figures. We also see the component of the distribution of the non-first  users's rates. This component accumulates around $R\approx 2.2$ as predicted by the above values of $\bar{R}_i^{(j)}$. As SNR increases, the accumulation becomes sharper looking like a peak just below $\bar{R}_i^{(j)}$. Note the rate lower tail behaviors of the mixture scheme and the ZF scheme. At SNR = 10, 15, 20 dB, the mixture has much lighter tails. At SNR=20 dB, the mixture scheme yields most rates above the target rate $R_{th}=1.5$, whereas still quite a portion is below the target rate $R_{th}=1.5$ with the ZF scheme. Even at SNR=40dB, we can still see the non-zero tail around the origin for the ZF scheme, whereas for the mixture scheme the rate distribution starts from $R_{th}$ with a sharp peak. Note that the first users' rates of the mixture scheme almost match those of the ZF scheme. However, still there is a large peak around $\bar{R}_{i}^{(j)}$ due to the non-first users for the mixture scheme, and this reduces the multiplexing gain of the mixture scheme.

\begin{figure}[ht]
\begin{psfrags}
    \centerline{ \scalefig{0.6} \epsfbox{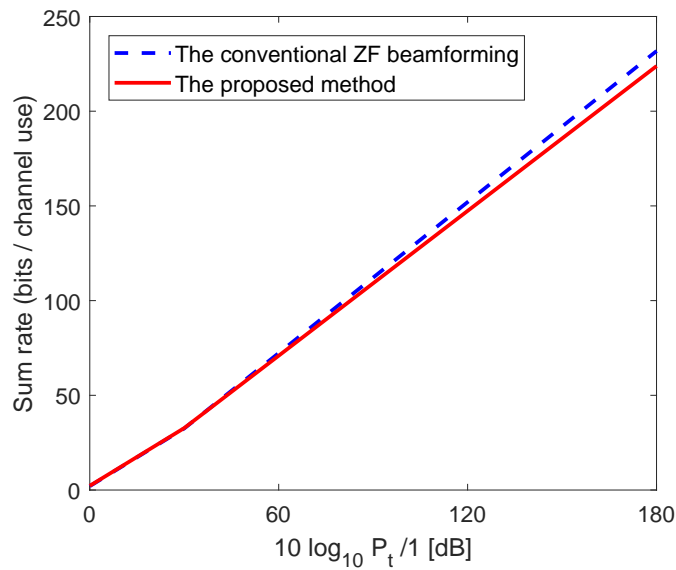} }
    \caption{Average sum rate: $K=N=4$}
    \label{fig:sum_rate_multiplexingGain}
\end{psfrags}
\end{figure}

Hence, we further investigated the multiplexing gain, i.e., the slope of rate increase with respect to SNR. For each SNR point, we averaged the rates of the four users in the system over channel realizations. Then, we plotted the average rates of the mixture scheme and the ZF scheme with respect to SNR. The result is shown in Fig. \ref{fig:sum_rate_multiplexingGain}. It is seen that  the multiplexing gain loss of the mixture scheme compared to the ZF scheme is insignificant at least in the case of $N=K=4$.

\begin{figure}[ht]
\begin{psfrags}
    \centerline{ \scalefig{0.6} \epsfbox{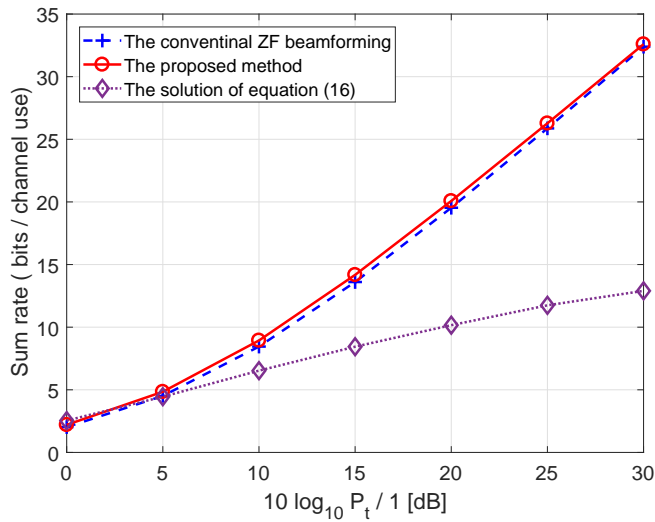} }
    \caption{{Average sum rate: $K=N=4$}}
    \label{fig:sum_rate_multiplexingGainProblem16}
\end{psfrags}
\end{figure}

{We further investigated the performance of the single-group approach with more sophisticated beam design  obtained by solving the problem \eqref{eq:AchieRegionMISOBCSIC}.  We solved the problem \eqref{eq:AchieRegionMISOBCSIC} for the $N=K=4$ system considered above as a single group with superposition and SIC.
Since the problem \eqref{eq:AchieRegionMISOBCSIC} is non-convex, several steps are needed.
First, we transform the problem \eqref{eq:AchieRegionMISOBCSIC} into a problem of maximizing
sum rate  with feasible power ratio-tuples.
Then, it is reformulated as maximizing the geometric mean
of SINRs with non-convex constraints\cite{Hanif16:TSP}.
Next, we approximate the non-convex constraints
using the convex concave procedure\cite{yuille03:Concave} and can solve
the problem in an iterative manner.
Sweeping $(p_1,p_2,\cdots,p_K)=(\delta_1P_t, \delta_2 P_t, \cdots, \delta_K P_t)$ yields a rate region. However, we did not perform this sweeping since our goal is not to obtain a rate region. Instead, we determined $(p_1,p_2,\cdots,p_K)=(\delta_1P_t, \delta_2 P_t, \cdots, \delta_K P_t)$ based on \eqref{eq:deltaSolution} with $C = 2$
and $R_{th} = 1.5$ and computed the corresponding sum rate of the problem \eqref{eq:AchieRegionMISOBCSIC}.
The corresponding rate-tuple point is on the boundary of the rate region of (16), although it may not be the sum-rate maximizing point.  The result is shown in Fig. \ref{fig:sum_rate_multiplexingGainProblem16}.  The curves of the proposed method and the conventional ZF method are the same as those in  Fig. \ref{fig:sum_rate_multiplexingGain}, and the curve of the solution of the problem (16) with $(p_1,p_2,\cdots,p_K)=(\delta_1P_t, \delta_2 P_t, \cdots, \delta_K P_t)$ determined based on \eqref{eq:deltaSolution} with $C = 2$
and $R_{th} = 1.5$ is newly added. Even though we solve the problem (16) optimally based on the aforementioned complicated procedure not based on user grouping, inter-group ZF, in-group simple superposition beamforming $\wbf_1=\cdots=\wbf_K$, the resulting rate of the problem (16) with a single-group approach is not good. Note that the corresponding slope is much smaller than that of  the ZF scheme and the proposed scheme. This is because as mentioned before, if we group all users in a single group and apply superposition and SIC, we have the multiplexing gain of only one, whatever sophisticated beam design and power allocation are used. Even if we adjust power allocation to yield maximum sum rate, this does not change the slope, i.e., the multiplexing gain.  On the other hand, the full ZF beamforming has the multiplexing gain of four and the mixture scheme with adaptive user grouping has the multiplexing gain from one to four. On average, the multiplexing gain of the mixture scheme with adaptive user grouping  slightly falls short of four, as seen in Fig.   \ref{fig:sum_rate_multiplexingGain}.  So, it is more important to group users properly to yield  as many  groups as possible, while maintaining minimum inter-group angle separation,  rather than to apply a sophisticated beam design method with one overall group from the perspective of the multiplexing gain, i.e., the sum rate.}

Now how to operate the mixture scheme is clear. Consider MISO-BC URLLC in which no retransmission is allowed due to latency constraint (one round-trip delay for retransmission is in the order of 10 ms, whereas URLLC requires 1ms delay) and low-latency low-data-rate packets should be delivered reliably.
First, we determine the angle threshold between group channel subspaces to be not too large so that we have as many groups as possible but we still avoid angle-wise very close groups. We determine the minimum target rate that should be satisfied by all users for URLLC. With the target rate, we design the power distribution factors, and run the adaptive user grouping. For the first users in the constructed groups, we can still apply rate adaptation based on modulation level and coding rate by exploiting the supportable rate channel quality indicator (CQI). (The distribution of the first users' rates is wide across the x-axis in Fig. \ref{fig:rateDistributionA} and \ref{fig:rateDistributionB}. We should exploit this.) But, for the non-first users we just transmit data with the target rate. In fact, we can control the first user in each group. In the case that a user wanted as the first user does not have maximum effective channel norm, we assign more power to the wanted user so that more power times its effective channel gain surpasses the largest effective channel norm of other user in the group. Then, we distribute the remaining group power according to \eqref{eq:deltaSolution} with the target rate $R_{th}$. With this, we can control the mixture system so that any user can be a high-rate first user while supporting the target rate with high reliability.

\subsection{Comparison with Other Advanced Transceiver Designs for MISO BCs}

We considered other advanced transceiver designs for MISO BCs, e.g., \cite{HuRusek:17WC,PeelHochwaldSwindlehurst:05COM}, devised to improve the performance over ZF downlink beamforming, and compared the outage performance of these advanced designs with the mixture architecture. The result is shown in Fig. \ref{fig:comwithAdvancedTRX}, where the setup is $N=K=4$ and other parameter setting is the same as that in Fig.  \ref{fig:outage_probability_comparison}(b) with $N=K=4$. It is seen that the advanced transceiver designs having the full multiplexing gain yield  the same diversity order as the ZF beamforming, which is worse than that of our scheme,  although they yield better rates compared to the ZF beamforming. Thus, these advanced designs are at the multiplexing-gain-optimal side in terms of diversity-and-multiplexing trade-off.

\begin{figure}[h]
\begin{psfrags}
    \psfrag{o}[l]{\small origin} %
    \psfrag{cth}[c]{\small $\theta_{\tau,1}$} %
    \psfrag{h}[l]{\small $\hbf_{g_{\hat{k}}}$} %
    \centerline{ \scalefig{0.5} \epsfbox{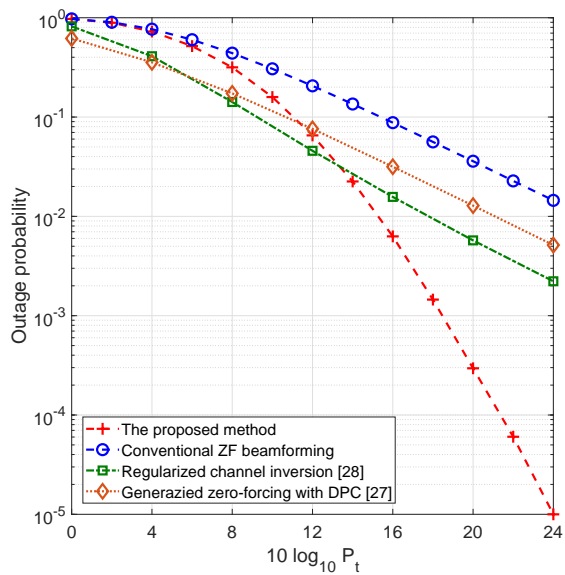} }
    \caption{Comparison with other advanced methods: $K=N=4$}
    \label{fig:comwithAdvancedTRX}
\end{psfrags}
\end{figure}

\subsection{Impact of Imperfect CSI}

\begin{figure}[h]
\begin{psfrags}
    \centerline{ \scalefig{0.5} \epsfbox{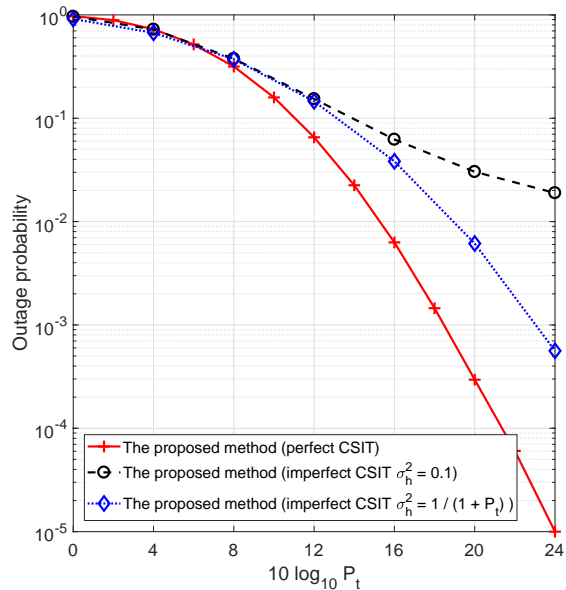} }
    \caption{Impact of imperfect CSI: $K=N=4$}
     \label{fig:imperfectCSI}
\end{psfrags}
\end{figure}

Although analysis  of the outage performance under imperfect CSIT is beyond the scope of this paper, we briefly investigated the impact of imperfect CSIT through simulation. Again we considered the case of $N=K=4$ with the same other setting as that in Fig. \ref{fig:outage_probability_comparison}(b).  It is known that the number of CSI feedback bits per user should increase linearly with respect to SNR  (or signal power for fixed noise variance) in log scale
in order to achieve full multiplexing gain for MISO BCs\cite{Jindal:06IT}.   For simulation the CSI error is assumed to be zero-mean Gaussian with variance $\sigma_{e}^2$, where we set 1) $\sigma_e^2$ as a fixed constant of 0.1 and 2) $\sigma_{e}^2 = \frac{1}{1+P_t}$ to be consistent with the result in \cite{Jindal:06IT}. The result is shown in Fig. \ref{fig:imperfectCSI}. It is seen that the fixed CSI quality with respect to SNR shows a floor for the outage probability as SNR increases. On the other hand, the CSI with quality $\sigma_e^2 = \frac{1}{1+P_t}$ does not show such a floor behavior. Indeed, it seems that the increasing CSI quality with respect to SNR is required to achieve the full diversity order although the exact increasing rate is not known yet.

\section{Conclusion}
\label{sec:conclusion}

In this paper,  we have considered the  mixture transceiver architecture   with channel-adaptive user grouping and mixture of linear and nonlinear SIC reception for MISO BCs, and {}{have shown that the mixture transceiver architecture opportunistically increases the multiplexing gain while achieving full diversity order for  MISO BCs.   The mixture transceiver architecture can provide far better outage  performance compared to the widely-used conventional ZF downlink
beamforming for MU-MISO BCs under channel fading environments. The gain in diversity order results from possible sacrifice of multiplexing gain through diversity-and-multiplexing trade-off, and thus the mixture scheme  provides an alternative transceiver architecture for MISO BCs to applications such as emerging URLLC in which reliability is more important than data rate.
Future research directions include optimization of angle threshold and power distribution, finding optimal diversity-and-multiplexing trade-off in MISO BCs,  finding faster grouping algorithms scalable with the number of users for large systems, application of the mixture architecture to the uplink\cite{Hamid08COM}, and application of more advanced transmit signaling\cite{ZengYetis:13SP}}.

\section*{Appendix A: Proof of Proposition \ref{pro:proposition2}}

    For given $(\delta_1, \cdots, \delta_L)$, in order to  obtain a lower bound on the achievable rate of each user, we simply set
    $\wbf_1 = \wbf_2 = \cdots = \wbf_L = \wbf$ with $||\wbf||^2 \le 1$ as in the constraint \eqref{eq:optimization_alter_cond}, i.e., we consider that all $L$ users use the same beam vector. Then,
     the rates in
     \eqref{eq:rateLMISOBCSIC1}  of the  MISO BC with superposition coding and SIC
     can be rewritten as
\begin{align}
    R_1 &=  \log_2 \left( 1 + \delta_1 P|\mathbf{g}_1^H \mathbf{w}|^2\right) \label{eq:rate_singlebeam111} \\
    R_i &=  \log_2 \left( 1 +  \min \left\{ \frac{ \delta_i P |\gbf_1^H \wbf|^2}{\sum_{m=1}^{i-1} \delta_m P|\gbf_1^H \wbf|^2 + 1}
    , \cdots ,\frac{ \delta_i P |\gbf_i^H \wbf|^2}{\sum_{m=1}^{i-1} \delta_m P|\gbf_i^H \wbf|^2 + 1}\right\}\right), \quad i = 2, \cdots, L, \nonumber \\
    &=  \log_2 \left(1 + \frac{\delta_i}{\sum_{m=1}^{i-1} \delta_m} \cdot \frac{1}{1+ 1/\left[\min{\{ |\gbf_1^H\wbf|^2,\cdots, |\gbf_i^H \wbf|^2 \}} (\sum_{m=1}^{i-1} \delta_m) P\right]} \right). \label{eq:rate_singlebeam}
\end{align}
Using Lemma \ref{Lemma:lemma1} below, we can bound the terms $|\mathbf{g}_1^H \mathbf{w}|^2$ in \eqref{eq:rate_singlebeam111} and $\min{\{ |\gbf_1^H\wbf|^2,\cdots, |\gbf_i^H \wbf|^2 \}}$ in  \eqref{eq:rate_singlebeam} as follows:
Using the optimal solution $\wbf^*$ to the max-min problem \eqref{eq:problem_SNR}, we have
\begin{align}
    |\gbf_1^H \wbf^*|^2 &\geq \min\left\{\left|\left(\frac{\gbf_1}{\|\gbf_1\|}\right)^H \wbf^*\right|^2, \cdots,  \left|\left(\frac{\gbf_L}{\|\gbf_L\|}\right)^H \wbf^*\right|^2\right\}  \|\gbf_1\|^2   \label{eq:prop1_inequal11} \\
    &\geq\frac{1}{c} \|\gbf_1\|^2,  \label{eq:prop1_inequal2}
\end{align}
where  \eqref{eq:prop1_inequal11} is valid since the minimum is taken over multiple terms including $|\gbf_1^H \wbf^*|^2$, and \eqref{eq:prop1_inequal2} is valid by Lemma \ref{Lemma:lemma1} below. Next, we have
\begin{align}
    \min\{|\gbf_1^H \wbf^*|^2, \cdots, |\gbf_i^H \wbf^*|^2\} &= \min\left\{\left|\left(\frac{\gbf_1}{\|\gbf_i\|}\right)^H \wbf^*\right|^2,\cdots, \left|\left(\frac{\gbf_i}{\|\gbf_i\|}\right)^H \wbf^*\right|^2\right\}  \|\gbf_i\|^2 \\
    &\geq  \min\left\{\left|\left(\frac{\gbf_1}{\|\gbf_1\|}\right)^H \wbf^*\right|^2, \cdots, \left|\left(\frac{\gbf_i}{\|\gbf_i\|}\right)^H \wbf^*\right|^2\right\}  \|\gbf_i\|^2 \label{eq_prop1_inequal21} \\
    &\geq  \min\left\{\left|\left(\frac{\gbf_1}{\|\gbf_1\|}\right)^H \wbf^*\right|^2, \cdots, \left|\left(\frac{\gbf_L}{\|\gbf_L\|}\right)^H \wbf^*\right|^2\right\}  \|\gbf_i\|^2 \label{eq_prop1_inequal22}\\
    &\geq\frac{1}{c} \|\gbf_i\|^2, \label{eq_prop1_inequal23}
\end{align}
where \eqref{eq_prop1_inequal21} is valid since $||\gbf_1||\ge \cdots \ge ||\gbf_L||$, \eqref{eq_prop1_inequal22} is valid since we increased the number of terms in the minimization including the previous terms, and \eqref{eq_prop1_inequal23} holds by Lemma \ref{Lemma:lemma1} below.
Substituting
\eqref{eq:prop1_inequal2} and   \eqref{eq_prop1_inequal23} into
\eqref{eq:rate_singlebeam111} and \eqref{eq:rate_singlebeam}, respectively, we have the rates  that can be achieved by the optimal solution  $\wbf^*=\wbf_1=\cdots=\wbf_L$ to the max-min problem \eqref{eq:problem_SNR}:
\begin{align}
    R_1 &\geq \log_2 \left( 1 + \frac{1}{c} \delta_1 \|\gbf_1\|^2 P \right) \\
    R_i &\geq \log_2 \left( 1 + \frac{\delta_i}{\sum_{m=1}^{i-1} \delta_m} \frac{1}{1 + \left(\frac{1}{c} \|\gbf_i\|^2\sum_{m=1}^{i-1} \delta_m P \right)^{-1}}\right),
    \quad i = 2, \cdots, L.
\end{align}
The considered design here of  $\wbf_1=\cdots=\wbf_L=\wbf^*$ with $||\wbf^*||^2 \le 1$ and $p_i = \delta_i P$ with $(\delta_1,\cdots,\delta_L) \in \Dc$, i.e., \eqref{eq:optimization_alter_cond}, satisfies the original beam design constraint $(\wbf_1, \cdots, \wbf_L) \in \mathcal{W}^L$  and $p_i >0, \forall i,~\sum_{i=1}^L p_i = P$ in  \eqref{eq:AchieRegionMISOBCSIC}.   Hence, the rates achieved by  $\wbf_1=\cdots=\wbf_L=\wbf^*$ with $(\delta_1,\cdots,\delta_L)$ are lower bounds on the achievable rates. \hfill{$\blacksquare$}

\begin{lemma}  \label{Lemma:lemma1}
Consider the following max-min optimization problem:
\begin{equation} \label{eq:problem_SNR}
    \begin{array}{ll}
      \max & \min{\left\{ \left| \left(\frac{\gbf_1}{\|\gbf_1\|}\right)^H \wbf\right|^2, \cdots, \left| \left(\frac{\gbf_L}{\|\gbf_L\|}\right)^H \wbf\right|^2\right\}} \\
      \mathrm{subject ~ to} & \|\wbf\|^2 \leq 1.
    \end{array}
\end{equation}
The optimal solution $\wbf^*$ to
the problem \eqref{eq:problem_SNR} satisfies the following:
\begin{equation}
        \min{\left\{ \left| \left(\frac{\gbf_1}{\|\gbf_1\|}\right)^H \wbf^*\right|^2, \cdots, \left| \left(\frac{\gbf_L}{\|\gbf_L\|}\right)^H \wbf^*\right|^2\right\}} \geq \frac{1}{c},
\end{equation}
where
\begin{equation}  \label{eq:Lemma:lemma1}
    c = \left\{ \begin{array}{ll}
                  L & \mbox{if} \quad L \leq 3, \\
                  8L^2 & \mbox{if} \quad L > 3.
                \end{array}
    \right.
\end{equation}
\end{lemma}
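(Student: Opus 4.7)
Let $\mathbf{u}_i := \gbf_i/\|\gbf_i\|$. The argument naturally splits on $L$: for small $L \le 3$, exhibit an explicit feasible $\wbf$ that achieves the tight bound $1/L$; for $L > 3$, use a coarser rounding-type argument giving the looser bound $1/(8L^2)$.

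For $L = 1$, take $\wbf = \mathbf{u}_1$, so $|\mathbf{u}_1^H\wbf|^2 = 1 \ge 1$. For $L = 2$, set $\wbf = (\mathbf{u}_1 + e^{-i\arg\rho}\mathbf{u}_2)/\sqrt{2(1+|\rho|)}$ with $\rho = \mathbf{u}_1^H\mathbf{u}_2$. A direct expansion yields $|\mathbf{u}_i^H\wbf|^2 = (1+|\rho|)/2 \ge 1/2$ for $i = 1,2$. For $L = 3$, I would build $\wbf$ by solving a small linear system in the span of $\{\mathbf{u}_1,\mathbf{u}_2,\mathbf{u}_3\}$ so as to equalize the magnitudes $|\mathbf{u}_i^H\wbf|$ after phase alignment; the resulting normalization bound reduces to a statement about the Gram matrix $\mathbf{G}$ with unit diagonal and $|G_{ij}| \le 1$. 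When $\mathbf{G}$ is full rank the desired bound $\mathbf{1}^T\mathbf{G}^{-1}\mathbf{1} \le 3$ can be verified directly; when $\mathbf{G}$ is rank-deficient (three vectors in a $\le 2$-dimensional subspace) one exploits the extra freedom of complex phases in $\wbf \in \mathbb{C}^N$ to reach the same bound. The tightness of $1/3$ is witnessed by three orthonormal vectors.

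For $L > 3$, I would consider the semidefinite relaxation
\begin{equation*}
p^* \;=\; \max_{\mathbf{W}\succeq 0,\ \mathrm{tr}(\mathbf{W})\le 1}\ \min_{i=1,\dots,L} \mathbf{u}_i^H \mathbf{W} \mathbf{u}_i.
\end{equation*}
By the minimax theorem its dual is $p^* = \min_{\bs{\mu}\in\Delta_L}\lambda_{\max}(\sum_i \mu_i\mathbf{u}_i\mathbf{u}_i^H)$, and the trace-and-rank estimate $\lambda_{\max}(\cdot) \ge \mathrm{tr}(\cdot)/\mathrm{rank}(\cdot)$ gives $p^* \ge 1/L$. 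One then extracts a rank-one $\wbf$ from an optimal SDP solution $\mathbf{W}^*$ via either Gaussian randomization or deterministic rank reduction; tracking the constants in the rounding step, the worst-case factor loss is at most $8L$, so $\min_i |\mathbf{u}_i^H\wbf|^2 \ge p^*/(8L) \ge 1/(8L^2)$.

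The main technical obstacle is tracking the exact constant $8$ in the rank-one extraction for $L > 3$. A secondary obstacle is the $L = 3$ case with a rank-deficient Gram matrix, where complex-phase freedom in $\wbf \in \mathbb{C}^N$ must be exploited carefully; outside this degenerate sub-case the $L = 3$ argument is essentially linear algebra on the $3\times 3$ Gram matrix.
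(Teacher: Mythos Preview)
Your approach for $L > 3$ is essentially identical to the paper's: pass to the SDP relaxation, show the relaxed optimum is at least $1/L$ (the paper does this by exhibiting the feasible point $\Wbf' = \sum_i \mathbf{u}_i\mathbf{u}_i^H$ for the inverted minimization form; you do it via the trace--rank estimate on the dual --- both give the same bound), and then invoke the known $8L$ approximation ratio for SDR of homogeneous complex QCQPs.

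The difference is in how $L \le 3$ is handled. The paper treats all $L$ uniformly: after reformulating as a QCQP with $L$ constraints, it simply cites the known result that for at most three complex quadratic constraints the SDP relaxation is \emph{tight}, so $v_{qp}^* = v_{sdp}^* \le L$ and the bound $1/L$ follows immediately. You instead attempt explicit constructions. This is fine for $L=1,2$, but your $L=3$ sketch has a genuine gap: the inequality $\mathbf{1}^T\mathbf{G}^{-1}\mathbf{1} \le 3$ for a full-rank Gram matrix $\mathbf{G}$ with unit diagonal is simply \emph{false}. Take $G_{ii}=1$ and $G_{ij} = -0.4$ for $i \ne j$; this is positive definite (eigenvalues $1.4,1.4,0.2$), yet $\mathbf{G}\mathbf{1} = 0.2\,\mathbf{1}$ gives $\mathbf{1}^T\mathbf{G}^{-1}\mathbf{1} = 15$. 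What you actually need is $\min_{|c_i|=1} \mathbf{c}^H \mathbf{G}^{-1} \mathbf{c} \le 3$, i.e., an optimization over the three phases, not a fixed ``phase alignment.'' That statement is true, but it is precisely equivalent to the SDR-tightness theorem for three complex constraints that the paper invokes, and it does not admit the kind of direct linear-algebra verification you suggest. So for $L=3$ your ``elementary'' route is not more elementary --- the hard step has only been re-phrased, and the version you wrote down is incorrect.
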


{\it Proof of Lemma  \ref{Lemma:lemma1}:} ~Define unit-norm $\vbf_i := \gbf_i / \|\gbf_i\|$ for $ i = 1,\cdots, L$. Then, \eqref{eq:problem_SNR} can be rewritten as
\begin{equation} \label{eq:problem_SNR_2}
    \begin{array}{ll}
      \max & \min{\left\{ \left| \vbf_1^H \wbf\right|^2, \cdots, \left| \vbf_L^H \wbf\right|^2\right\}} \\
      \mathrm{subject ~ to} & \|\wbf\|^2 \leq 1
    \end{array}
\end{equation}
The problem \eqref{eq:problem_SNR_2} can be reformulated as
\begin{equation}  \label{eq:Lemma1inverse}
\max ~ \frac{\min{\left\{ \left| \vbf_1^H \wbf\right|^2, \cdots, \left| \vbf_L^H \wbf\right|^2\right\}}}{||\wbf||^2} =  \min ~ \frac{||\wbf||^2}{\min{\left\{ \left| \vbf_1^H \wbf\right|^2, \cdots, \left| \vbf_L^H \wbf\right|^2\right\}}},
\end{equation}
where inversion of the cost function is taken in the right-hand side (RHS) of \eqref{eq:Lemma1inverse} . Thus,
it is known that the optimal value of the problem \eqref{eq:problem_SNR_2} is equivalent to the inverse of the optimal value of the following quadratic programming (QP) \cite{Sidiropoulos:06TSP}:
\begin{equation} \label{eq:problem_SNR_3}
    \begin{array}{ll}
      \min & \|\wbf\|^2 \\
      \mathrm{subject ~ to} & |\vbf_i^H \wbf|^2 \geq 1, \quad i = 1,\cdots, L.
    \end{array}
\end{equation}
The QP \eqref{eq:problem_SNR_3} can be solved by semi-definite relaxation of the rewritten form of \eqref{eq:problem_SNR_3}  \cite{Sidiropoulos:06TSP}:
\begin{equation}   \label{eq:problem_SNR_4}
    \begin{array}{ll}
      \min & \mathrm{Tr}(\Wbf) \\
      \mathrm{subject ~ to} & \mathrm{Tr}(\Vbf_i \Wbf) \geq 1, \quad i = 1,\cdots, L
    \end{array}
\end{equation}
where $\Wbf := \wbf \wbf^H$ and $\Vbf_i := \vbf_i\vbf_i^H$, $i = 1\cdots, L$.
Denote the optimal values of the optimization problems \eqref{eq:problem_SNR_3} and \eqref{eq:problem_SNR_4} by
$v_{qp}^*$ and $v_{sdp}^*$, respectively. Then, the relationship between $v_{qp}^*$ and $v_{sdp}^*$ is known as \cite{Luo:07JO}
\begin{equation}  \label{eq:inequal1lemma1}
    \begin{array}{ll}
    v_{qp}^* = v_{sdp}^*, & \mathrm{if} ~~ L \leq 3, \\
    v_{qp}^* \leq 8L \cdot v_{sdp}^*, &  \mathrm{if} ~~ L > 3.
    \end{array}
\end{equation}
Furthermore, note that $\Wbf' := \sum_{i=1}^L \Vbf_i$ is feasible
 for the problem \eqref{eq:problem_SNR_4} since $\mathrm{Tr}(\Vbf_i \Wbf') = \mathrm{Tr(\Vbf_i\sum_{i=1}^L \Vbf_i)})  \ge \sum_{i=1}^L \mathrm{Tr(\Vbf_i \Vbf_i)})\geq \mathrm{Tr}(\Vbf_i \Vbf_i) = 1  $,
and $\mathrm{Tr}(\Wbf') = L$. Hence, we have
\begin{equation}  \label{eq:inequal2lemma1}
    v_{sdp}^* \leq L.
\end{equation}
Hence, with  the optimal solution $\wbf^*$  to \eqref{eq:problem_SNR_2}, we have
\begin{align}
     \min{\left\{ \left| \vbf_1^H \wbf^*\right|^2, \cdots, \left| \vbf_1^H \wbf^*\right|^2\right\}} ~&\stackrel{(a)}{=}~  1/v_{qp}^*   ~\stackrel{(b)}{\ge}~ L/c \cdot 1/v_{sdp}^* ~\stackrel{(c)}{\ge}~  1/c, \label{eq:lemma1_last3}
\end{align}
where $c$ is given by \eqref{eq:Lemma:lemma1}.
Here,   Step (a)  is valid due to the relationship between the original problem
\eqref{eq:problem_SNR_2} and the QP \eqref{eq:problem_SNR_3};   Step (b)  is valid due to \eqref{eq:inequal1lemma1}; and   Step (c) is valid due to \eqref{eq:inequal2lemma1}.
\hfill{$\blacksquare$}

\section*{Appendix B:  Proof of Lemma \ref{lemma:sop}}

The block matrix inversion formula is given as follows:
       \begin{equation}
            \left[
              \begin{array}{cc}
                \Cbf & \Ubf \\
                \Vbf & \Dbf \\
              \end{array}
            \right] = \left[
              \begin{array}{cc}
                \Cbf^{-1} + \Cbf^{-1} \Ubf (\Dbf - \Vbf \Cbf^{-1} \Ubf )^{-1}\Vbf \Cbf^{-1}  &  -\Cbf^{-1}\Ubf(\Dbf - \Vbf \Cbf^{-1}\Ubf)^{-1} \\
                -(\Dbf - \Vbf \Cbf^{-1} \Ubf)^{-1} \Vbf \Cbf^{-1} & (\Dbf - \Vbf \Cbf^{-1} \Ubf)^{-1} \\
              \end{array}
            \right],
        \end{equation}
        which is used in Step (a) in the below.
         \begin{align*}
            &\Pibf_{[\Abf,\Bbf]}^\bot =
            \Ibf - [\Abf ~\Bbf] \left[
              \begin{array}{cc}
                \Abf^H \Abf & \Abf^H \Bbf \\
                \Bbf^H \Abf & \Bbf^H \Bbf \\
              \end{array}
            \right]^{-1} \left[
              \begin{array}{c}
                \Abf^H \\
                \Bbf^H \\
              \end{array}
            \right]     \\
             &  \stackrel{(a)}{=}\Ibf - [\Abf ~ \Bbf] \left[
              \begin{array}{c}
                (\Abf^H \Abf)^{-1} + (\Abf^H \Abf)^{-1}\Abf^H \Bbf (\Bbf^H\Bbf - \Bbf^H \Abf (\Abf^H \Abf)^{-1}\Abf^H \Bbf)^{-1} \Bbf^H \Abf (\Abf^H \Abf)^{-1},  \\
                -(\Bbf^H \Bbf - \Bbf^H \Abf (\Abf^H \Abf)^{-1} \Abf^H \Bbf)^{-1} \Bbf^H \Abf (\Abf^H \Abf)^{-1}, \\
              \end{array}
            \right. \\
            &   \hspace{10em} \left.
            \begin{array}{c}
             - (\Abf^H \Abf)^{-1}\Abf^H \Bbf(\Bbf^H \Bbf - \Bbf^H \Abf(\Abf^H \Abf)^{-1}\Abf^H\Bbf)^{-1} \\
               (\Bbf^H \Bbf - \Bbf^H \Abf (\Abf^H \Abf)^{-1} \Abf^H \Bbf )^{-1}
            \end{array}
            \right]
            \left[
              \begin{array}{c}
                \Abf^H \\
                \Bbf^H \\
              \end{array}
            \right]\\
            &= \Ibf - \Abf (\Abf^H \Abf)^{-1}\Abf^H - \Abf(\Abf^H \Abf)^{-1}\Abf^H \Bbf (\Bbf^H\Bbf - \Bbf^H \Abf (\Abf^H \Abf)^{-1}\Abf^H \Bbf)^{-1} \Bbf^H \Abf (\Abf^H \Abf)^{-1}\Abf^H \\
            &~~~~ + \Bbf(\Bbf^H \Bbf - \Bbf^H \Abf (\Abf^H \Abf)^{-1} \Abf^H \Bbf)^{-1} \Bbf^H \Abf (\Abf^H \Abf)^{-1} \Abf^H \\
            &~~~~ +  \Abf(\Abf^H \Abf)^{-1}\Abf^H \Bbf(\Bbf^H \Bbf - \Bbf^H \Abf(\Abf^H \Abf)^{-1}\Abf^H\Bbf)^{-1}\Bbf^H \\
            &~~~~ - \Bbf (\Bbf^H \Bbf - \Bbf^H \Abf (\Abf^H \Abf)^{-1} \Abf^H \Bbf)^{-1} \Bbf^H \\
            &=  \Ibf -  \Pibf_{\Abf}   -  \Pibf_{\Abf}\Bbf (\Bbf^H\Pibf_{\Abf}^\bot\Bbf)^{-1} ( \Pibf_{\Abf}\Bbf )^H  + \Bbf (\Bbf^H \Pibf_{\Abf}^\bot\Bbf)^{-1} (\Pibf_{\Abf}\Bbf)^H\\
            &~~~~ +  \Pibf_{\Abf}\Bbf(\Bbf^H \Pibf_{\Abf}^\bot\Bbf)^{-1}\Bbf^H - \Bbf (\Bbf^H \Pibf_{\Abf}^\bot\Bbf)^{-1} \Bbf^H
            \end{align*}
            \begin{align*}
            &= \Ibf - \Pibf_{\Abf} - (\Bbf - \Pibf_{\Abf}\Bbf) (\Bbf^H\Pibf_{\Abf}^\bot\Bbf)^{-1}  (\Bbf - \Pibf_{\Abf}\Bbf)^H \hspace{18em}\\
            &= \Pibf_{\Abf}^\bot - \Pibf_{\Abf}^\bot\Bbf (\Bbf^H\Pibf_{\Abf}^\bot\Bbf)^{-1}(\Pibf_{\Abf}^\bot\Bbf)^H \\
            &= \Pibf_{\Abf}^\bot - \Pibf_{\Abf}^\bot\Bbf ((\Pibf_{\Abf}^\bot\Bbf)^H\Pibf_{\Abf}^\bot\Bbf)^{-1}(\Pibf_{\Abf}^\bot\Bbf)^H
 \end{align*}
         where $\Pibf_\Abf =  \Abf (\Abf^H \Abf)^{-1}\Abf^H $, $\Pibf_\Abf^\bot =\Ibf -  \Abf (\Abf^H \Abf)^{-1}\Abf^H $, and the
         block matrix inversion formula is used in Step (a).
         In the last equality, we used $\Pibf_{\Abf}^{\bot H}\Pibf_{\Abf}^\bot=(\Pibf_{\Abf}^\bot)^2 = \Pibf_{\Abf}^\bot$.
          Therefore, we have
        \begin{align*}
            \Pibf_{[\Abf,\Bbf]}^\bot \xbf &=  \Pibf_{\Abf}^\bot \xbf - \Pibf_{\Abf}^\bot\Bbf ((\Pibf_{\Abf}^\bot\Bbf)^H\Pibf_{\Abf}^\bot\Bbf)^{-1}(\Pibf_{\Abf}^\bot\Bbf)^H  \xbf \\
           & =\Pibf_{\Abf}^\bot \xbf - \Pibf_{\Abf}^\bot\Bbf ((\Pibf_{\Abf}^\bot\Bbf)^H\Pibf_{\Abf}^\bot\Bbf)^{-1}(\Pibf_{\Abf}^\bot\Bbf)^H (\Pibf_{\Abf}\xbf + \Pibf_{\Abf}^\bot\xbf) \\
           &\stackrel{(b)}{=}\Pibf_{\Abf}^\bot \xbf - \Pibf_{\Abf}^\bot\Bbf ((\Pibf_{\Abf}^\bot\Bbf)^H\Pibf_{\Abf}^\bot\Bbf)^{-1}(\Pibf_{\Abf}^\bot\Bbf)^H \Pibf_{\Abf}^\bot\xbf\\
           &=(\Ibf- \Pibf_{\Abf}^\bot\Bbf ((\Pibf_{\Abf}^\bot\Bbf)^H\Pibf_{\Abf}^\bot\Bbf)^{-1}(\Pibf_{\Abf}^\bot\Bbf)^H ) \Pibf_{\Abf}^\bot\xbf\\
           &= (\Ibf-\Pibf_{\Pibf_\Abf^\perp \Bbf})\Pibf_\Abf^\perp \xbf,
        \end{align*}
where Step (b) holds because $\Pibf_{\Abf}^\bot\Bbf ((\Pibf_{\Abf}^\bot\Bbf)^H\Pibf_{\Abf}^\bot\Bbf)^{-1}(\Pibf_{\Abf}^\bot\Bbf)^H$ is the projection onto $\Cc(\Pibf_{\Abf}^\bot\Bbf)$ which is a subspace contained in $\Cc^\perp(\Abf)$.
\hfill{$\blacksquare$}

\section*{Appendix C: Proof of Proposition \ref{pro:distribution}}

Consider the effective channel $\gbf_i^{(j)}=\Pibf_{\tilde{\Hbf}_j}^\perp \hbf_i^{(j)}$, where $\hbf_i^{(j)}$ is the channel vector of User $i$ in group $\Gc_j$, and $\tilde{\Hbf}_j$ is defined in  \eqref{eq:tildeHbfj}.  By Lemma
\ref{lemma:sop},  $\gbf_i^{(j)}=\Pibf_{\tilde{\Hbf}_j}^\perp \hbf_i^{(j)}$ can be obtained from sequentially projecting $\hbf_i^{(j)}$ onto the sequential orthogonal spaces associated with the channel vectors of $\Gc_1,\Gc_2,\cdots,\Gc_{j-1},\Gc_{j+1},\cdots,$ $\Gc_{N_g}$, as discussed in Lemma
\ref{lemma:sop} and Example \ref{example:SOP}, i.e.,
$\gbf_i^{(j)}  =    \Pc(\Gc_{N_g}|\Gc_{N_g-1},\cdots,\Gc_{j+1},\Gc_{j-1},\cdots,\Gc_1) \cdots \Pc($ $\Gc_{j+1}|$ $\Gc_{j-1},\cdots,\Gc_1)  \Pc(\Gc_{j-1}|\Gc_{j-2},\cdots,\Gc_1) \cdots\cdot \Pc(\Gc_2|\Gc_1)$
 $\Pc(\Gc_1)\hbf_i^{(j)}$,
 where $\Pc(\Bc|\Ac)$ denotes
 the sequential projection onto the orthogonal space of the projected subspace of $\Bc$ onto $\Cc^\perp(\Ac)$. (Please see Lemma
\ref{lemma:sop} and Example \ref{example:SOP}.) Here, we have $N_g-1$ projection stages. At each projection stage, the proposed user grouping algorithm, Algorithm \ref{al:algoirhtm1}, guarantees that norm reduction is not beyond $(1-\theta^{th})$. The norm of the ZF effective channel can be written as (see Example \ref{example:SOP})
\begin{equation}  \label{eq:PropositionDistributionGbf}
||\gbf_i^{(j)}||^2 = Y ||\hbf_i^{(j)}||^2,
\end{equation}
where the reduction gain random variable $Y$ depends on the channels, but $(1-\theta^{th})^{K-1}=:Y^{th} \le Y \le 1$ since $N_g \le K$.
By \eqref{eq:PropositionDistributionGbf} and Lemma   \ref{eq:lemmaDistribution} below, we have the claim
\eqref{eq:PropDistMain}. \hfill{$\blacksquare$}

\begin{lemma} \label{eq:lemmaDistribution}
       Let $X$ be a random variable  satisfying the condition,
       $\lim_{x \rightarrow 0}\frac{\log \mathrm{Pr}(X \leq x)}{\log x} = d$,      and let $Y$ be a random variable satisfying the condition, $Y^{th} \leq Y \leq 1$, where $Y^{th}$ is some  constant  $\in (0,1]$ and $d$ is some positive constant. Then, the product $Z := XY$ satisfies $\lim_{z \rightarrow 0}\frac{\log \mathrm{Pr}(Z \leq z)}{\log z} = d$.
\end{lemma}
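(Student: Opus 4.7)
My plan is to establish the conclusion by squeezing $\Pr(Z \leq z)$ between two probabilities involving only $X$, then invoking the hypothesis on $X$. The core observation is that since $Y \in [Y^{th},1]$ almost surely and $X \geq 0$ (the random variables of interest are squared norms scaled by non-negative constants), we have the deterministic pathwise bounds $Y^{th}\, X \leq XY \leq X$. These give the event inclusions $\{X \leq z\} \subseteq \{Z \leq z\} \subseteq \{X \leq z / Y^{th}\}$, and therefore
\begin{equation}
    \mathrm{Pr}(X \leq z) \;\leq\; \mathrm{Pr}(Z \leq z) \;\leq\; \mathrm{Pr}(X \leq z/Y^{th}).
\end{equation}
Note that this step requires no assumption on the dependence structure between $X$ and $Y$, which is important because in the application $Y$ depends on the other users' channels.

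Next, for sufficiently small $z > 0$ we have $\log z < 0$, so dividing the above by $\log z$ reverses the inequalities and yields
\begin{equation}
    \frac{\log \mathrm{Pr}(X \leq z/Y^{th})}{\log z} \;\leq\; \frac{\log \mathrm{Pr}(Z \leq z)}{\log z} \;\leq\; \frac{\log \mathrm{Pr}(X \leq z)}{\log z}.
\end{equation}
The right-hand side tends to $d$ as $z \to 0$ directly by the hypothesis on $X$. For the left-hand side, I would rewrite it as the product
\begin{equation}
    \frac{\log \mathrm{Pr}(X \leq z/Y^{th})}{\log(z/Y^{th})} \cdot \frac{\log(z/Y^{th})}{\log z}.
\end{equation}
Since $Y^{th}$ is a fixed positive constant, $z/Y^{th} \to 0$ as $z \to 0$, so the first factor tends to $d$ by the hypothesis applied with argument $z/Y^{th}$. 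The second factor equals $1 - \log Y^{th}/\log z$, which tends to $1$ because $\log z \to -\infty$. Hence the lower and upper bounds both converge to $d$, and the squeeze theorem gives $\lim_{z \to 0} \log \mathrm{Pr}(Z \leq z)/\log z = d$.

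I do not anticipate a serious obstacle. The only subtlety is tracking the sign of $\log z$ when dividing through (and thereby flipping the inequalities), together with the observation that shifting the argument by a constant factor $1/Y^{th}$ is absorbed in the $\log z \to -\infty$ limit. The argument is completely elementary once one writes down the pathwise sandwich, and it makes clear why the specific value of the constant $Y^{th} \in (0,1]$ is irrelevant to the degrees of freedom: only the existence of a strictly positive lower bound on $Y$ matters.
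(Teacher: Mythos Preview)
Your proof is correct and follows essentially the same approach as the paper's own proof: both sandwich $\mathrm{Pr}(Z\le z)$ between $\mathrm{Pr}(X\le z)$ and $\mathrm{Pr}(X\le z/Y^{th})$ using the pathwise bound $Y^{th}X\le XY\le X$, then pass to the limit after rewriting the upper term via the factor $\frac{\log(z/Y^{th})}{\log z}\to 1$. Your version is in fact slightly more careful in that you explicitly track the inequality reversal upon dividing by $\log z<0$ and make the nonnegativity of $X$ explicit, points the paper leaves implicit.
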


{\em Proof of Lemma \ref{eq:lemmaDistribution}:}
  \begin{align}
             &  \mathrm{Pr}(   X \leq z ) ~ \leq ~ \mathrm{Pr}(  Z \leq z) ~ \leq ~ \mathrm{Pr}( Y^{th} X \leq z ) \label{eq:lem3_1}\\
             & \Leftrightarrow ~~ \mathrm{Pr}(   X \leq z ) ~ \leq ~ \mathrm{Pr}(  Z \leq z) ~ \leq ~ \mathrm{Pr}( X \leq  \frac{z}{Y^{th}} ) \nonumber \\
             & \Leftrightarrow ~~  \lim_{z \rightarrow 0}  \frac{\log \mathrm{Pr}(   X \leq z )}{\log z} ~ \leq ~ \lim_{z \rightarrow 0} \frac{ \log \mathrm{Pr}(  Z \leq z)}{\log{z}} ~ \leq ~ \lim_{z \rightarrow 0}  \frac{ \log \mathrm{Pr}( X \leq  \frac{z}{Y^{th}} )}{\log z} \nonumber \\
              & \Leftrightarrow ~~  \lim_{z \rightarrow 0}  \frac{ \log \mathrm{Pr}(   X \leq z )}{\log z} ~ \leq ~ \lim_{z \rightarrow 0} \frac{ \log \mathrm{Pr}(  Z \leq z)}{\log{z}} ~ \leq ~ \lim_{z \rightarrow 0}  \frac{\log \mathrm{Pr}( X \leq  \frac{z}{Y^{th}} )}{\log \frac{z}{Y^{th}} } \cdot \frac{\log \frac{z}{Y^{th}}}{\log z} \nonumber\\
              & \Leftrightarrow ~~ d ~ \leq ~  \lim_{z \rightarrow 0} \frac{ \log \mathrm{Pr}(  Z \leq z)}{\log{z}}  ~\leq ~ d, \nonumber
 \end{align}
where  \eqref{eq:lem3_1} holds because $Y^{th}X \le Z=YX \le X$ due to $Y\in (Y^{th},1)$.       Therefore, the claim follows. \hfill{$\blacksquare$}

\section*{Appendix D: Existence of Power Distribution Factors}

\begin{lemma}  \label{lem:ExistDeltasForTheo1}
There always exists a collection of in-group power distribution factors $(\delta_1^{(j)},\cdots,\delta_\ell^{(j)})$  for $\Gc_j$ with $|\Gc_j|=\ell$ such that $( \frac{\delta_{i}^{(j)}}{2^{R^{th}}-1} -$ $ \sum_{m=1}^{i-1} \delta_{m}^{(j)} )$ in  \eqref{outage_morethan2} is strictly positive for all $i=2,\cdots, \ell$.
\end{lemma}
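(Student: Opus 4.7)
The plan is entirely constructive. Setting $T := 2^{R^{th}} - 1 > 0$, the required inequality $\tfrac{\delta_i^{(j)}}{2^{R^{th}}-1} - \sum_{m=1}^{i-1}\delta_m^{(j)} > 0$ is equivalent to $\delta_i^{(j)} > T \, S_{i-1}$ for each $i = 2, \ldots, \ell$, where $S_{i-1} := \sum_{m=1}^{i-1}\delta_m^{(j)}$. Together with the simplex constraints $\delta_i^{(j)} \geq 0$ and $\sum_{i=1}^{\ell}\delta_i^{(j)} = 1$ defining $\Dc$, this is the full feasibility requirement. My plan is to exhibit one specific tuple satisfying all of these simultaneously, which proves existence.

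The natural candidate is a geometric sequence with common ratio $1+T$. Specifically, I would set $\delta_i^{(j)} = C\,(1+T)^{i-1}$ for $i = 1,\ldots,\ell$, and fix the normalization constant via the geometric-series formula so that $C\cdot \tfrac{(1+T)^\ell - 1}{T} = 1$, i.e.\ $C = T/[(1+T)^\ell - 1] > 0$. This immediately gives a valid element of the simplex $\Dc$ with all components strictly positive. To verify the strict inequality, I would compute $S_{i-1} = C\cdot \tfrac{(1+T)^{i-1}-1}{T}$ by the partial geometric sum, and then
\begin{equation*}
\delta_i^{(j)} - T\,S_{i-1} = C(1+T)^{i-1} - C\bigl[(1+T)^{i-1}-1\bigr] = C > 0,
\end{equation*}
which shows that the required margin is in fact the same positive constant $C$ at every index $i$.

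There is no real obstacle here; the only small design choice is identifying the ratio $1+T$, which is dictated by rewriting the recursion $\delta_i^{(j)} > T\, S_{i-1}$ as $S_i > (1+T)\,S_{i-1}$ and looking for the borderline geometric growth that saturates this with slack. The construction works uniformly over $\ell$, so the same recipe proves the lemma for every group cardinality. (Note that the statement implicitly assumes $R^{th} > 0$, which ensures $T > 0$ and $(1+T)^\ell - 1 > 0$, so $C$ is well-defined.)
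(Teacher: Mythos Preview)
Your proposal is correct and follows essentially the same constructive approach as the paper: both exhibit an explicit power-ratio tuple whose partial sums grow geometrically so that $\delta_i^{(j)} > T\,S_{i-1}$ holds with a fixed positive margin. The only cosmetic difference is that the paper builds in an explicit slack parameter $C>0$ via the recursion $\delta_i^{(j)} = (T+C)\,S_{i-1}$, yielding common ratio $2^{R^{th}}+C$, whereas you take a pure geometric sequence with ratio $1+T = 2^{R^{th}}$ and observe that the slack automatically equals the normalization constant; both verifications are one-line computations.
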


\begin{proof} The condition is equivalent to the following:
\begin{equation}
            \frac{\delta_{i}^{(j)}}{2^{R^{th}}-1} - \sum_{m=1}^{i-1} \delta_{m}^{(j)} > 0 ~~ \Leftrightarrow ~~
           \frac{\delta_{i}^{(j)}}{ \sum_{m=1}^{i-1} \delta_{m}^{(j)}}  > 2^{R^{th}}-1, \quad i = 2,
           \cdots, \ell \label{eq:consdition_nonzero}
\end{equation}
Consider the following recursion
\begin{equation}  \label{eq:appendix_recursion}
\delta_{i}^{(j)}= (2^{R^{th}}-1 + C) (\delta_{1}^{(j)}+\cdots+\delta_{i-1}^{(j)}),
\end{equation}
where $C > 0$ is an arbitrary  positive constant. It is easy to see that any solution to \eqref{eq:appendix_recursion} satisfies \eqref{eq:consdition_nonzero}. Solving the recursion yields
\begin{equation}
\delta_{i}^{(j)}= \delta_{1}^{(j)} (2^{R^{th}}-1 + C) (2^{R^{th}}+ C)^{i-2}.
\end{equation}
With  normalization for $\sum_{i=1}^\ell \delta_{i}^{(j)} = 1$, we have
\begin{equation}  \label{eq:deltaSolution}
            \delta_1^{(j)} = \frac{1}{(2^{R^{th}} + C)^{\ell-1}}, \quad \mbox{and} \quad   \delta_{i}^{(j)} = \frac{2^{R^{th}}-1 + C}{(2^{R^{th}} + C)^{\ell-i+1}}, \quad i = 2,\cdots, \ell,
\end{equation}
and all $\delta_i^{(j)}\ge 0$. Hence, we have a collection of power distribution factors for the condition.
\end{proof}

\section*{Appendix E: Scalable Adaptive User Grouping}

We need the angle between the channel subspaces of any two user groups be larger than a certain threshold. This guarantees that inter-group ZF does not harm the diversity order.
A scalable adaptive user grouping method for this purpose can be devised  based on the semi-orthogonal user selection (SUS) algorithm in \cite{Yoo&Goldsmith}.

\begin{figure}[ht]
\begin{psfrags}
    \psfrag{o}[l]{\small origin} %
    \psfrag{cth}[c]{\small $\theta_{\tau,1}$} %
    \psfrag{h}[l]{\small $\hbf_{g_{\hat{k}}}$} %
    \centerline{ \scalefig{0.5} \epsfbox{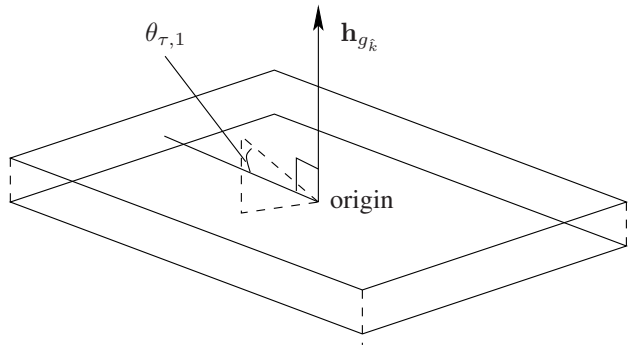} }
    \caption{A hyperslab constructed based on a channel vector (the  dotted line segment from the origin to the plane has length one)}
    \label{fig:sus_hyperslabRev2}
\end{psfrags}
\end{figure}

First, we predetermine two angle threshold values $\theta_{\tau,1} \in (0,\pi/2)$ and $\theta_{\tau,2}\in (0,\pi/2)$ such that $\theta_{\tau,2} <  \frac{\pi}{2}- \theta_{\tau,1}$.
With the channel vectors $\hbf_k, k=1,2\cdots,K$ in ${\mathbb{C}}^N$, like in the SUS algorithm,\footnote{The explanation of the SUS algorithm here is adapted from \cite{Yoo&Goldsmith,LeeSung18COM}.}  we  first
select the user that has the largest channel magnitude. Without loss of generality, we assume the index of the first selected user is  $k_1$. Then, based on the CSI
$\hbf_{k_1}$, we construct a user-selection
hyperslab defined as
\begin{equation}
\Hc_{1} = \left\{ \hbf \in {\mathbb{C}}^N:
\frac{|\hbf_{k_1}^H
\hbf|}{||\hbf_{k_1}||\cdot||\hbf||} \le \gamma  \right\},
\end{equation}
as shown in Fig. \ref{fig:sus_hyperslabRev2}, where the value $\gamma$ is determined to satisfy the following relationship with  the angle $\theta_{\tau,1}$ in Fig.
\ref{fig:sus_hyperslabRev2}:
\[
\gamma = \cos \left(\frac{\pi}{2}-\theta_{\tau,1}\right).
\]
Note that if a vector $\hbf$ is
contained in $\Hc_{1}$, $\hbf$ is semi-orthogonal to
$\hbf_{k_1}$ with the angle between $\hbf_{k_1}$ and $\hbf$ being in $[\frac{\pi}{2}-\theta_{\tau,1}, \frac{\pi}{2}+\theta_{\tau,1}]$.
Then, we select the user whose
channel vector is contained in the hyperslab $\Hc_{1}$ and who
has maximum channel vector magnitude within $\Hc_{1}$. After the
second user is selected, another hyperslab is constructed based on its channel vector. The third user is selected as the user with maximum channel norm within the intersection of the first and second hyperslabs and this guarantees that the third user's channel vector is semi-orthogonal to both first  and second users'
channel vectors with minimum angle separation of $\frac{\pi}{2}-\theta_{\tau,1}$. We continue this procedure until either we cannot find any user in the intersection or we reach the final $K$-th user.   This is basically the SUS algorithm. If the procedure  reaches the $K$-th user, we have $K$ user groups each with one user and the constructed $K$ groups satisfy the required angle separation property. If the procedure stops at the $N^\prime_g$-th step before reaching the $K$-th user, then we construct $N^\prime_g$ candidate user groups. At this point, each candidate user group has one user obtained from the SUS algorithm.

Now consider the remaining $K-N_g^\prime$ users. Each of the remaining $K-N_g^\prime$ users' channels should be close to one of the channels of the $N_g^\prime$ users obtained by the SUS procedure with angle less than $\frac{\pi}{2}-\theta_{\tau,1}$. Otherwise, one separate group had been constructed in the above SUS stage.  Let the remaining $K-N_g^\prime$ users be named Users $u_1,u_2,\cdots,u_{K-N_g^\prime}$.
Now, pick User $u_1$ and compute the angle between the channel of User $u_1$ and the channel of each of the $N^\prime_g$ users obtained by the above SUS stage. Let the angles be $\{\theta_1^{(1)},\theta_2^{(1)},\cdots,\theta_{N_g^\prime}^{(1)}\}$. Assign User $u_1$ to  the group with the smallest angle distance. Furthermore, combine the groups
\begin{equation}  \label{eq:appendEgroupComb}
\{\Gc_j~|~j=1,\cdots,N_g^\prime ~\mbox{and}~ \theta_j^{(1)} < \theta_{\tau,2}\}
\end{equation}
as a single group. That is, if the minimum angle is not guaranteed between groups due to the inclusion of User $u_1$, then combine the groups violating the minimum angle distance condition. Suppose that the assigned group is Group 1 without loss of generality and still all groups satisfy the minimum angle distance condition. Now, pick User $u_2$ and compute the angle between the channel of User $u_2$ and the channel of each of the $N^\prime_g$ groups. Since we have two users in Group 1, we compute two angle values between User $u_2$ and the two users of Group 1 and denote them by $\theta_{11}^{(2)}$ and $\theta_{12}^{(2)}$.  Let the angle between User $u_2$ and the single user in each user of Groups $2,\cdots, N_g^\prime$ be $\theta_2^{(2)},\cdots,\theta_{N_g^\prime}^{(2)}$.
Compute the minimum of $\{\theta_{11}^{(2)},\theta_{12}^{(2)},\theta_2^{(2)},\cdots,\theta_{N_g^\prime}^{(2)}\}$ and assign User $u_2$ to the group that has the user with the minimum angle distance from User $u_2$. Again, combine the groups violating the minimum angle condition due to the inclusion of User $u_2$ by similar computation to \eqref{eq:appendEgroupComb} with the same threshold $\theta_{\tau,2}$.
After that, continue to User $u_3$.    We continue this procedure until User $u_{K-N_g^\prime}$ is assigned. Finally, the procedure will return $N_g(\le N_g^\prime)$ groups.

The above method guarantees the minimum  angle $\theta_{\tau,2}$ between any two groups and is scalable with respect to $K$ since the SUS algorithm is sequential and angle checking of the remaining $K-N_g^\prime$ users with the  $N_g^\prime$ groups requires at most $K^2$ checkings.

{We actually implemented  this grouping idea and the result is shown in Fig.  \ref{fig:grouping_Reviewer1_3}. We did not fine-tune the two parameters. It seems that more tweaking is necessary for stable performance. However, it is observed that the new approach also yields far better outage performance as expected. Indeed, more efficient user grouping algorithms for the desired purpose can be devised.}

\begin{figure}[ht]
\begin{psfrags}
    \centerline{ \scalefig{0.6} \epsfbox{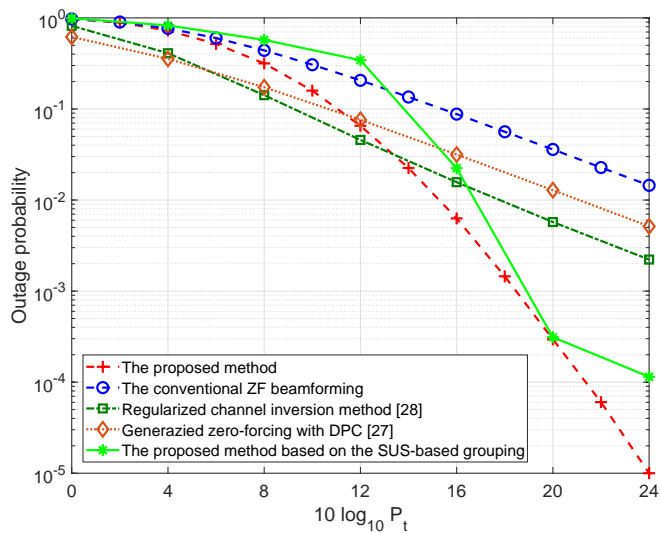} }
    \caption{{Outage probability of several methods:  The SUS-based grouping with $\theta_{\tau,1}=0.25$ and $\theta_{\tau,2}=0.55$}}
    \label{fig:grouping_Reviewer1_3}
\end{psfrags}
\end{figure}



\bibliographystyle{ieeetr}

\bibliography{referenceBibs6}


\end{document}

\begin{figure}[htbp]
\centerline{ \SetLabels
\L(0.25*-0.1) (a) \\
\L(0.76*-0.1) (b) \\
\endSetLabels
\leavevmode
\strut\AffixLabels{
\scalefig{0.45}\epsfbox{figures/rate_distribution_PT10.eps}
\scalefig{0.45}\epsfbox{figures/rate_distribution_PT15.eps}
} } \vspace{0.3cm} \centerline{ \SetLabels
\L(0.25*-0.1) (c) \\
\L(0.76*-0.1) (d) \\
\endSetLabels
\leavevmode
\strut\AffixLabels{
\scalefig{0.45}\epsfbox{figures/rate_distribution_PT20.eps}
\scalefig{0.45}\epsfbox{figures/rate_distribution_PT20.eps}
} } \vspace{0.5cm} \caption{PD vs. $\Delta_1$ ($M=2$,
$\Delta=0.02$, SNR = 10 dB): (a) $A=1$ (b) $A=15$ (c) $A=30$ (d)
$A=100$} \label{fig:PDperiodpatternM2highSNR}
\end{figure}